\def\argmax{\operatornamewithlimits{arg\,max}}
\newcommand{\degBC}{\overset{\textsc{~}}{\succ}}
\newcommand{\indicator}[1]{\mathbbm{1}_{\left[ {#1} \right] }}
\newcommand{\matbold}[1]{\ensuremath{\mathbf{#1}}}
\def\os2{\sigma_S^2}
\def\on2{\sigma_N^2}
\def\oz2{\sigma_Z^2}
\newtheorem{theorem}{Theorem}
\newtheorem{lemma}{Lemma}
\newtheorem{definition}{Definition}
\newtheorem{example}{Example}
\newtheorem{remark}{Remark}
\newtheorem{proposition}{Proposition}
\begin{document}
%

\title{Polar Codes for Broadcast Channels}

%
%
%

\author{Naveen~Goela$^{\dagger}$, Emmanuel~Abbe$^{\sharp}$, and~Michael~Gastpar$^{\dagger}$
\thanks{This work was presented in part at the \emph{International Zurich Seminar on Communications}, Zurich, Switzerland on March 1, 2012, and submitted in part to the \emph{IEEE International Symposium on Information Theory} on January, 2013.}
\thanks{$^{\dagger}$N. Goela and M. C. Gastpar are with the Department of Electrical Engineering and Computer Science, University of California, Berkeley, Berkeley, CA 94720-1770 USA (e-mail: \{ngoela, gastpar\}@eecs.berkeley.edu) and also with the School of Computer and Communication Sciences, Ecole Polytechnique F\'ed\'erale (EPFL), Lausanne, Switzerland (e-mail: \{naveen.goela, michael.gastpar\}@epfl.ch).}
\thanks{$^{\sharp}$E. Abbe was with the School of Computer and Communication Sciences, Ecole Polytechnique F\'ed\'erale (EPFL), Lausanne, Switzerland, and is currently with the School of Engineering and Applied Sciences, Princeton University, Princeton, NJ, 08544 USA (e-mail: \{eabbe@princeton.edu\}).}
}

\maketitle

\begin{abstract}
Polar codes are introduced for discrete memoryless broadcast channels. For $m$-user \emph{deterministic} broadcast channels, polarization is applied to map uniformly random message bits from $m$ independent messages to one codeword while satisfying broadcast constraints. The polarization-based codes achieve rates on the boundary of the private-message capacity region. For two-user \emph{noisy} broadcast channels, polar implementations are presented for two information-theoretic schemes: i) Cover's superposition codes; ii) Marton's codes. Due to the structure of polarization, constraints on the auxiliary and channel-input distributions are identified to ensure proper alignment of polarization indices in the multi-user setting. The codes achieve rates on the capacity boundary of a few classes of broadcast channels (e.g., binary-input stochastically degraded). The complexity of encoding and decoding is $\mathcal{O}(n \log n)$ where $n$ is the block length. In addition, polar code sequences obtain a stretched-exponential decay of $\mathcal{O}(2^{-n^{\beta}})$ of the average block error probability where $0 < \beta < \frac{1}{2}$.

\end{abstract}

\begin{IEEEkeywords}
Polar Codes, Deterministic Broadcast Channel, Cover's Superposition Codes, Marton's Codes.
\end{IEEEkeywords}

%
\IEEEpeerreviewmaketitle

\section{Introduction}

\IEEEPARstart{I}{troduced} by T. M. Cover in 1972, the broadcast problem consists of a single source transmitting $m$ independent private messages to $m$ receivers through a single discrete, memoryless, broadcast channel (DM-BC)~\cite{cover72}. The private-message capacity region is known if the channel structure is \emph{deterministic}, \emph{degraded}, \emph{less-noisy}, or \emph{more-capable}~\cite{elgamalkim2010}. For general classes of DM-BCs, there exist inner bounds such as Marton's inner bound~\cite{marton79} and outer bounds such as the Nair-El-Gamal outer bound~\cite{nairelgamal07}. One difficult aspect of the broadcast problem is to design an encoder which maps $m$ independent messages to a single codeword of symbols which are transmitted simultaneously to all receivers. Several codes relying on \emph{random binning}, \emph{superposition}, and \emph{Marton's strategy} have been analyzed in the literature (see e.g., the overview in~\cite{cover98}).

\subsection{Overview of Contributions}

The present paper focuses on low-complexity codes for broadcast channels based on polarization methods. Polar codes were invented originally by Ar\i kan and were shown to achieve the capacity of binary-input, symmetric, point-to-point channels with $\mathcal{O}(n\log n)$ encoding and decoding complexity where $n$ is the code length~\cite{arikan09}. In this paper, we obtain the following results.
\begin{itemize}
\item Polar codes for deterministic, linear and non-linear, binary-output, $m$-user DM-BCs (cf.~\cite{goelaIZS2012}). The capacity-achieving broadcast codes implement low-complexity \emph{random binning}, and are related to polar codes for other multi-user scenarios such as Slepian-Wolf distributed source coding~\cite{arikan10,arikan2012}, and multiple-access channel (MAC) coding~\cite{abbeIT2012}. For deterministic DM-BCs, the polar transform is applied to channel \emph{output} variables. Polarization is useful for shaping uniformly random message bits from $m$ independent messages into non-equiprobable codeword symbols in the presence of hard broadcast constraints. As discussed in Section~\ref{sec:RelatedWorkDeterministicChannels} and referenced in~\cite{aleksic05,colemanII05,braunstein07}, it is difficult to design low-complexity parity-check (LDPC) codes or belief propagation algorithms for the deterministic DM-BC due to multi-user broadcast constraints.
\item Polar codes for general two-user DM-BCs based on \emph{Cover's superposition coding} strategy. In the multi-user setting, constraints on the auxiliary and channel-input distributions are placed to ensure alignment of polarization indices. The achievable rates lie on the boundary of the capacity region for certain classes of DM-BCs such as binary-input stochastically degraded channels.
\item Polar codes for general two-user DM-BCs based on \emph{Marton's coding} strategy. In the multi-user setting, due to the structure of polarization, constraints on the auxiliary and channel-input distributions are identified to ensure alignment of polarization indices. The achievable rates lie on the boundary of the capacity region for certain classes of DM-BCs such as binary-input semi-deterministic channels.
\item For the above broadcast polar codes, the asymptotic decay of the average error probability under successive cancelation decoding at the broadcast receivers is established to be $\mathcal{O}(2^{-n^{\beta}})$ where $0 < \beta < \frac{1}{2}$. The error probability is analyzed by averaging over polar code ensembles. In addition, properties such as the chain rule of the Kullback-Leibler divergence between discrete probability measures are exploited.  
\end{itemize}

Throughout the paper, for different broadcast coding strategies, a systems-level block diagram of the communication channel and polar transforms is provided.




\subsection{Relation to Prior Work}\label{sec:RelatedPriorWork}

\subsubsection{Deterministic Broadcast Channels}\label{sec:RelatedWorkDeterministicChannels}

The deterministic broadcast channel has received considerable attention in the literature (e.g. due to related extensions such as secure broadcast, broadcasting with side information, and index coding~\cite{yossef2011,rouayheb10}). Several \emph{practical} codes have been designed. For example, the authors of \cite{aleksic05} propose sparse linear coset codes to emulate random binning and survey propagation to enforce broadcast channel constraints. In~\cite{colemanII05}, the authors propose enumerative source coding and Luby-Transform codes for deterministic DM-BCs specialized to interference-management scenarios. Additional research includes reinforced belief propagation with non-linear coding~\cite{braunstein07}. To our knowledge, polarization-based codes provide provable guarantees for achieving rates on the capacity-boundary in the general case.
\vspace{0.01in}
\subsubsection{Polar Codes for Multi-User Settings}

Subsequent to the derivation of channel polarization in~\cite{arikan09} and the refined rate of polarization in~\cite{arikantelatar09}, polarization methods have been extended to analyze multi-user information theory problems. In~\cite{abbeIT2012}, a joint polarization method is proposed for $m$-user MACs with connections to matroid theory. Polar codes were extended for several other multi-user settings: arbitrarily-permuted parallel channels~\cite{hof2010}, degraded relay channels~\cite{karzandIZS2012}, cooperative relaying~\cite{skoglund2012}, and wiretap channels~\cite{skoglundletters10,vardysecrecypolar2011,koyluoglu2012}. In addition, several binary multi-user communication scenarios including the Gelfand-Pinsker problem, and Wyner-Ziv problem were analyzed in~\cite[Chapter~4]{koradaphd09}. Polar codes for lossless and lossy source compression were investigated respectively in~\cite{arikan10} and~\cite{korada10}. In~\cite{arikan10}, source polarization was extended to the Slepian-Wolf problem involving distributed sources. The approach is based on an ``onion-peeling'' encoding of sources, whereas a joint encoding is proposed in~\cite{abbeITA11}. In~\cite{arikan2012}, a unified approach is provided for the Slepian-Wolf problem based on generalized monotone chain rules of entropy. To our knowledge, the design of polarization-based broadcast codes is relatively new.

\vspace{0.01in}
\subsubsection{Binary vs. $q$-ary Polarization}

The broadcast codes constructed in the present paper for DM-BCs are based on polarization for binary random variables. However, in extending to arbitrary alphabet sizes, a large body of prior work exists and has focused on generalized constructions and kernels~\cite{koradasasogluurbanke2010}, and generalized polarization for $q$-ary random variables and $q$-ary channels~\cite{sasoglu2009,moritanaka2010,sahebi2011,barg2013}. The reader is also referred to the monograph in~\cite{erenmonograph12} containing a clear overview of polarization methods.


\subsection{Notation}
An index set $\{1, 2, \ldots, m\}$ is abbreviated as $[m]$. An $m \times n$ matrix array of random variables is comprised of variables $Y_i(j)$ where $i \in [m]$ represents the row and $j \in [n]$ the column. The notation $Y_i^{k:\ell} \triangleq \{Y_i(k), Y_i(k+1), \ldots, Y_i(\ell)\}$ for $k \leq \ell$. When clear by context, the term $Y_i^{n}$ represents $Y_i^{1:n}$. In addition, the notation for the random variable $Y_i(j)$ is used interchangeably with $Y_i^{j}$. The notation $f(n) = \mathcal{O}(g(n))$ means that there exists a constant $\kappa$ such that $f(n) \leq \kappa g(n)$ for sufficiently large $n$. For a set $\mathcal{S}$, $\operatorname{clo}(\mathcal{S})$ represents set closure, and $\operatorname{co}(\mathcal{S})$ the convex hull operation over set $\mathcal{S}$. Let $h_b(x) = -x\log_2(x) - (1-x)\log_2(1-x)$ denote the binary entropy function. Let $a * b \triangleq (1-a)b + a(1-b)$.

\input{./BlackwellChannel}

\section{Model}


\begin{definition}[Discrete, Memoryless Broadcast Channel]\label{def:DMBC} The discrete memoryless broadcast channel (DM-BC) with $m$ broadcast receivers consists of a discrete input alphabet $\mathcal{X}$, discrete output alphabets $\mathcal{Y}_i$ for $i \in [m]$, and a conditional distribution $P_{Y_1, Y_2, \ldots, Y_m|X}(y_1, y_2, \ldots, y_m| x)$ where $x \in \mathcal{X}$ and $y_i \in \mathcal{Y}_i$.
\end{definition}

\vspace{0.05in}
\begin{definition}[Private Messages]\label{def:SourceInformation} For a DM-BC with $m$ broadcast receivers, there exist $m$ private messages $\{W_i\}_{i \in [m]}$ such that each message $W_i$ is composed of $nR_i$ bits and $(W_1, W_2, \ldots, W_m)$ is uniformly distributed over $[2^{nR_1}] \times [2^{nR_2}] \times \cdot\cdot\cdot \times [2^{nR_m}]$.
\end{definition}

\vspace{0.05in}
\begin{definition}[Channel Encoding and Decoding]\label{def:ChannelEncoderDecoder} For the DM-BC with independent messages, let the vector of rates $\vec{R} \triangleq \left[\begin{array}{cccc} R_1 & R_2 & \ldots & R_m \end{array}\right]^{T}$. An $(\vec{R}, n)$ code for the DM-BC consists of one encoder
\begin{align}
x^{n}: [2^{nR_1}] \times [2^{nR_2}] \times \cdot\cdot\cdot \times [2^{nR_m}] \rightarrow \mathcal{X}^{n}, \notag
\end{align}
and $m$ decoders specified by $\hat{W}_i: \mathcal{Y}_{i}^{n} \rightarrow [2^{nR_i}]$ for $i \in [m]$. Based on received observations $\{Y_i(j)\}_{j \in [n]}$, each decoder outputs a decoded message $\hat{W}_i$.
\end{definition}

\vspace{0.05in}
\begin{definition}[Average Probability of Error]\label{def:AvgProbError} The average probability of error $P_e^{(n)}$ for a DM-BC code is defined to be the probability that the decoded message at all receivers is not equal to the transmitted message,
\begin{align} 
P_e^{(n)} & =  \mathbb{P}\left\{ \bigvee_{i=1}^{m} \hat{W}_i\left(\{Y_i(j)\}_{j \in [n]}\right) \neq W_i \right\}. \notag
\end{align}
\end{definition}

\vspace{0.05in}
\begin{definition}[Private-Message Capacity Region]\label{def:AchievableRateRegionAndCapacity} If there exists a sequence of $(\vec{R}, n)$ codes with $P_e^{(n)} \rightarrow 0$, then the rates $\vec{R} \in \mathbb{R}_{+}^{m}$ are achievable. The private-message capacity region is the closure of the set of achievable rates.
\end{definition}

\section{Deterministic Broadcast Channels}
\vspace{0.1in}
\begin{definition}[Deterministic DM-BC]\label{def:DeterministicDMBC} Define $m$ deterministic functions $f_i(x): \mathcal{X} \rightarrow \mathcal{Y}_i$ for $i \in [m]$. The deterministic DM-BC with $m$ receivers is defined by the following conditional distribution
\begin{align}
P_{Y_1, Y_2, \ldots, Y_m|X}(y_1, y_2, \ldots, y_m| x) = \prod_{i=1}^{m} \indicator{y_i = f_i(x)}.
\end{align}
\end{definition}
\vspace{-0.2in}
\subsection{Capacity Region}

\begin{proposition}[Marton~\cite{marton77}, Pinsker~\cite{pinsker78}] The capacity region of the deterministic DM-BC includes those rate-tuples $\vec{R} \in \mathbb{R}_{+}^{m}$ in the region
\begin{align}
\mathfrak{C}_{DET\!-\!BC} & \triangleq \operatorname{co}\Bigl(\operatorname{clo}\Bigl( \bigcup_{X, \{Y_i\}_{i \in [m]}} \mathfrak{R}\bigl(X, \{Y_i\}_{i \in [m]}\bigl) \Bigl) \Bigl), \label{eqn:CapacityExpressionOfDBC}
\end{align}
where the polyhedral region $\mathfrak{R}(X, \{Y_i\}_{i \in [m]})$ is given by
\begin{align}
\mathfrak{R} \triangleq \Bigl\{\vec{R}~\Bigl|~ \sum_{i \in \mathcal{S}} R_i < H(\{Y_i\}_{i \in \mathcal{S}}), ~\forall \mathcal{S} \subseteq [m]\Bigl\}. \label{eqn:PolyhedronRateRegion}
\end{align}
The union in Eqn.~\eqref{eqn:CapacityExpressionOfDBC} is over all random variables $X, Y_1, Y_2, \ldots, Y_m$ with joint distribution induced by $P_{X}(x)$ and $Y_i = f_i(X)$.
\end{proposition}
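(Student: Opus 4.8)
This is the achievability (direct) part of the deterministic--DM-BC capacity theorem, and I would prove it by a random-coding argument that is the broadcast counterpart of Slepian--Wolf coding: instead of \emph{compressing} a correlated source tuple we \emph{synthesize} a jointly typical output tuple $(Y_1^n,\dots,Y_m^n)$ carrying the $m$ messages and then realize it through a single channel input. It suffices to show that for every fixed $P_X$ the open polyhedron $\mathfrak{R}(X,\{Y_i\}_{i\in[m]})$ of \eqref{eqn:PolyhedronRateRegion} is achievable; taking the union over $P_X$, then the closure (permitted since any rate vector with $P_e^{(n)}\to 0$ is achievable), and then the convex hull via a standard time-sharing argument recovers all of $\mathfrak{C}_{DET\!-\!BC}$ in \eqref{eqn:CapacityExpressionOfDBC}.

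\emph{Construction.} Fix $P_X$, which through $Y_i=f_i(X)$ induces the joint law $P_{Y_1\cdots Y_m}$, and pick auxiliary rates $\tilde R_i\ge 0$. For each $i\in[m]$ draw a codebook of $2^{n(R_i+\tilde R_i)}$ sequences $y_i^n(w_i,k_i)$, $w_i\in[2^{nR_i}]$, $k_i\in[2^{n\tilde R_i}]$, i.i.d.\ from $\prod_j P_{Y_i}$, with the index $w_i$ serving as a bin label. Given $(w_1,\dots,w_m)$, the encoder searches the Cartesian product of the $m$ designated bins for a tuple $(y_1^n(w_1,k_1),\dots,y_m^n(w_m,k_m))$ that is strongly jointly $\epsilon$-typical for $P_{Y_1\cdots Y_m}$, and transmits any $x^n$ with $f_i(x^n)=y_i^n(w_i,k_i)$ for all $i$ (and an arbitrary sequence, an encoding error, if no such typical tuple exists). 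A consistent $x^n$ always exists once a typical tuple is found: strong typicality forces every coordinate symbol $(y_1(j),\dots,y_m(j))$ into $\operatorname{supp}(P_{Y_1\cdots Y_m})$, and each point $(y_1,\dots,y_m)$ of that support has a preimage since $P_{Y_1\cdots Y_m}(y_1,\dots,y_m)=\sum_{x:\,f_i(x)=y_i\,\forall i}P_X(x)>0$; note the chosen $x^n$ need not itself be typical, which is harmless because the channel is deterministic and noiseless. Receiver $i$ observes $y_i^n=f_i(x^n)$, which equals the selected codeword $y_i^n(w_i,k_i)$, locates that codeword in codebook $i$, and outputs its bin label $\hat W_i$.

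\emph{Error analysis.} Two bounds suffice. (i) By the multivariate covering lemma --- the $m$-fold extension of the mutual covering lemma, applied to the $m$ marginal codebooks with target joint type $P_{Y_1\cdots Y_m}$ --- the probability that no jointly typical $m$-tuple exists in the designated bins vanishes provided $\sum_{i\in\mathcal S}\tilde R_i>\sum_{i\in\mathcal S}H(Y_i)-H(\{Y_i\}_{i\in\mathcal S})$ for every $\mathcal S\subseteq[m]$ with $|\mathcal S|\ge 2$. (ii) Conditioned on the selected codeword being typical (true w.h.p.), the probability that some codeword from a different bin of codebook $i$ coincides with it is at most $2^{n(R_i+\tilde R_i)}\,2^{-n(H(Y_i)-\delta)}$, which $\to 0$ whenever $R_i+\tilde R_i<H(Y_i)$, so receiver $i$ decodes correctly. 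A union bound over the finitely many error events yields $P_e^{(n)}\to 0$.

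\emph{Rate region.} Imposing $\tilde R_i\ge 0$, $R_i+\tilde R_i<H(Y_i)$ for $i\in[m]$, and the covering inequalities above for $|\mathcal S|\ge 2$, and eliminating $\tilde R_1,\dots,\tilde R_m$ by Fourier--Motzkin elimination, leaves precisely $\sum_{i\in\mathcal S}R_i<H(\{Y_i\}_{i\in\mathcal S})$ for all $\mathcal S\subseteq[m]$ (adding the per-user bounds over $i\in\mathcal S$ and subtracting the covering bound gives this; the singleton cases come from $\tilde R_i\ge 0$), i.e.\ the polyhedron $\mathfrak{R}(X,\{Y_i\}_{i\in[m]})$. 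Combining over $P_X$ and taking closure and convex hull completes the proof. The main obstacle is step (i): correctly stating and invoking the multivariate covering lemma with the full family of subset inequalities, and picking the typicality notion strong enough to guarantee a realizing input $x^n$; the Fourier--Motzkin bookkeeping is routine but must be checked so that exactly the $2^m-1$ subset constraints survive.
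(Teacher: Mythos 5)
The paper does not actually prove this proposition: it is stated as a known result and cited to Marton (1977) and Pinsker (1978), so there is no in-paper proof to compare against. Your proposal supplies the standard self-contained achievability argument via $m$-fold random binning plus a multivariate covering lemma (the approach in, e.g., El Gamal--Kim), and it is correct in substance. The three families of constraints you impose ($\tilde R_i\ge 0$; $R_i+\tilde R_i<H(Y_i)$; covering inequalities $\sum_{i\in\mathcal S}\tilde R_i>\sum_{i\in\mathcal S}H(Y_i)-H(\{Y_i\}_{i\in\mathcal S})$ for $|\mathcal S|\ge 2$) do Fourier--Motzkin-eliminate to exactly $\sum_{i\in\mathcal S}R_i<H(\{Y_i\}_{i\in\mathcal S})$ for all nonempty $\mathcal S$ (singletons from (a)--(b), the rest from (b)+(c)); an admissible $\tilde R$ always exists, e.g.\ $\tilde R_i=H(Y_i)-R_i-\eta$ for $\eta$ small, precisely because the inequalities are strict. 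Your observation that a realizing $x^n$ exists is sound provided the ``robust'' strong-typicality convention is used (no zero-probability symbol tuples occur), which you correctly flag as essential. Your collision analysis at receiver $i$ is also handled correctly by restricting to codewords in other bins, which keeps them independent of the encoder's selection over bin $w_i$.

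The comparison worth making is with what the paper does instead. Rather than prove this proposition, the paper takes it as given and spends Sections~\ref{sec:PolarizationTheorems}--\ref{sec:ProofOfMainTheorem} constructing a \emph{low-complexity, constructive} achievability via polarization (Theorem~\ref{thm:DetDMBC}), but only for the corner points of each polyhedron: rates satisfy $R_{\pi(i)}<H(Y_{\pi(i)}\mid Y_{\pi(1)},\dots,Y_{\pi(i-1)})$ for some ordering $\pi$, achieved by a successive, onion-peeling polar-binning scheme, with the total-variation/KL chain-rule argument replacing the covering lemma. The paper's remark acknowledges that non-vertex points of $\mathfrak R$ then require time-sharing, rate-splitting, or Ar\i kan's monotone chain rules. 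So your route (classical, existential, one-shot for the whole open polyhedron) and the paper's route (constructive, $\mathcal O(n\log n)$, corner-point-only) are genuinely different in both technique and deliverable, and the paper deliberately leans on Marton/Pinsker for the non-constructive half.
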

\vspace{0.05in}
\begin{example}[Blackwell Channel]\label{ex:BlackwellChannel} In Figure~\ref{fig:BlackwellChannel}, the Blackwell channel is depicted with $\mathcal{X} = \{0,1,2\}$ and $\mathcal{Y}_i = \{0,1\}$. For any fixed distribution $P_X(x)$, it is seen that $P_{Y_1Y_2}(y_1, y_2)$ has \emph{zero mass} for the pair $(1,0)$. Let $\alpha \in [\frac{1}{2}, \frac{2}{3}]$. Due to the symmetry of this channel, the capacity region is the union of two regions,
\begin{align}
\{(R_1, R_2):~& R_1 \leq h_b(\alpha), R_2 \leq h_b(\frac{\alpha}{2}), \notag \\
& R_1 + R_2 \leq h_b(\alpha) + \alpha\}, \notag \\
\{(R_1, R_2):~& R_1 \leq h_b(\frac{\alpha}{2}), R_2 \leq h_b(\alpha), \notag \\
& R_1 + R_2 \leq h_b(\alpha) + \alpha\}, \notag
\end{align}
where the first region is achieved with input distribution $P_X(0) = P_X(1) = \frac{\alpha}{2}$, and the second region is achieved with $P_X(1) = P_X(2) = \frac{\alpha}{2}$~\cite[Lec.~9]{elgamalkim2010}. The sum rate is maximized for a uniform input distribution which yields a pentagonal achievable rate region: $R_1 \leq h_b(\frac{1}{3})$, $R_2 \leq h_b(\frac{1}{3})$, $R_1 + R_2 \leq \log_2 3$. Figure~\ref{fig:BlackwellChannel} illustrates the capacity region.
\end{example}


\subsection{Main Result}
\begin{theorem}[Polar Code for Deterministic DM-BC]\label{thm:DetDMBC} Consider an $m$-user deterministic DM-BC with arbitrary discrete input alphabet $\mathcal{X}$, and binary output alphabets $Y_i \in \{0,1\}$. Fix input distribution $P_X(x)$ where $x \in \mathcal{X}$ and constant $0 < \beta < \frac{1}{2}$. Let $\pi: [m] \rightarrow [m]$ be a permutation on the index set of receivers. Let the vector
\begin{align}
\vec{R} & \triangleq \left[\begin{array}{cccc} R_{\pi(1)} & R_{\pi(2)} & \ldots & R_{\pi(m)} \end{array}\right]^{T}. \notag
\end{align}
There exists a sequence of polar broadcast codes over $n$ channel uses which achieves rates $\vec{R}$ where the rate for receiver $\pi(i) \in [m]$ is bounded as
\begin{align}
0 \leq R_{\pi(i)} < H\left(Y_{\pi(i)} | \{Y_{\pi(k)}\}_{k=1:i-1}\right). \notag
\end{align}
The average error probability of this code sequence decays as $P_e^{(n)} = \mathcal{O}(2^{-n^{\beta}})$. The complexity of encoding and decoding is $\mathcal{O}(n \log n)$.
\end{theorem}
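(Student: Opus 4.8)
The plan is to realize the rate point $\vec{R}$ by a chained source-polarization construction applied to the channel \emph{output} vector $(Y_1, \dots, Y_m)$, decoded in the fixed order $\pi(1), \pi(2), \dots, \pi(m)$. First I would recall that since $Y_i = f_i(X)$ and $Y_i \in \{0,1\}$, the joint distribution $P_{Y_1 \cdots Y_m}$ is induced entirely by $P_X$; choose any fixed blocklength $n = 2^t$. Apply Ar\i kan's polar transform $G_n = G_2^{\otimes t}$ (with bit-reversal) to each row $Y_{\pi(i)}^{1:n}$ of i.i.d.\ copies, obtaining $U_{\pi(i)}^{1:n} = Y_{\pi(i)}^{1:n} G_n$. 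The core fact, from source polarization (cf.~\cite{arikan10}), is that conditioned on \emph{all previously decoded rows} $\{Y_{\pi(k)}^{1:n}\}_{k<i}$ and on the past coordinates $U_{\pi(i)}^{1:j-1}$ of the current row, the conditional entropies $H(U_{\pi(i)}^j \mid U_{\pi(i)}^{1:j-1}, \{Y_{\pi(k)}^{1:n}\}_{k<i})$ polarize to $0$ or $1$, and the fraction of ``near-one'' (high-entropy) indices tends to $H(Y_{\pi(i)} \mid \{Y_{\pi(k)}\}_{k<i})$. These high-entropy indices for row $\pi(i)$ form the information set $\mathcal{A}_{\pi(i)}$ in which message bits of $W_{\pi(i)}$ are placed; the low-entropy (``frozen'') indices are filled by a shared deterministic map or common randomness known to all terminals.

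The key subtlety — and the reason the permutation $\pi$ and the \emph{nested conditioning} appear — is that a single codeword $X^{1:n}$ must be produced consistently with \emph{all} the $Y_i$ simultaneously. I would handle this by a sequential (``onion-peeling'') encoder: decode/synthesize $U_{\pi(1)}$ first from its message bits and frozen bits, invert to get $Y_{\pi(1)}^{1:n}$; then synthesize $U_{\pi(2)}$ using the randomized successive-cancellation rule for $P_{U_{\pi(2)}^j \mid U_{\pi(2)}^{1:j-1}, Y_{\pi(1)}^{1:n}}$, invert to get $Y_{\pi(2)}^{1:n}$; continue through all $m$ rows. The chain rule $\sum_i H(Y_{\pi(i)} \mid \{Y_{\pi(k)}\}_{k<i}) = H(Y_1,\dots,Y_m)$ is exactly what makes the total number of message bits approach the dominant face of the polyhedron $\mathfrak{R}$ in \eqref{eqn:PolyhedronRateRegion}, and the permutation selects which vertex/face of that polytope is hit. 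Once $(Y_1^{1:n},\dots,Y_m^{1:n})$ are jointly consistent (they lie in the support of $P_{Y_1\cdots Y_m}^{\otimes n}$ by construction of the randomized rounding), there exists a corresponding $X^{1:n}$; one picks any such preimage $x(j) \in f^{-1}(y_1(j),\dots,y_m(j))$ coordinatewise and transmits it. The functional structure $Y_i = f_i(X)$ guarantees such a preimage exists for every coordinate, so no additional rate is lost.

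For decoding, receiver $\pi(i)$ observes $Y_{\pi(i)}^{1:n}$ exactly (the channel is deterministic and noiseless per-user), computes $U_{\pi(i)}^{1:n} = Y_{\pi(i)}^{1:n} G_n$ directly, and reads off the message bits from $\mathcal{A}_{\pi(i)}$; since the transform is a bijection this is error-free \emph{given} that its own row was synthesized correctly, but the encoder's randomized successive-cancellation synthesis only approximately tracks the true conditional law, so I would bound the total variation between the synthesized distribution and $P^{\otimes n}$ and absorb it into $P_e^{(n)}$. The rate-of-polarization estimate of Ar\i kan--Telatar~\cite{arikantelatar09}, applied to each of the $m$ source-polarization steps (valid because each conditional source-coding instance still satisfies the hypotheses of the refined polarization theorem), gives that the measure of ``unpolarized'' indices in each row is $\mathcal{O}(2^{-n^\beta})$ for any $\beta < 1/2$; a union bound over the $m$ rows (constant in $n$) yields $P_e^{(n)} = \mathcal{O}(2^{-n^\beta})$. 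Encoding is $m$ sequential applications of the $\mathcal{O}(n\log n)$ polar (inverse) transform plus $\mathcal{O}(mn)$ coordinatewise preimage lookups, and each decoder runs one $\mathcal{O}(n\log n)$ transform, giving the claimed complexity.

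The main obstacle I anticipate is the \emph{index-alignment} bookkeeping: proving that the information/frozen index sets defined row-by-row under the nested conditioning are mutually consistent so that the sequential encoder is well-defined, and that the randomized-rounding error at stage $i$ (which feeds into the conditioning of stage $i+1$) does not compound badly. I would control this by the chain rule for Kullback--Leibler divergence between the synthesized joint law and $P_{Y_1\cdots Y_m}^{\otimes n}$ — exactly the tool flagged in the introduction — which telescopes the per-stage TV errors additively rather than multiplicatively, keeping the aggregate bound at $\mathcal{O}(2^{-n^\beta})$.
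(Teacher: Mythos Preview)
Your proposal is essentially the paper's own proof: polar-transform the output rows, fill high-entropy indices with message bits and the remaining indices by the randomized conditional rule in successive (onion-peeling) order, invert to obtain $(y_1^n,\dots,y_m^n)$, pick any coordinatewise preimage for $x^n$, and control the error via the KL chain rule plus Pinsker. Two small clarifications worth making precise when you write it out. First, the only error event lives at the \emph{encoder}: once a valid $x^n$ is transmitted, each receiver recovers its $u_i^n$ exactly by one bijective transform, so decoding is deterministic and error-free; the failure mode is that the synthesized tuple $(y_1(j),\dots,y_m(j))$ falls outside the support of $P_{Y_1\cdots Y_m}$ (so $\bigcap_i f_i^{-1}(y_i(j))=\emptyset$), and the randomized rounding does \emph{not} guarantee support membership by construction---it guarantees it with probability $1-\mathcal{O}(2^{-n^\beta})$ because $P$ assigns zero mass to inconsistent tuples and hence their $Q$-mass is bounded by $\|P-Q\|_{\mathrm{TV}}$. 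Second, the Ar\i kan--Telatar rate is invoked not to say the \emph{fraction} of unpolarized indices is $\mathcal{O}(2^{-n^\beta})$ (it is only $o(1)$), but to ensure that at each message index $1-H(U_i^j\mid\cdots)\le 2\cdot 2^{-n^{\beta'}}$, so the KL sum is at most $2mn\cdot 2^{-n^{\beta'}}$ and Pinsker yields the advertised decay for any $\beta<\beta'<\tfrac12$.
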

\vspace{0.15in}
\begin{remark} To prove the existence of \emph{low-complexity} broadcast codes, a successive randomized protocol is introduced in Section~\ref{sec:EncodingDecoding} which utilizes $o(n)$ bits of randomness at the encoder. A deterministic encoding protocol is also presented.
\end{remark}
\begin{remark} The achievable rates for a fixed input distribution $P_X(x)$ are the vertex points of the polyhedral rate region defined in~\eqref{eqn:PolyhedronRateRegion}. To achieve non-vertex points, the following coding strategies could be applied: time-sharing; rate-splitting for the deterministic DM-BC~\cite{coleman05}; polarization by Ar{\i}kan utilizing generalized chain rules of entropy~\cite{arikan2012}. For certain input distributions $P_X(x)$, as illustrated in Figure~\ref{fig:BlackwellChannel} for the Blackwell channel, a subset of the achievable vertex points lie on the capacity boundary.
\end{remark}
\begin{remark} Polarization of channels and sources extends to $q$-ary alphabets (see e.g.~\cite{sasoglu2009}). Similarly, it is entirely possible to extend Theorem~\ref{thm:DetDMBC} to include DM-BCs with $q$-ary output alphabets.
\end{remark}

\section{Overview of Polarization Method \\ For Deterministic DM-BCs}\label{sec:PolarizationTheorems}
%
%
%
%

For the proof of Theorem~\ref{thm:DetDMBC}, we utilize binary polarization theorems. By contrast to polarization for point-to-point channels, in the case of deterministic DM-BCs, the polar transform is applied to the \emph{output} random variables of the channel.

\subsection{Polar Transform}

Consider an input distribution $P_X(x)$ to the deterministic DM-BC. Over $n$ channel uses, the input random variables to the channel are given by
\begin{align}
X^{1:n} = \{X^{1}, X^{2}, \ldots, X^{n}\}, \notag
\end{align}
where $X^{j} \sim P_X$ are independent and identically distributed ($i.i.d.$) random variables. The channel output variables are given by $Y_i(j) = f_i(X(j))$ where $f_i(\cdot)$ are the deterministic functions to each broadcast receiver. Denote the random matrix of channel output variables by
\vspace{0.1in}
\begin{align}
\matbold{Y} & \triangleq \left[\begin{array}{ccccc} Y_1^{1} & Y_1^{2} & Y_1^{3} & \ldots & Y_1^{n} \\ Y_2^{1} & Y_2^{2} & Y_2^{3} & \ldots & Y_2^{n} \\ \vdots & \vdots & \vdots & \ldots & \vdots \\ Y_m^{1} & Y_m^{2} & Y_m^{3} & \ldots & Y_m^{n} \end{array}\right], \label{eqn:YBAR}
\end{align}
where $\matbold{Y} \in \mathbb{F}_2^{n \times n}$. For $n = 2^{\ell}$ and $\ell \geq 1$, the polar transform is defined as the following invertible linear transformation,
\begin{align}
\matbold{U} & = \matbold{Y}\matbold{G}_n \label{eqn:PolarTransform} \\
\mbox{where}~\matbold{G}_n & \triangleq \left[\begin{array}{cc} 1 & 0 \\ 1 & 1 \end{array}\right]^{\bigotimes \log_2 n} \matbold{B}_n. \notag
\end{align}
The matrix $\matbold{G}_n \in \mathbb{F}_2^{n \times n}$ is formed by multiplying a matrix of successive Kronecker matrix-products (denoted by $\bigotimes$) with a bit-reversal matrix $\matbold{B}_n$ introduced by Ar\i kan~\cite{arikan10}. The polarized random matrix $\matbold{U} \in \mathbb{F}_2^{n \times n}$ is indexed as
\vspace{0.1in}
\begin{align}
\matbold{U} & \triangleq \left[\begin{array}{ccccc} U_1^{1} & U_1^{2} & U_1^{3} & \ldots & U_1^{n} \\ U_2^{1} & U_2^{2} & U_2^{3} & \ldots & U_2^{n} \\ \vdots & \vdots & \vdots & \ldots & \vdots \\ U_m^{1} & U_m^{2} & U_m^{3} & \ldots & U_m^{n} \end{array}\right]. \label{eqn:UBARBAR}
\end{align}

\subsection{Joint Distribution of Polarized Variables}

Consider the channel output distribution $P_{Y_1 Y_2\cdot\cdot\cdot Y_m}$ of the deterministic DM-BC induced by input distribution $P_X(x)$. The $j$-th \emph{column} of the random matrix $\matbold{Y}$ is distributed as $(Y_1^{j}, Y_2^{j}, \cdot\cdot\cdot, Y_m^{j}) \sim P_{Y_1 Y_2\cdot\cdot\cdot Y_m}$. Due to the memoryless property of the channel, the joint distribution of all output variables is
\begin{align}
& P_{Y_1^{n}Y_2^{n}\cdot\cdot\cdot Y_m^{n}}\Bigl(y_1^{n}, y_2^{n},\cdot\cdot\cdot, y_m^{n}\Bigl) = \notag \\
& ~~~~~~~~~~~~~~~ \prod_{j=1}^{n} P_{Y_1Y_2\cdot\cdot\cdot Y_m}\Bigl(y_1^{j}, y_2^{j}, \ldots, y_m^{j}\Bigl). \label{eqn:YJOINT}
\end{align}
The joint distribution of the matrix variables in $\matbold{Y}$ is characterized easily due to the $i.i.d.$ structure. The polarized random matrix $\matbold{U}$ does \emph{not} have an $i.i.d.$ structure. However, one way to define the joint distribution of the variables in $\matbold{U}$ is via the polar transform equation~\eqref{eqn:PolarTransform}. An alternate representation is via a decomposition into conditional distributions as follows\footnote{The abbreviated notation of the form $P(a|b)$ which appears in~\eqref{eqn:UJOINT} indicates $P_{A|B}(a|b)$, i.e. the conditional probability $\mathbb{P}\{A = a| B = b\}$ where $A$ and $B$ are random variables.}.
\begin{align}
& P_{U_1^{n}U_2^{n}\cdot\cdot\cdot U_m^{n}}\Bigl(u_1^{n}, u_2^{n},\cdot\cdot\cdot u_m^{n}\Bigl) = \notag \\
& ~~~~~ \prod_{i=1}^{m}\prod_{j=1}^{n} P\Bigl(u_i(j) \Bigl| u_i^{1:j-1}, \{u_k^{1:n}\}_{k \in [1:i-1]}\Bigl).
\label{eqn:UJOINT}
\end{align}
As derived by Ar\i kan in~\cite{arikan10} and summarized in Section~\ref{sec:EstimatingBhatt}, the conditional probabilities in~\eqref{eqn:UJOINT} and associated likelihoods may be computed using a dynamic programming method which ``divides-and-conquers'' the computations efficiently.


\subsection{Polarization of Conditional Entropies}
\input{./PolarTransformBitIndices_NEW}
\begin{proposition}[Polarization~\cite{arikan10}]\label{thm:SrcPolarization}
Consider the pair of random matrices $(\matbold{Y}, \matbold{U})$ related through the polar transformation in~\eqref{eqn:PolarTransform}. For $i \in [m]$ and any $\epsilon \in (0,1)$, define the set of indices
\begin{align}
& \mathcal{A}_i^{(n)} \triangleq \Bigl\{ j \in [n]: \notag \\
& ~~~~~~~~~~H\Bigl(U_i(j) \Bigl| U_i^{1:j-1}, \{Y_k^{1:n}\}_{k \in [1:i-1]} \Bigl) \geq 1 - \epsilon \Bigl\}. \label{eqn:HighEntropyIndices}
\end{align}
Then in the limit as $n \rightarrow \infty$,
\begin{align}
\frac{1}{n}\Bigl| \mathcal{A}_i^{(n)} \Bigl| \rightarrow H(Y_i | Y_1Y_2\cdot\cdot\cdot Y_{i-1}). \label{eqn:SrcPolarization}
\end{align}
\end{proposition}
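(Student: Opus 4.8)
The statement is Ar\i kan's source polarization theorem \cite{arikan10} applied to $\matbold{Y}$ one row at a time, with all preceding rows carried along as (inert) side information. Fix $i\in[m]$. The plan has three steps: (i) recognize the situation as a single binary i.i.d.\ source with side information; (ii) run the entropy martingale of the $2\times2$ polar kernel to get polarization of the conditional entropies; (iii) pin down the two limiting fractions by conservation of conditional entropy. I carry these out in turn.

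\textbf{Step (i): reduction.} By Eqn.~\eqref{eqn:YJOINT} the columns of $\matbold{Y}$ are i.i.d., hence the pairs $\bigl(Y_i^{j},(Y_1^{j},\ldots,Y_{i-1}^{j})\bigr)$, $j\in[n]$, are i.i.d.\ copies of $(Y_i,Y_1\cdots Y_{i-1})$. Write $Z^{1:n}\triangleq\{Y_k^{1:n}\}_{k\in[1:i-1]}$ for the block of preceding rows. Since $U_k^{1:n}=Y_k^{1:n}\matbold{G}_n$ and $\matbold{G}_n$ is invertible over $\mathbb{F}_2$, conditioning on $\{U_k^{1:n}\}_{k<i}$ is equivalent to conditioning on $Z^{1:n}$; thus the conditional entropies in \eqref{eqn:HighEntropyIndices} are exactly those produced by the polar transform $U_i^{1:n}=Y_i^{1:n}\matbold{G}_n$ of the binary i.i.d.\ source $Y_i^{1:n}$, with the block $Z^{1:n}$ held fixed as side information (for $i=1$ this block is empty and we recover plain source polarization).

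\textbf{Steps (ii)--(iii): polarization and counting.} For $\ell=\log_2 n$ and $s\in\{0,1\}^{\ell}$ let $H^{(s)}_n$ denote $H\bigl(U_i(j)\mid U_i^{1:j-1},Z^{1:n}\bigr)$ for the index $j$ whose bit-reversed binary expansion is $s$. Because $\matbold{G}_n$ is built from Kronecker powers of the $2\times2$ Ar\i kan kernel followed by $\matbold{B}_n$ (see~\eqref{eqn:PolarTransform}), these quantities obey the one-step kernel recursion: the two children $H^{(s0)}_n,H^{(s1)}_n$ average to $H^{(s)}_n$ and straddle it, differing unless $H^{(s)}_n\in\{0,1\}$, with the spread bounded below by a strictly positive function of $\min(H^{(s)}_n,1-H^{(s)}_n)$ --- the side information $Z^{1:n}$ is merely conditioned on throughout and never affects these identities. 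Hence $H_t\triangleq H^{(S_1\cdots S_t)}$ with $S_1,S_2,\ldots$ i.i.d.\ uniform on $\{0,1\}$ is a bounded martingale; it converges a.s.\ and in $L^1$, and the strict-spreading property forces the limit to be $\{0,1\}$-valued, so $\frac1n\bigl|\{j:H(U_i(j)\mid U_i^{1:j-1},Z^{1:n})\in(\epsilon,1-\epsilon)\}\bigr|\to0$ for every $\epsilon\in(0,1)$. Meanwhile the chain rule and invertibility of $\matbold{G}_n$ give, for every $n$,
\begin{align}
\sum_{j=1}^{n}H\bigl(U_i(j)\mid U_i^{1:j-1},Z^{1:n}\bigr) &= H\bigl(U_i^{1:n}\mid Z^{1:n}\bigr) \notag\\
&= H\bigl(Y_i^{1:n}\mid Z^{1:n}\bigr) = n\,H(Y_i\mid Y_1\cdots Y_{i-1}), \notag
\end{align}
the last equality by the i.i.d.\ column structure. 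Combining the two facts, the fraction of indices with conditional entropy $\ge1-\epsilon$ --- which is exactly $\frac1n|\mathcal{A}_i^{(n)}|$ by \eqref{eqn:HighEntropyIndices} --- converges to $H(Y_i\mid Y_1\cdots Y_{i-1})$, which is \eqref{eqn:SrcPolarization}.

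\textbf{Main obstacle.} The only substantive ingredient is the kernel polarization itself: that the martingale limit is $\{0,1\}$-valued (and, for the quantitative $\mathcal{O}(2^{-n^{\beta}})$ statements needed later in the paper, that the unpolarized middle shrinks stretched-exponentially fast). This is precisely the content of \cite{arikan10,arikantelatar09}, which I would invoke rather than reprove. Everything else is the i.i.d.\ reduction of Step (i) and the entropy-conservation bookkeeping; the one point to verify carefully is the equivalence of conditioning on $\{U_k^{1:n}\}_{k<i}$ versus $\{Y_k^{1:n}\}_{k<i}$, which follows at once from invertibility of $\matbold{G}_n$ and is what lets the preceding rows be treated as a frozen side-information block inside the recursion.
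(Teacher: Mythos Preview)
Your proposal is correct and follows the standard Ar\i kan source-polarization argument (i.i.d.\ reduction to a binary source with side information, bounded entropy martingale along the kernel tree, then entropy conservation to identify the limiting fraction). The paper itself does not prove this proposition at all --- it is stated as a cited result from~\cite{arikan10}, with only the subsequent Remark noting the equivalence of conditioning on $\{U_k^{1:n}\}_{k<i}$ versus $\{Y_k^{1:n}\}_{k<i}$ via invertibility of $\matbold{G}_n$ --- so your sketch is in fact \emph{more} than what the paper supplies, and it matches the approach of the cited reference.
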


For sufficiently large $n$, Theorem~\ref{thm:SrcPolarization} establishes that there exist approximately $nH\left(Y_i | Y_1Y_2\cdot\cdot\cdot Y_{i-1} \right)$ indices per row $i \in [m]$ of random matrix $\matbold{U}$ for which the conditional entropy is close to $1$. The total number of indices in $\matbold{U}$ for which the conditional entropy terms polarize to $1$ is approximately $nH(Y_1Y_2\cdot\cdot\cdot Y_m)$. The polarization phenomenon is illustrated in Figure~\ref{fig:PolarTransformBitIndices}.

\begin{remark} Since the polar transform $\matbold{G}_n$ is invertible, $\{U_k^{1:n}\}_{k \in [1:i-1]}$ are in one-to-one correspondence with $\{Y_k^{1:n}\}_{k \in [1:i-1]}$. Therefore the conditional entropies $H\bigl(U_i(j) \bigl| U_i^{1:j-1}, \{U_k^{1:n}\}_{k \in [1:i-1]} \bigl)$ also polarize to $0$ or $1$.
\end{remark}
\vspace{0.1in}

\subsection{Rate of Polarization}

The Bhattacharyya parameter of random variables is closely related to the conditional entropy. The parameter is useful for characterizing the rate of polarization.
\begin{definition}[Bhattacharyya Parameter]\label{def:Bhattacharyya} Let $(T, V) \sim P_{T, V}$ where $T \in \{0,1\}$ and $V \in \mathcal{V}$ where $\mathcal{V}$ is an arbitrary discrete alphabet. The Bhattacharyya parameter $Z(T|V) \in [0,1]$ is defined
\begin{align}
Z(T|V) = 2 \sum_{v \in \mathcal{V}} P_{V}(v) \sqrt{P_{T|V}(0|v)P_{T|V}(1|v)}.
\end{align}
\end{definition}
As shown in Lemma~\ref{lemma:ClosenessOfHandZOne} of Appendix~\ref{sec:PolarCodingLemmas}, $Z(T|V) \rightarrow 1$ implies $H(T|V) \rightarrow 1$, and similarly $Z(T|V) \rightarrow 0$ implies $H(T|V) \rightarrow 0$ for $T$ a binary random variable. Based on the Bhattacharyya parameter, the following theorem specifies sets $\mathcal{M}_i^{(n)} \subset [n]$ that will be called \emph{message} sets.
\vspace{0.05in}
\begin{proposition}[Rate of Polarization]\label{thm:RateOfPolarization} Consider the pair of random matrices $(\matbold{Y}, \matbold{U})$ related through the polar transformation in~\eqref{eqn:PolarTransform}. Fix constants $0 < \beta < \frac{1}{2}$, $\tau > 0$, $i \in [m]$. Let $\delta_n = 2^{-n^{\beta}}$ be the rate of polarization. Define the set
\begin{align}
& \mathcal{M}^{(n)}_{i} \triangleq \Bigl\{j \in [n]: \notag \\
& ~~~~~~~~Z\Bigl(U_i(j) \Bigl| U_i^{1:j-1}, \{Y_k^{1:n}\}_{k \in [1:i-1]}\Bigl) \geq 1 - \delta_n \Bigl\}. \label{eqn:MessageIndices}
\end{align}
Then there exists an $N_o = N_o(\beta, \tau)$ such that
\begin{align}
\frac{1}{n}\Bigl|\mathcal{M}^{(n)}_{i}\Bigl| \geq H(Y_i | Y_1Y_2\cdot\cdot\cdot Y_{i-1}) - \tau, \label{eqn:MessageSetAchievesCapacity}
\end{align}
for all $n > N_o$.
\end{proposition}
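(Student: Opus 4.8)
The plan is to recognize Proposition~\ref{thm:RateOfPolarization} as the rate-of-polarization strengthening of Proposition~\ref{thm:SrcPolarization} (note that $Z \ge 1-\delta_n$ forces $H$ close to $1$ by Lemma~\ref{lemma:ClosenessOfHandZOne}, so $\mathcal{M}^{(n)}_i$ is contained, for $n$ large, in the set of Proposition~\ref{thm:SrcPolarization} and the only content is the \emph{lower} bound $\ge H(Y_i\mid Y_1\cdots Y_{i-1})-\tau$), and to derive it by reducing to source polarization with side information and then invoking the refined Ar\i kan--Telatar estimates~\cite{arikantelatar09} in the source form of~\cite{arikan10}. \textbf{Step 1 (reduction).} Fix $i\in[m]$. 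Since $\matbold{Y}$ has i.i.d.\ columns, the pairs $\bigl(Y_i(j),\,(Y_1(j),\dots,Y_{i-1}(j))\bigr)$, $j\in[n]$, form an i.i.d.\ sequence; treating the length-$(i-1)n$ vector $\{Y_k^{1:n}\}_{k\in[1:i-1]}$ as side information, applying $\matbold{G}_n$ to the single row $Y_i^{1:n}$ is exactly Ar\i kan's source-polarization transform for a memoryless source with side information. In particular, $Z(U_i(j)\mid U_i^{1:j-1},\{Y_k^{1:n}\}_{k\in[1:i-1]})$ is the source-Bhattacharyya parameter of the $j$-th polarized coordinate.

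\textbf{Step 2 (the Bhattacharyya process).} I would view the conditioning block $\bigl(U_i^{1:j-1},\{Y_k^{1:n}\}_{k<i}\bigr)$ as the output of a synthesized binary-input bit-channel with input $U_i(j)$. The recursive two-way combining structure of $\matbold{G}_n$ realizes these bit-channels as the standard Ar\i kan channel-splitting tree; along a uniformly random root-to-leaf path ($\ell=\log_2 n$ levels) the associated conditional entropies form the bounded martingale of Proposition~\ref{thm:SrcPolarization}, converging a.s.\ to a $\{0,1\}$-valued limit, and by Lemma~\ref{lemma:ClosenessOfHandZOne} the Bhattacharyya parameters $Z_\ell$ converge to the same limit, with $\mathbb{P}[\lim_\ell Z_\ell=1]=H(Y_i\mid Y_1Y_2\cdots Y_{i-1})$. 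The $Z_\ell$ obey the single-step Ar\i kan recursions ($Z\mapsto Z^2$ on the entropy-decreasing branch, $1-Z$ contracting super-linearly on the entropy-increasing branch, together with the usual monotonicity).

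\textbf{Step 3 (rate).} Next I would apply the Ar\i kan--Telatar large-deviation argument~\cite{arikantelatar09} to the complementary process on the event $\{\lim_\ell Z_\ell=1\}$: once $1-Z_\ell$ drops below a fixed threshold, the entropy-increasing steps — which occur a fraction $\tfrac12-o(1)$ of the time by the law of large numbers — contract it super-linearly, so for every fixed $\beta<\tfrac12$ one gets $1-Z_\ell\le 2^{-2^{\ell\beta}}=2^{-n^\beta}=\delta_n$ with probability converging to $H(Y_i\mid Y_1Y_2\cdots Y_{i-1})$. Translating this path-measure statement back into a count of coordinates yields $\tfrac1n\bigl|\mathcal{M}^{(n)}_i\bigr|\to H(Y_i\mid Y_1Y_2\cdots Y_{i-1})$, whence one may take $N_o=N_o(\beta,\tau)$ — uniform over the finitely many $i\in[m]$, for the fixed $P_X$ — large enough that $\tfrac1n\bigl|\mathcal{M}^{(n)}_i\bigr|\ge H(Y_i\mid Y_1Y_2\cdots Y_{i-1})-\tau$ for all $n>N_o$.

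\textbf{Main obstacle.} The delicate point is Step~3: one must verify that the entropy-increasing branch genuinely contracts $1-Z$ super-linearly for an \emph{arbitrary} source, not merely an erasure-type one, so that the doubly-exponential rate survives; this is exactly what forces $\beta<\tfrac12$, through the $O(\sqrt{\ell})$ fluctuations in the number of entropy-increasing steps along a random path. The reduction in Step~1 also needs a small check that the genie-aided side information $\{Y_k^{1:n}\}_{k<i}$ is correctly incorporated into the channel-splitting recursion, but this is routine given that the columns of $\matbold{Y}$ are i.i.d.
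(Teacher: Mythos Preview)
Your proposal is correct and follows essentially the same approach as the paper: the paper does not give a detailed proof but simply states that the result is established via the Martingale Convergence Theorem applied to the (super-)martingale of Bhattacharyya parameters, together with the Ar\i kan--Telatar rate-of-polarization estimate~\cite{arikantelatar09}, and your Steps~1--3 spell out exactly this route (reduction to source polarization with side information, martingale convergence on the random path, and the Ar\i kan--Telatar doubly-exponential bound). If anything, you supply more detail than the paper does.
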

The proposition is established via the Martingale Convergence Theorem by defining a super-martingale with respect to the Bhattacharyya parameters~\cite{arikan09}~\cite{arikan10}. The rate of polarization is characterized by Ar\i kan and Telatar in~\cite{arikantelatar09}.
\begin{remark} The message sets $\mathcal{M}_{i}^{(n)}$ are computed ``offline'' only once during a code construction phase. The sets do not depend on the realization of random variables. In the following Section~\ref{sec:EstimatingBhatt}, a Monte Carlo sampling approach for estimating Bhattacharyya parameters is reviewed. Other highly efficient algorithms are known in the literature for finding the message indices (see e.g. Tal and Vardy~\cite{talvardy2011}).
\end{remark}

\subsection{Estimating Bhattacharyya Parameters}\label{sec:EstimatingBhatt}

As shown in Lemma~\ref{lemma:BhattacharyyaMonteCarlo} in Appendix~\ref{sec:PolarCodingLemmas}, one way to estimate the Bhattacharyya parameter $Z(T|V)$ is to sample from the distribution $P_{T,V}(t,v)$ and evaluate $\mathbb{E}_{T,V} \sqrt{\varphi(T,V)}$. The function $\varphi(t,v)$ is defined based on likelihood ratios
\begin{align}
L(v) & \triangleq \frac{P_{T|V}(0|v)}{P_{T|V}(1|v)},  \notag \\
L^{-1}(v) & \triangleq \frac{P_{T|V}(1|v)}{P_{T|V}(0|v)}. \notag
\end{align}

Similarly, to determine the indices in the message sets $\mathcal{M}^{(n)}_{i}$ defined in Proposition~\ref{thm:RateOfPolarization}, the Bhattacharyya parameters $Z\left(U_i(j) \Bigl| U_{i}^{1:j-1}, \{Y_{k}^{1:n}\}_{k \in [i-1]}\right)$ must be estimated efficiently. For $n \geq 2$, define the likelihood ratio
\begin{align}
& L_n^{(i,j)}\left(u_i^{1:j-1}, \{y_k^{1:n}\}_{k \in [1:i-1]}\right) \triangleq \notag \\
& ~~~ \frac{\mathbb{P}\left(U_i(j) = 0 \Bigl| U_i^{1:j-1} = u_i^{1:j-1}, \{Y_k^{1:n} = y_k^{1:n}\}_{k \in [1:i-1]} \right)}{\mathbb{P}\left(U_i(j) = 1 \Bigl| U_i^{1:j-1} = u_i^{1:j-1}, \{Y_k^{1:n} = y_k^{1:n}\}_{k \in [1:i-1]} \right) }. \label{eqn:LikelihoodRatio}
\end{align}
The dynamic programming method given in~\cite{arikan10} allows for a recursive computation of the likelihood ratio. Define the following sub-problems
\begin{align}
\Xi_1 & = L_{\frac{n}{2}}^{(i,j)}\left(u_{i,o}^{1:2j-2} \oplus u_{i,e}^{1:2j-2}, \{y_k^{1:\frac{n}{2}}\}_{k \in [1:i-1]}\right), \notag \\
\Xi_2 & = L_{\frac{n}{2}}^{(i,j)}\left(u_{i,e}^{1:2j-2}, \{y_k^{\frac{n}{2}+1:n}\}_{k \in [1:i-1]}\right), \notag
\end{align}
where the notation $u_{i,o}^{1:2j-2}$ and $u_{i,e}^{1:2j-2}$ represents the odd and even indices respectively of the sequence $u_{i}^{1:2j-2}$. The recursive computation of the likelihoods is characterized by
\begin{align}
L_n^{(i,2j-1)}\left(u_i^{1:2j-2}, \{y_k^{1:n}\}_{k \in [1:i-1]}\right) & = \frac{ \Xi_1 \Xi_2 + 1 } {\Xi_1 + \Xi_2}. \notag \\
L_n^{(i,2j)}\left(u_i^{1:2j-1}, \{y_k^{1:n}\}_{k \in [1:i-1]}\right) & = \left(\Xi_1\right)^{\gamma} \Xi_2, \notag
\end{align}
where $\gamma = 1$ if $u_i(2j-1) = 0$ and $\gamma = -1$ if $u_i(2j-1) = 1$. In the above recursive computations, the base case is for sequences of length $n = 2$.

\section{Proof Of Theorem~\ref{thm:DetDMBC}}\label{sec:ProofOfMainTheorem}

The proof of Theorem~\ref{thm:DetDMBC} is based on binary polarization theorems as discussed in Section~\ref{sec:PolarizationTheorems}. The random coding arguments of C. E. Shannon prove the existence of capacity-achieving codes for point-to-point channels. Furthermore, random binning and joint-typicality arguments suffice to prove the existence of capacity-achieving codes for the deterministic DM-BC. However, it is shown in this section that there exist capacity-achieving \emph{polar codes} for the binary-output deterministic DM-BC.

\subsection{Broadcast Code Based on Polarization}
\label{sec:EncodingDecoding}

The ordering of the receivers' rates in $\vec{R}$ is arbitrary due to symmetry. Therefore, let $\pi(i) = i$ be the identity permutation which denotes the successive order in which the message bits are allocated for each receiver. The encoder must map $m$ independent messages $(W_1, W_2, \ldots, W_m)$ uniformly distributed over $[2^{nR_1}] \times [2^{nR_2}] \times \cdot\cdot\cdot \times [2^{nR_m}]$ to a codeword $x^{n} \in \mathcal{X}^{n}$. To construct a codeword for broadcasting $m$ independent messages, the following binary sequences are formed at the encoder: $u_1^{1:n}, u_2^{1:n}, \ldots, u_m^{1:n}$. To determine a particular bit $u_i(j)$ in the binary sequence $u_i^{1:n}$, if $j \in \mathcal{M}_{i}^{(n)}$, the bit is selected as a uniformly distributed message bit intended for receiver $i \in [m]$. As defined in~\eqref{eqn:MessageIndices} of Proposition~\ref{thm:RateOfPolarization}, the message set $\mathcal{M}_{i}^{(n)}$ represents those indices for bits transmitted to receiver $i$. The remaining \emph{non-message} indices in the binary sequence $u_i^{1:n}$ for each user $i \in [m]$ are computed either according to a deterministic or random mapping.

\vspace{0.1in}
\subsubsection{Deterministic Mapping} Consider a class of deterministic boolean functions indexed by $i \in [m]$ and $j \in [n]$:
\begin{align}
\psi^{(i,j)}: \{0,1\}^{n(\max\{0,i-1\}) + j-1} \rightarrow \{0,1\}. \label{eqn:BooleanMapsDetDMBC}
\end{align}
As an example, consider the deterministic boolean function based on the \emph{maximum a posteriori} polar coding rule.
\begin{align}
& \psi^{(i,j)}_{MAP}\left(u_i^{1:j-1}, \{y_k^{1:n}\}_{k \in [1:i-1]}\right) \triangleq \argmax_{u \in \{0,1\}} \Bigl\{\notag \\
& ~\mathbb{P}\left(U_i(j) = u \Bigl| U_i^{1:j-1} = u_i^{1:j-1}, \{Y_k^{1:n} = y_k^{1:n}\}_{k \in [1:i-1]} \right)\Bigl\}. \label{eqn:MAPRuleForDetBC}
\end{align}

\subsubsection{Random Mapping} Consider a class of random boolean functions indexed by $i \in [m]$ and $j \in [n]$:
\begin{align}
\Psi^{(i,j)}: \{0,1\}^{n(\max\{0,i-1\}) + j-1} \rightarrow \{0,1\}. \label{eqn:RandomMapsDetDMBC}
\end{align}
As an example, consider the random boolean function
\begin{align}
& \Psi^{(i,j)}_{RAND}\left(u_i^{1:j-1}, \{y_k^{1:n}\}_{k \in [1:i-1]}\right) \triangleq \notag \\
& ~~~~~~~~\begin{cases} 0, & \mbox{w.p. }~ \lambda_{0}\left(u_i^{1:j-1}, \{y_k^{1:n}\}_{k \in [1:i-1]}\right), \\ 1, & \mbox{w.p. }~1-\lambda_{0}\left(u_i^{1:j-1}, \{y_k^{1:n}\}_{k \in [1:i-1]}\right), \end{cases} \label{eqn:RandomMapForDetBC}
\end{align}
where
\begin{align}
& \lambda_{0}\left(u_i^{1:j-1}, \{y_k^{1:n}\}_{k \in [1:i-1]}\right) \triangleq \notag \\
& ~~\mathbb{P}\left(U_i(j) = 0 \Bigl| U_i^{1:j-1} = u_i^{1:j-1}, \{Y_k^{1:n} = y_k^{1:n}\}_{k \in [1:i-1]} \right). \notag
\end{align}
The random boolean function $\Psi^{(i,j)}_{RAND}$ may be thought of as a vector of Bernoulli random variables indexed by the input to the function. Each Bernoulli random variable of the vector has a fixed probability of being one or zero that is well-defined.
\vspace{0.1in}
\subsubsection{Mapping From Messages To Codeword}

The binary sequences $u_i^{1:n}$ for $i \in [m]$ are formed \emph{successively} bit by bit. If $j \in \mathcal{M}_{i}^{(n)}$, then the bit $u_i(j)$ is one message bit from the uniformly distributed message $W_i$ intended for user $i$. If $j \notin \mathcal{M}_{i}^{(n)}$, $u_i(j) = \psi^{(i,j)}_{MAP}\bigl(u_i^{1:j-1}, \{y_k^{1:n}\}_{k \in [1:i-1]}\bigl)$ in the case of a deterministic mapping, or $u_i(j) = \Psi^{(i,j)}_{RAND}\bigl(u_i^{1:j-1}, \{y_k^{1:n}\}_{k \in [1:i-1]}\bigl)$ in the case of a random mapping. The encoder then applies the \emph{inverse} polar transform for each sequence: $y_i^{1:n} = u_i^{1:n}\matbold{G}_n^{-1}$. The codeword $x^{n}$ is formed symbol-by-symbol as follows:
\begin{align}
x(j) \in \bigcap_{i=1}^{m} f_i^{-1}\left(  y_i(j)  \right). \notag
\end{align}
If the intersection set is empty, the encoder declares a block error. A block error only occurs at the encoder.
\vspace{0.1in}
\subsubsection{Decoding at Receivers}
If the encoder succeeds in transmitting a codeword $x^{n}$, each receiver obtains the sequence $y_i^{1:n}$ noiselessly and applies the polar transform $\matbold{G}_n$ to recover $u_i^{1:n}$ exactly. Since the message indices $\mathcal{M}_{i}^{(n)}$ are known to each receiver, the message bits in $u_i^{1:n}$ are decoded correctly by receiver $i$.

\subsection{Total Variation Bound}

While the deterministic mapping $\psi^{(i,j)}_{MAP}$ performs well in practice, the average probability of error $P_e^{(n)}$ of the coding scheme is more difficult to analyze in theory. The random mapping $\Psi^{(i,j)}_{RAND}$ at the encoder is more amenable to analysis via the probabilistic method. Towards that goal, consider the following probability measure defined on the space of tuples of binary sequences\footnote{A related proof technique was provided for lossy source coding based on polarization in a different context~\cite{korada10}. In the present paper, a different proof is supplied that utilizes the chain rule for KL-divergence.}.
\begin{align}
& Q\Bigl(u_1^{n}, u_2^{n},\cdot\cdot\cdot, u_m^{n}\Bigl) \triangleq \notag \\
& ~~~~~~~~~ \prod_{i=1}^{m}\prod_{j=1}^{n} Q\Bigl(u_i(j) \Bigl| u_i^{1:j-1}, \{u_k^{1:n}\}_{k \in [1:i-1]}\Bigl).
\label{eqn:QUJOINT}
\end{align}
where the conditional probability measure
\begin{align}
& Q\left(u_i(j) \Bigl| u_i^{1:j-1}, \{u_k^{1:n}\}_{k \in [1:i-1]} \right) \triangleq \notag \\
& ~~~~~~ \begin{cases} \frac{1}{2}, & \mbox{\emph{if}}~j \in \mathcal{M}_i^{(n)}, \\ P\left( u_i(j) \Bigl| u_i^{1:j-1}, \{u_k^{1:n}\}_{k \in [1:i-1]}  \right), & \mbox{\emph{otherwise.}} \end{cases} \notag
\end{align}
The probability measure $Q$ defined in~\eqref{eqn:QUJOINT} is a perturbation of the joint probability measure $P$ defined in~\eqref{eqn:UJOINT} for the random variables $U_i(j)$. The only difference in definition between $P$ and $Q$ is due to those indices in message set $\mathcal{M}_i^{(n)}$. The following lemma provides a bound on the total variation distance between $P$ and $Q$.
\vspace{0.1in}
\begin{lemma}(\emph{Total Variation Bound})\label{lemma:TVBound}
Let probability measures $P$ and $Q$ be defined as in~\eqref{eqn:UJOINT} and~\eqref{eqn:QUJOINT} respectively. Let $0 < \beta < 1$. For sufficiently large $n$, the total variation distance between $P$ and $Q$ is bounded as
\begin{align}
\sum_{\{u_{k}^{1:n}\}_{k \in [m]}} \Bigl| P\bigl( \{u_{k}^{1:n}\}_{k \in [m]}\bigl) - Q\bigl(\{u_{k}^{1:n}\}_{k \in [m]}\bigl) \Bigl| \leq 2^{-n^{\beta}}. \notag
\end{align}
\end{lemma}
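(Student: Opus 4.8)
The plan is to bound the total variation distance $\lVert P - Q \rVert_1$ by Pinsker's inequality in terms of the Kullback--Leibler divergence $D(P \Vert Q)$, and then to expand $D(P\Vert Q)$ using the chain rule, exploiting that $P$ and $Q$ factor in exactly the same way as products of conditional distributions over $i \in [m]$ and $j \in [n]$, and agree on every factor except those indexed by $j \in \mathcal{M}_i^{(n)}$. So the first step is to write
\begin{align}
D(P\Vert Q) = \sum_{i=1}^{m}\sum_{j \in \mathcal{M}_i^{(n)}} \mathbb{E}_{P}\left[ D\Bigl( P(\,\cdot\, | U_i^{1:j-1}, \{U_k^{1:n}\}_{k\in[1:i-1]}) \,\Bigl\Vert\, \tfrac{1}{2} \Bigr) \right], \notag
\end{align}
since for $j \notin \mathcal{M}_i^{(n)}$ the conditional divergence is zero by construction of $Q$, and for $j \in \mathcal{M}_i^{(n)}$ the $Q$-conditional is the uniform Bernoulli $\tfrac12$.

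The second step is to control each summand. For a binary random variable $T$ with conditioning $V$, the divergence $D(P_{T|V=v}\Vert \mathrm{Unif})$ equals $1 - H(T | V=v)$ pointwise, so the expectation is $1 - H(U_i(j) | U_i^{1:j-1}, \{U_k^{1:n}\}_{k\in[1:i-1]})$. Now I invoke the remark following Proposition~\ref{thm:SrcPolarization}: the conditioning on $\{U_k^{1:n}\}_{k\in[1:i-1]}$ is equivalent to conditioning on $\{Y_k^{1:n}\}_{k\in[1:i-1]}$, so for $j \in \mathcal{M}_i^{(n)}$ we have, by definition~\eqref{eqn:MessageIndices}, $Z(U_i(j)\,|\,\cdots) \geq 1 - \delta_n$ with $\delta_n = 2^{-n^{\beta'}}$ for a suitable $\beta'$. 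By Lemma~\ref{lemma:ClosenessOfHandZOne} in the appendix, $Z \to 1$ forces $H \to 1$; more precisely $1 - H(T|V) \leq$ some explicit function of $1 - Z(T|V)$ (e.g. a bound of the form $(1-Z)\log\frac{e}{1-Z}$ or simply $1-H \le -(1-Z)\log(1-Z) + O(1-Z)$). Thus each summand is at most (a polynomial factor times) $\delta_n$, and summing over at most $mn$ indices gives $D(P\Vert Q) \leq mn \cdot c(\delta_n) = \mathcal{O}(n \cdot 2^{-n^{\beta'}}\mathrm{poly}(n))$.

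The third step is bookkeeping on the exponents. Since $mn \,\mathrm{poly}(n)\, 2^{-n^{\beta'}}$ still decays as $2^{-n^{\beta}}$ for any $\beta < \beta'$ and $n$ large, Pinsker's inequality $\lVert P - Q\rVert_1 \le \sqrt{2 \ln 2 \cdot D(P\Vert Q)}$ yields $\lVert P - Q \rVert_1 \le \sqrt{\mathcal{O}(n\,\mathrm{poly}(n)\,2^{-n^{\beta'}})}$, which is $\le 2^{-n^{\beta}}$ for sufficiently large $n$ and any $\beta < \beta'/2$; choosing $\beta' < 1$ close enough to $1$ we can take any $\beta < 1$ as in the statement (the theorem's restriction to $\beta < 1/2$ appears elsewhere for the error-probability claim, but here $\beta<1$ suffices since we are only bounding total variation). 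I would state this as: pick the rate-of-polarization exponent large enough that after the square root and the $n\,\mathrm{poly}(n)$ factor the bound is still $2^{-n^{\beta}}$.

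The main obstacle I anticipate is purely the quantitative link between the Bhattacharyya parameter and the conditional entropy near the extreme value $1$: I need the inequality from Lemma~\ref{lemma:ClosenessOfHandZOne} in a form sharp enough that the resulting bound on $1 - H$ is $o(1/(n\,\mathrm{poly}(n)))$, i.e. genuinely exponentially small and not merely $o(1)$. This is exactly why the definition of $\mathcal{M}_i^{(n)}$ in Proposition~\ref{thm:RateOfPolarization} uses the threshold $1 - \delta_n$ with $\delta_n$ the stretched-exponential rate rather than a fixed $\epsilon$: it guarantees the per-index slack is exponentially small, so the union over $mn$ indices and the square root in Pinsker are both harmless. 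Everything else — the chain-rule decomposition, the identity $D(\mathrm{Bern}(p)\Vert \mathrm{Bern}(\tfrac12)) = 1 - h_b(p)$, and Pinsker — is routine.
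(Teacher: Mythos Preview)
Your proposal is correct and follows essentially the same route as the paper: chain rule for KL divergence, the identity $D(P_{T|V}\Vert \tfrac12)=1-H(T|V)$, the one-to-one correspondence between conditioning on $\{U_k^{1:n}\}$ and $\{Y_k^{1:n}\}$, the $Z\to H$ conversion, and then Pinsker. Your only unnecessary hesitation is the ``main obstacle'': the paper's Lemma~\ref{lemma:ArikanLemma} gives $H\ge Z^2$, so $Z\ge 1-\delta_n$ immediately yields $1-H\le 2\delta_n$ with no logarithmic loss, and the rest is exactly the exponent bookkeeping you describe.
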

\begin{proof} See Section~\ref{sec:AppendixTVBound} of the Appendices. \end{proof}
\vspace{0.05in}

\subsection{Analysis of the Average Probability of Error}

For the $m$-user deterministic DM-BC, an error event occurs at the encoder if a codeword $x^{n}$ is unable to be constructed symbol by symbol according to the broadcast protocol described in Section~\ref{sec:EncodingDecoding}. Define the following set consisting of $m$-tuples of binary sequences,
\begin{align}
\mathcal{T} & \triangleq \Bigl\{ (y_1^{n}, y_2^{n}, \ldots, y_m^{n}): \exists j \in [n], \bigcap_{i=1}^{m} f_i^{-1}\left(  y_i(j) \right) = \emptyset \Bigl\}.
\end{align}
The set $\mathcal{T}$ consists of those $m$-tuples of binary output sequences which are \emph{inconsistent} due to the properties of the deterministic channel. In addition, due to the one-to-one correspondence between sequences $u_i^{1:n}$ and $y_i^{1:n}$, denote by $\mathcal{\tilde{T}}$ the set of $m$-tuples $(u_1^{n}, u_2^{n}, \ldots, u_m^{n})$ that are inconsistent.

For the broadcast protocol, the rate $R_i = \frac{1}{n}\bigl|\mathcal{M}_i^{(n)}\bigl|$ for each receiver. Let the total sum rate for all broadcast receivers be $R_{\Sigma} = \sum_{i \in [m]} R_i$. If the encoder uses a fixed deterministic map $\psi^{(i,j)}$ in the broadcast protocol, the average probability of error is
\begin{align}
& P_e^{(n)}\left[\{\psi^{(i,j)}\}\right] = \frac{1}{2^{nR_{\Sigma}}} \sum_{\{u_{k}^{1:n}\}_{k \in [m]}} \Biggl[  \indicator{(u_1^{n}, u_2^{n}, \ldots, u_m^{n}) \in \mathcal{\tilde{T}} } \notag \\
& ~~ \cdot \prod_{\begin{subarray}{c} i \in [m] \\ j \in [n]: j \notin \mathcal{M}_i^{(n)} \end{subarray}} \indicator{\psi^{(i,j)}\left(u_i^{1:j-1}, \{y_k^{1:n}\}_{k \in [1:i-1]}\right) = u_i(j)} \Biggl]. \label{eqn:ProbOfErrorDetMAP}
\end{align}
In addition, if the random maps $\Psi^{(i,j)}$ are used at the encoder, the average probability of error is a random quantity given by
\begin{align}
& P_e^{(n)}\left[\{\Psi^{(i,j)}\}\right] = \frac{1}{2^{nR_{\Sigma}}} \sum_{\{u_{k}^{1:n}\}_{k \in [m]}} \Biggl[  \indicator{(u_1^{n}, u_2^{n}, \ldots, u_m^{n}) \in \mathcal{\tilde{T}} } \notag \\
& ~~ \cdot \prod_{\begin{subarray}{c} i \in [m] \\ j \in [n]: j \notin \mathcal{M}_i^{(n)} \end{subarray}} \indicator{\Psi^{(i,j)}\left(u_i^{1:j-1}, \{y_k^{1:n}\}_{k \in [1:i-1]}\right) = u_i(j)} \Biggl]. \label{eqn:ProbOfErrorRandomMAP}
\end{align}
Instead of characterizing $P_e^{(n)}$ directly for deterministic maps, the analysis of $P_e^{(n)}[\{\Psi^{(i,j)}\}]$ leads to the following lemma.

\begin{lemma}\label{theorem:ErrorProb} Consider the broadcast protocol of Section~\ref{sec:EncodingDecoding}. Let $R_i = \frac{1}{n}\bigl|\mathcal{M}_i^{(n)}\bigl|$ for $i \in [m]$ be the broadcast rates selected according to the criterion given in~\eqref{eqn:MessageIndices} in Proposition~\ref{thm:RateOfPolarization}. Then for $0 < \beta < 1$ and sufficiently large $n$,
\begin{align}
\mathbb{E}_{\{\Psi^{(i,j)}\}} \Bigl[ P_e^{(n)}[\{\Psi^{(i,j)}\}] \Bigl] < 2^{-n^{\beta}}. \notag
\end{align}
\end{lemma}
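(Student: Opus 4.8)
The plan is to bound the expected probability of error by relating the ``bad'' event---a codeword cannot be constructed---to the total variation distance between $P$ and $Q$. First I would observe that, under the random maps $\Psi^{(i,j)}$, the tuple of sequences $(U_1^n,\dots,U_m^n)$ produced by the encoding protocol is distributed exactly according to $Q$ from~\eqref{eqn:QUJOINT}: on message indices $j\in\mathcal{M}_i^{(n)}$ the bit is a fair coin flip, contributing the factor $\tfrac12$, and on non-message indices the random map $\Psi^{(i,j)}_{RAND}$ samples $U_i(j)$ from the true conditional $P(u_i(j)\,|\,u_i^{1:j-1},\{u_k^{1:n}\}_{k\in[1:i-1]})$. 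Hence, taking the expectation over the random maps in~\eqref{eqn:ProbOfErrorRandomMAP} collapses the inner product of indicators and the normalization $\tfrac{1}{2^{nR_\Sigma}}$ into the $Q$-probability of the inconsistent set, giving
\begin{align}
\mathbb{E}_{\{\Psi^{(i,j)}\}} \Bigl[ P_e^{(n)}[\{\Psi^{(i,j)}\}] \Bigr] = Q\bigl(\mathcal{\tilde{T}}\bigr). \notag
\end{align}

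Next I would argue that $P(\mathcal{\tilde{T}}) = 0$. Under the true measure $P$, the sequences $Y_i^{1:n} = U_i^{1:n}\matbold{G}_n^{-1}$ are the genuine channel outputs induced by an i.i.d.\ input $X^{1:n}\sim P_X$ through the deterministic functions $f_i$; for such an actual channel realization the per-symbol intersection $\bigcap_{i=1}^m f_i^{-1}(y_i(j))$ always contains the true $x(j)$, so $(Y_1^n,\dots,Y_m^n)\notin\mathcal{T}$ almost surely, and by the one-to-one correspondence $(U_1^n,\dots,U_m^n)\notin\mathcal{\tilde{T}}$ almost surely. Therefore
\begin{align}
Q\bigl(\mathcal{\tilde{T}}\bigr) = \bigl| Q(\mathcal{\tilde{T}}) - P(\mathcal{\tilde{T}}) \bigr| \leq \sum_{\{u_k^{1:n}\}_{k\in[m]}} \bigl| Q(\{u_k^{1:n}\}) - P(\{u_k^{1:n}\}) \bigr|, \notag
\end{align}
and invoking the Total Variation Bound of Lemma~\ref{lemma:TVBound} for the given $\beta$ bounds the right-hand side by $2^{-n^\beta}$ for $n$ sufficiently large, completing the proof. (Strictly, a set-indexed difference $|Q(\mathcal{\tilde{T}})-P(\mathcal{\tilde{T}})|$ is at most the $\ell_1$ distance between the measures, which is the quantity bounded in Lemma~\ref{lemma:TVBound}.)

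The only genuinely delicate point is the first step---verifying that the law of the encoder's output under the random mapping is \emph{exactly} $Q$, and in particular that the marginalization in~\eqref{eqn:ProbOfErrorRandomMAP} is valid despite the sequences being built causally (bit $u_i(j)$ depending on earlier bits of row $i$ and all bits of rows $1,\dots,i-1$, which matches the conditioning structure of~\eqref{eqn:UJOINT} and~\eqref{eqn:QUJOINT}). I expect this to be mostly bookkeeping: one checks that the product over $(i,j)$ with $j\notin\mathcal{M}_i^{(n)}$ of the randomized-map transition probabilities, times the uniform $\tfrac12$ factors on message indices, telescopes into the chain-rule factorization of $Q$. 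The main obstacle, then, is not conceptual but notational: keeping the successive-cancellation ordering of indices consistent between the encoding description in Section~\ref{sec:EncodingDecoding}, the definition of $Q$, and the error expression~\eqref{eqn:ProbOfErrorRandomMAP}. Everything else follows from Lemma~\ref{lemma:TVBound}, which is stated for exactly this $\beta$-regime.
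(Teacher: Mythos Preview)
Your proposal is correct and follows essentially the same argument as the paper: identify the expected error as $Q(\mathcal{\tilde T})$, use that $P(\mathcal{\tilde T})=0$ because $P$ is the law of genuine channel outputs, and bound the difference by the total variation distance via Lemma~\ref{lemma:TVBound}. The paper's proof is slightly terser but the three steps and their justifications are identical to yours.
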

\begin{proof}
\begin{align}
& \mathbb{E}_{\{\Psi^{(i,j)}\}} \Bigl[ P_e^{(n)}[\{\Psi^{(i,j)}\}] \Bigl] \notag \\
& = \frac{1}{2^{nR_{\Sigma}}} \sum_{\{u_{k}^{1:n}\}_{k \in [m]}} \Biggl[  \indicator{(u_1^{n}, u_2^{n}, \ldots, u_m^{n}) \in \mathcal{\tilde{T}} } \cdot \notag \\
& \prod_{\begin{subarray}{c} i \in [m] \\ j \in [n]: j \notin \mathcal{M}_i^{(n)} \end{subarray}} \mathbb{P}\left\{ \Psi^{(i,j)}\left(u_i^{1:j-1}, \{y_k^{1:n}\}_{k \in [1:i-1]}\right) = u_i(j) \right\} \Biggl] \notag \\
& = \sum_{\{u_{k}^{1:n}\}_{k \in [m]} \in \tilde{\mathcal{T}}} Q\bigl(\{u_{k}^{1:n}\}_{k \in [m]}\bigl) \label{eqn:TheUseOfQ} \\
& = \sum_{\{u_{k}^{1:n}\}_{k \in [m]} \in \tilde{\mathcal{T}}} \Bigl| P\bigl( \{u_{k}^{1:n}\}_{k \in [m]}\bigl) - Q\bigl(\{u_{k}^{1:n}\}_{k \in [m]}\bigl) \Bigl| \label{eqn:PHasZeroMassOverBadSeq} \\
& \leq 2^{-n^{\beta}}. \label{eqn:UseOfLemma}
\end{align}
Step~\eqref{eqn:TheUseOfQ} follows since the probability measure $Q$ matches the desired calculation exactly. Step~\eqref{eqn:PHasZeroMassOverBadSeq} is due to the fact that the probability measure $P$ has \emph{zero mass} over $m$-tuples of binary sequences that are inconsistent. Step~\eqref{eqn:UseOfLemma} follows directly from Lemma~\ref{lemma:TVBound}. Lastly, since the expectation over random maps $\{\Psi^{(i,j)}\}$ of the average probability of error decays stretched-exponentially, there must exist a set of deterministic maps which exhibit the same behavior.
\end{proof}

\section{Noisy Broadcast Channels \\ Superposition Coding}

Coding for noisy broadcast channels is now considered using polarization methods. By contrast to the deterministic case, a decoding error event occurs at the receivers on account of the randomness due to noise. For the remaining sections, it is assumed that there exist $m = 2$ users in the DM-BC. The private-message capacity region for the DM-BC is unknown even for binary input, binary output two-user channels such as the skew-symmetric DM-BC. However, the private-message capacity region is known for specific classes.

\subsection{Special Classes of Noisy DM-BCs}\label{sec:SpecialClassesDMBCs}

\begin{definition}\label{def:PhysicallDegradedBC} The two-user \emph{physically degraded} DM-BC is a channel $P_{Y_1 Y_2|X}(y_1, y_2 | x)$ for which $X-Y_1-Y_2$ form a Markov chain, i.e. one of the receivers is statistically stronger than the other:
\begin{align}
P_{Y_1 Y_2 | X}(y_1, y_2 | x) & = P_{Y_1|X}(y_1| x) P_{Y_2|Y_1}(y_2|y_1). \label{eqn:PhysicallyDegraded}
\end{align}
\end{definition}

\begin{definition}\label{def:StochasticDegradedBC} A two-user DM-BC $P_{Y_1 Y_2|X}(y_1, y_2 | x)$ is \emph{stochastically degraded} if its conditional marginal distributions are the same as that of a physically degraded DM-BC, i.e., if there exists a distribution $\tilde{P}_{Y_2|Y_1}(y_2|y_1)$ such that
\begin{align}
P_{Y_2|X}(y_2|x) & = \sum_{y_1 \in \mathcal{Y}_1} P_{Y_1|X}(y_1|x) \tilde{P}_{Y_2|Y_1}(y_2|y_1). \label{eqn:StochasticDegraded}
\end{align}
If~\eqref{eqn:StochasticDegraded} holds for two conditional distributions $P_{Y_1|X}(y_1|x)$ and $P_{Y_2|X}(y_2|x)$ defined over the same input, then the property is denoted as follows: $P_{Y_1|X}(y_1|x) \degBC P_{Y_2|X}(y_2|x)$.
\end{definition}

\begin{definition}\label{def:LessNoisyBC} A two-user DM-BC $P_{Y_1 Y_2|X}(y_1, y_2 | x)$ for which $V-X-(Y_1,Y_2)$ forms a Markov chain is said to be \emph{less noisy} if
\begin{align}
\forall P_{VX}(v,x): I(V; Y_1) \geq I(V; Y_2). \label{eqn:LessNoisy}
\end{align}
\end{definition}

\begin{definition}\label{def:MoreCapableBC} A two-user DM-BC $P_{Y_1 Y_2|X}(y_1, y_2 | x)$ is said to be \emph{more capable} if
\begin{align}
\forall P_X(x): I(X; Y_1) \geq I(X; Y_2). \label{eqn:MoreCapable}
\end{align}
\end{definition}

The following lemma relates the properties of the special classes of noisy broadcast channels. A more comprehensive treatment of special classes is given by C. Nair in~\cite{nair10}.
\begin{lemma}\label{lemma:SpecialClassesDMBCsProperties} Consider a two-user DM-BC $P_{Y_1 Y_2|X}(y_1, y_2 | x)$. Let $V-X-(Y_1,Y_2)$ form a Markov chain, $|\mathcal{V}| > 1$, and $P_{V}(v) > 0$. The following implications hold:
\begin{align}
& X-Y_1-Y_2 \notag \\
& ~~~~~~\Rightarrow P_{Y_1|X}(y_1|x) \degBC P_{Y_2|X}(y_2|x) \label{eqn:LemmaFirstI} \\
& ~~~~~~\Leftrightarrow \forall P_{X|V}(x|v): P_{Y_1|V}(y_1|v) \degBC P_{Y_2|V}(y_2|v) \label{eqn:LemmaSecondI} \\
& ~~~~~~\Rightarrow \forall P_{VX}(v,x): I(V; Y_1) \geq I(V; Y_2) \label{eqn:LemmaThirdI} \\
& ~~~~~~\Rightarrow \forall P_X(x): I(X; Y_1) \geq I(X; Y_2). \label{eqn:LemmaFourthI}
\end{align}
The converse statements for~\eqref{eqn:LemmaFirstI},~\eqref{eqn:LemmaThirdI}, and~\eqref{eqn:LemmaFourthI} do \emph{not} hold in general. Figure~\ref{fig:BroadcastChannelClasses} illustrates the different types of broadcast channels as a hierarchy.
\end{lemma}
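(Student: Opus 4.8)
The plan is to peel the hierarchy off one layer at a time, proving the four forward implications by elementary manipulations of stochastic kernels together with the data-processing inequality, and then refuting the three converses by counterexample. \textbf{Implications~\eqref{eqn:LemmaFirstI} and~\eqref{eqn:LemmaSecondI}:} If $X-Y_1-Y_2$, then $P_{Y_1Y_2|X}=P_{Y_1|X}P_{Y_2|Y_1}$, and summing over $y_1$ exhibits the single kernel $\tilde P_{Y_2|Y_1}:=P_{Y_2|Y_1}$ as a witness of $P_{Y_1|X}\degBC P_{Y_2|X}$, giving~\eqref{eqn:LemmaFirstI}. For the forward half of~\eqref{eqn:LemmaSecondI}, suppose a kernel $\tilde P_{Y_2|Y_1}$ degrades $P_{Y_1|X}$ into $P_{Y_2|X}$; for any $P_{X|V}$ the Markov property $V-X-(Y_1,Y_2)$ gives $P_{Y_i|V}(\cdot|v)=\sum_x P_{X|V}(x|v)P_{Y_i|X}(\cdot|x)$, and substituting the degradedness identity and interchanging the two summations shows the \emph{same} input-independent kernel degrades $P_{Y_1|V}$ into $P_{Y_2|V}$. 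Conversely, in the nontrivial case $|\mathcal X|>1$ take $\mathcal V=\mathcal X$, $P_V$ uniform, and $P_{X|V}(x|v)=\indicator{x=v}$; then $P_{Y_i|V=v}=P_{Y_i|X=v}$ for every $v$, so the hypothesis applied to this choice of $V$ is literally $P_{Y_1|X}\degBC P_{Y_2|X}$.

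\textbf{Implication~\eqref{eqn:LemmaThirdI}:} Assume $P_{Y_1|X}\degBC P_{Y_2|X}$ via $\tilde P_{Y_2|Y_1}$ and fix an arbitrary $P_{VX}$ with $V-X-(Y_1,Y_2)$. Construct the auxiliary joint law $\hat P(v,x,y_1,\hat y_2):=P_{VX}(v,x)\,P_{Y_1|X}(y_1|x)\,\tilde P_{Y_2|Y_1}(\hat y_2|y_1)$, under which $V-X-Y_1-\hat Y_2$ is a Markov chain and hence so is $V-Y_1-\hat Y_2$. The degradedness identity yields $\hat P_{\hat Y_2|X}=P_{Y_2|X}$, hence $\hat P_{\hat Y_2|V}=P_{Y_2|V}$, and since mutual information depends only on the $V$-marginal and the conditional we get $I(V;\hat Y_2)=I(V;Y_2)$. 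Applying the data-processing inequality along $V-Y_1-\hat Y_2$ gives $I(V;Y_2)=I(V;\hat Y_2)\le I(V;Y_1)$; as $P_{VX}$ was arbitrary, the channel is less noisy.

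\textbf{Implication~\eqref{eqn:LemmaFourthI} and the converses:} Choosing $V=X$ (with $P_V=P_X$, restricted to the support of $P_X$ if necessary so that $P_V>0$) makes $V-X-(Y_1,Y_2)$ Markov, and the less-noisy inequality collapses to $I(X;Y_1)\ge I(X;Y_2)$ for every $P_X$, i.e.\ the channel is more capable. For the three converses: to separate stochastic from physical degradedness, let $X\in\{0,1\}$ and let $Y_1,Y_2$ be the outputs of two \emph{independent} $\mathrm{BSC}$s with crossover probabilities $p_1<p_2\le\tfrac12$ — the marginals satisfy $P_{Y_1|X}\degBC P_{Y_2|X}$ (cascade a further $\mathrm{BSC}(q)$ with $p_1*q=p_2$), yet $P_{Y_2|Y_1,X}$ still depends on $X$, so $X-Y_1-Y_2$ fails; to separate degradedness from less noisy and less noisy from more capable we invoke the standard channels in the literature (e.g.\ \cite{nair10},\cite{elgamalkim2010}), for which the defining inequalities are checked directly. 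The main obstacle is precisely this last pair of converses: exhibiting and verifying explicit channels that are less noisy but not degraded, and more capable but not less noisy, requires a genuine case analysis of the mutual-information inequalities rather than the routine kernel algebra used above; the alphabet-size bookkeeping needed to legitimately set $\mathcal V=\mathcal X$ in~\eqref{eqn:LemmaSecondI} and $V=X$ in~\eqref{eqn:LemmaFourthI} is a minor further subtlety.
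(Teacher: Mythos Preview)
Your proposal is correct and essentially mirrors the paper's own proof: the same kernel-pushforward computation for~\eqref{eqn:LemmaSecondI}, the same $V=X$ specialization for both the converse of~\eqref{eqn:LemmaSecondI} and for~\eqref{eqn:LemmaFourthI}, and the same data-processing argument via an auxiliary $\hat Y_2$ (the paper's $\tilde Y_1$) for~\eqref{eqn:LemmaThirdI}. The only visible difference is in the counterexamples: you use two independent $\mathrm{BSC}$s to separate physical from stochastic degradedness, whereas the paper dispatches all three non-converses at once by pointing to the $\mathrm{BEC}(\epsilon)/\mathrm{BSC}(p)$ channel of Example~\ref{ex:DMBCWithBECAndBSC}, which for appropriate parameter ranges is less noisy but not degraded, or more capable but not less noisy---so your ``standard channels in the literature'' are in fact already present in the paper.
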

\begin{proof} See Section~\ref{sec:AppendixLemmaSpecialClassesDMBCsProperties} of the Appendices.
\end{proof}
\input{./BroadcastChannelClasses}

\subsection{Cover's Inner Bound} Superposition coding involves one auxiliary random variable $V$ which conveys a ``cloud center'' or a coarse message decoded by both receivers~\cite{cover72}. One of the receivers then decodes an additional ``satellite codeword'' conveyed through $X$ containing a fine-grain message that is superimposed upon the coarse information.
\vspace{0.05in}
\begin{proposition}[Cover's Inner Bound] For any two-user DM-BC, the rates $(R_1, R_2) \in \mathbb{R}_{+}^{2}$ in the region $\mathfrak{R}(X, V, Y_1, Y_2)$ are achievable where
\begin{align}
\mathfrak{R}(X, V, Y_1, Y_2) \triangleq \Bigl\{R_1, R_2~\Bigl|~ R_1 & \leq I(X; Y_1|V), \notag \\
R_2 & \leq I(V; Y_2), \notag \\
R_1 + R_2 & \leq I(X; Y_1) \Bigl\}. \label{eqn:DegradedBCRateRegion}
\end{align}
and where random variables $X, V, Y_1, Y_2$ obey the Markov chain $V-X-(Y_1,Y_2)$.
\end{proposition}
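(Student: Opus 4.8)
The plan is to establish this inner bound by Cover's classical superposition coding argument: random codebook generation together with joint-typicality decoding, where receiver~$2$ decodes only the ``cloud center'' $V$ and receiver~$1$ decodes the full message pair. Fix rates $(R_1,R_2)$ strictly interior to $\mathfrak{R}(X,V,Y_1,Y_2)$, and let $T_\epsilon^{(n)}$ denote the set of $\epsilon$-strongly-typical sequences with respect to the fixed joint law $P_{VXY_1Y_2}=P_VP_{X|V}P_{Y_1Y_2|X}$; the Markov chain $V-X-(Y_1,Y_2)$ holds by this factorization. (Since the statement is Cover's, one could equally just cite \cite{cover72}; what follows is the argument I would write out, and it also fixes notation for the polar implementation developed in the following subsections.)

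First I would generate $2^{nR_2}$ cloud centers $v^n(w_2)$, $w_2\in[2^{nR_2}]$, each drawn i.i.d.\ from $\prod_{t=1}^n P_V$, and then for each $w_2$ generate $2^{nR_1}$ satellite codewords $x^n(w_1,w_2)$, $w_1\in[2^{nR_1}]$, each drawn conditionally i.i.d.\ from $\prod_{t=1}^n P_{X|V}(\cdot\mid v_t(w_2))$; the codebook is revealed to all parties and $(w_1,w_2)$ is sent as $x^n(w_1,w_2)$. Receiver~$2$ declares the unique $\hat w_2$ with $(v^n(\hat w_2),y_2^n)\in T_\epsilon^{(n)}$, and receiver~$1$ declares the unique pair $(\hat w_1,\hat w_2)$ with $(v^n(\hat w_2),x^n(\hat w_1,\hat w_2),y_1^n)\in T_\epsilon^{(n)}$.

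For the error analysis, by symmetry assume $(w_1,w_2)=(1,1)$ was transmitted. At receiver~$2$: the transmitted pair is jointly typical with high probability by the law of large numbers, and the packing lemma makes the probability that some $v^n(w_2')$ with $w_2'\neq 1$ is jointly typical with $Y_2^n$ vanish provided $R_2<I(V;Y_2)$. At receiver~$1$ I would split the error into three events: (i) the transmitted triple is atypical, which vanishes by the LLN; (ii) some $x^n(w_1',1)$ with $w_1'\neq1$ forms a typical triple with $(v^n(1),Y_1^n)$ --- these satellite codewords are conditionally independent of $Y_1^n$ given $v^n(1)$, so the packing lemma gives vanishing probability if $R_1<I(X;Y_1\mid V)$; (iii) some triple with a wrong cloud $w_2'\neq1$ is typical --- here $(v^n(w_2'),x^n(w_1',w_2'))$ is distributed as $\prod_t P_{VX}$ independently of $Y_1^n$, so the packing lemma gives vanishing probability if $R_1+R_2<I(V,X;Y_1)$, which equals $I(X;Y_1)$ because $V-X-Y_1$. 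A union bound over these (at most six) events shows the average error probability over the ensemble tends to zero under the three stated rate inequalities, so a deterministic codebook achieving $(R_1,R_2)$ exists (with expurgation if a maximal-error guarantee is desired).

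No step here is deep: the whole argument is self-contained once the joint-typicality/packing lemma is invoked as a black box. The one place that deserves care is error event (iii) at receiver~$1$ --- one must notice that decoding the wrong cloud index effectively resamples \emph{both} $V^n$ and $X^n$ from $P_{VX}$, so the operative exponent is $I(V,X;Y_1)$ rather than $I(X;Y_1\mid V)$, and only then does the Markov chain collapse it to the sum-rate bound $I(X;Y_1)$. The genuinely new difficulty addressed later in the paper is not in this classical proposition but in its polar realization, where random binning is replaced by the deterministic polar construction and one must additionally arrange that the polarized index sets of the $V$-layer and the $X$-layer are properly aligned; that is deferred to the subsequent subsections.
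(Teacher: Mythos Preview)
Your argument is correct and is precisely the classical superposition-coding proof; the paper itself does not prove this proposition at all but simply states it as a known result attributed to Cover~\cite{cover72} (with pointers to~\cite{elgamalkim2010}), so there is no ``paper's own proof'' to compare against. Your closing remark is also on target: the paper's original content concerning this inner bound is entirely in the subsequent polar-code construction (Theorem~\ref{thm:CoverCode} and Section~\ref{section:ProofOfCoverPolarCode}), not in re-deriving Cover's region.
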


\begin{remark} Cover's inner bound is applicable for \emph{any} broadcast channel. By symmetry, the following rate region is also achievable: $\{R_1, R_2~|~ R_2 \leq I(X; Y_2|V), R_1 \leq I(V; Y_1), R_1 + R_2 \leq I(X; Y_2)\}$ for random variables obeying the Markov chain $V-X-(Y_1, Y_2)$.
\end{remark}

\begin{remark} The inner bound is the capacity region for degraded, less-noisy, and more-capable DM-BCs (i.e. Class $I$ through Class $IV$ as shown in Figure~\ref{fig:BroadcastChannelClasses}). For the degraded and less-noisy special classes, the capacity region is simplified to $\{R_1, R_2~|~ R_1 \leq I(X; Y_1|V), R_2 \leq I(V; Y_2) \}$. To see this, note that $I(V;Y_2) \leq I(V;Y_1)$ which implies $I(V;Y_2) + I(X;Y_1|V) \leq I(V;Y_1) + I(X;Y_1|V) = I(X;Y_1)$. Therefore the sum-rate constraint $R_1 + R_2 \leq I(X;Y_1)$ of the rate-region in~\eqref{eqn:DegradedBCRateRegion} is automatically satisfied.
\end{remark}

\vspace{0.15in}
\begin{example}[Binary Symmetric DM-BC]\label{ex:BinarySymmetricDMBC} The two-user binary symmetric DM-BC consists of a binary symmetric channel with flip probability $p_1$ denoted as \textsc{BSC}($p_1$) and a second channel \textsc{BSC}($p_2$). Assume that $p_1 < p_2 < \frac{1}{2}$ which implies stochastic degradation as defined in~\eqref{eqn:StochasticDegraded}. For $\alpha \in [0, \frac{1}{2}]$, Cover's superposition inner bound is the region,
\begin{align}
\Bigl\{R_1, R_2~\Bigl|~ & R_1 \leq h_b(\alpha * p_1) - h_b(p_1), \notag \\
& R_2 \leq 1 - h_b(\alpha * p_2)\Bigl\} \label{eqn:RateRegionBSCDMBC}
\end{align}
The above inner bound is determined by evaluating~\eqref{eqn:DegradedBCRateRegion} where $V$ is a Bernoulli random variable with $P_{V}(v) = \frac{1}{2}$, $X = V \oplus S$, and $S$ is a Bernoulli random variable with $P_{S}(1) = \alpha$. Figure~\ref{fig:BroadcastChannelTwoBSCs} plots this rectangular inner bound for two different values $\alpha = \frac{1}{10}$ and $\alpha = \frac{1}{4}$. The corner points of this rectangle given in~\eqref{eqn:RateRegionBSCDMBC} lie on the capacity boundary.
\end{example}

\vspace{0.15in}
\begin{example}[DM-BC with \textsc{BEC}($\epsilon$) and \textsc{BSC}($p$)\cite{nair10}]\label{ex:DMBCWithBECAndBSC}
Consider a two-user DM-BC comprised of a \textsc{BSC}($p$) from $X$ to $Y_1$ and a \textsc{BEC}($\epsilon$) from $X$ to $Y_2$. Then it can be shown that the following cases hold:
\begin{itemize}
\item $0 < \epsilon \leq 2p$: $Y_1$ is degraded with respect to $Y_2$.
\item $2p < \epsilon \leq 4p(1-p)$: $Y_2$ is less noisy than $Y_1$ but $Y_1$ is not degraded with respect to $Y_2$.
\item $4p(1-p) < \epsilon \leq h_b(p)$: $Y_2$ is more capable than $Y_1$ but not less noisy.
\item $h_b(p) < \epsilon < 1$: The channel does not belong to the special classes.
\end{itemize}
The capacity region for all channel parameters for this example is achieved using superposition coding.
\end{example}

\subsection{Main Result}

\begin{theorem}[Polarization-Based Superposition Code]\label{thm:CoverCode} Consider any two-user DM-BC with binary input alphabet $\mathcal{X} = \{0,1\}$ and arbitrary output alphabets $\mathcal{Y}_1$, $\mathcal{Y}_2$. There exists a sequence of polar broadcast codes over $n$ channel uses which achieves the following rate region
\begin{align}
\mathfrak{R}(V, X, Y_1, Y_2) \triangleq \Bigl\{R_1, R_2~\Bigl|~ & R_1 \leq I(X; Y_1| V), \notag \\
& R_2 \leq I(V; Y_2) \Bigl\}, \label{eqn:MartonAchievable1}
\end{align}
where random variables $V, X, Y_1, Y_2$ have the following listed properties:
\begin{itemize}
\item $V$ is a binary random variable.
\item $P_{Y_1|V}(y_1|v) \degBC P_{Y_2|V}(y_2|v)$.
\item $V-X-(Y_1,Y_2)$ form a Markov chain.
\end{itemize}
For $0 < \beta < \frac{1}{2}$, the average error probability of this code sequence decays as $P_e^{(n)} = \mathcal{O}(2^{-n^{\beta}})$. The complexity of encoding and decoding is $\mathcal{O}(n \log n)$.
\end{theorem}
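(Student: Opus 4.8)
The plan is to mimic the deterministic-DM-BC construction of Section~\ref{sec:ProofOfMainTheorem}, but now with two \emph{layers} of polarization, corresponding to the auxiliary variable $V$ and the channel input $X$, and with a genuine (noisy) successive-cancelation decoder at each receiver rather than a noiseless inversion. First I would fix the joint distribution $P_{VXY_1Y_2} = P_V P_{X|V} P_{Y_1Y_2|X}$ satisfying the three hypotheses, and take $V^{1:n}$ i.i.d.\ $\sim P_V$, with $X^{1:n}$ generated through $P_{X|V}$, and $Y_1^{1:n}, Y_2^{1:n}$ the induced channel outputs. Apply Ar\i kan's transform $\matbold{G}_n$ to the $V$-row to get $\ub_V = V^{1:n}\matbold{G}_n$, and to the $X$-row (conditioned on $V^{1:n}$) to get $\ub_X = X^{1:n}\matbold{G}_n$. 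I would then invoke source/channel polarization (Propositions~\ref{thm:SrcPolarization} and~\ref{thm:RateOfPolarization}) on four index families: (i) indices where $U_V(j)$ is nearly uniform given $U_V^{1:j-1}$ — these carry $V$'s message bits, of rate $\to H(V)$; (ii) indices where $U_V(j)$ is nearly determined by $U_V^{1:j-1}$ \emph{and} $Y_2^{1:n}$ — receiver~2 can reconstruct $V^{1:n}$ here, achieving $I(V;Y_2)$; (iii) indices where $U_X(j)$ is nearly uniform given $U_X^{1:j-1}$ and $V^{1:n}$ — these carry $X$'s message bits of rate $\to H(X|V)$; (iv) indices where $U_X(j)$ is nearly determined by $U_X^{1:j-1}, V^{1:n}, Y_1^{1:n}$ — receiver~1 reconstructs $X^{1:n}$ here, achieving $I(X|V;Y_1) \cdot$ after it has already decoded $V$. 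The message sets for $W_2$ and $W_1$ are the intersections of (i)$\cap$(ii) and (iii)$\cap$(iv) respectively; by the degradedness hypothesis $P_{Y_1|V}\degBC P_{Y_2|V}$ the ``good for decoding at the weaker receiver'' indices for $V$ are a subset of those good at the stronger one, which is precisely the \emph{alignment} condition alluded to in the theorem statement and guarantees $|\mathcal{M}_{W_2}^{(n)}|/n \to I(V;Y_2)$ and $|\mathcal{M}_{W_1}^{(n)}|/n \to I(X;Y_1|V)$.

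Next I would specify encoding exactly as in Section~\ref{sec:EncodingDecoding}: on message indices put uniform message bits; on the remaining indices of $\ub_V$ and $\ub_X$, use the randomized Bernoulli map $\Psi_{RAND}$ drawn from the true conditionals $P(u_V(j)\mid u_V^{1:j-1})$ and $P(u_X(j)\mid u_X^{1:j-1}, v^{1:n})$, with the frozen (deterministic-given-past) values placed in a shared codebook known to the relevant receivers. Invert to obtain $v^{1:n}$ and $x^{1:n}$; since the channel is noisy, no consistency check is needed and no encoder error occurs. Decoding: receiver~2 runs successive cancelation on $\ub_V$ using $Y_2^{1:n}$, recovering $\hat W_2$; receiver~1 first runs successive cancelation on $\ub_V$ using $Y_1^{1:n}$ (possible because the $V$-channel to $Y_1$ is a degraded-better version, so the same frozen set works), reconstructs $x^{1:n}$ via $P_{X|V}$-polarization and successive cancelation on $\ub_X$ using $(V^{1:n}, Y_1^{1:n})$, and outputs $\hat W_1$. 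I would analyze error probability by the now-standard two-step argument: (a) a total-variation bound — exactly Lemma~\ref{lemma:TVBound}, applied separately to the $V$-layer and the $X$-layer — showing the randomized-encoding output law is $2^{-n^\beta}$-close to the true law; (b) the SC-decoder block-error probability under the true law is at most the sum of Bhattacharyya parameters over the decoding index sets, hence $O(n\,2^{-n^\beta}) = O(2^{-n^{\beta'}})$ by Proposition~\ref{thm:RateOfPolarization} and Lemma~\ref{lemma:ClosenessOfHandZOne}. Combining (a) and (b) and then derandomizing (a good fixed choice of the $\Psi$'s exists, as in Lemma~\ref{theorem:ErrorProb}) gives $P_e^{(n)} = \mathcal{O}(2^{-n^\beta})$ for any $0<\beta<\tfrac12$; the $\mathcal{O}(n\log n)$ complexity is inherited from the recursive likelihood computation of Section~\ref{sec:EstimatingBhatt}.

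The main obstacle I anticipate is the \emph{index-alignment} bookkeeping across the two polarization layers: one must verify that the set of indices where receiver~1 can reliably recover $V^{1:n}$ from $Y_1^{1:n}$ indeed contains the set where receiver~2 can from $Y_2^{1:n}$, and likewise that the frozen positions chosen at the encoder are simultaneously known and usable by \emph{both} receivers' SC decoders for the $V$-layer. This is where the hypothesis $P_{Y_1|V}(y_1|v)\degBC P_{Y_2|V}(y_2|v)$ is essential — it is a statement about the marginal $V$-to-$Y_i$ channels, so by the monotonicity of polarized Bhattacharyya parameters under degradation (a channel-polarization fact for binary-input channels) the high-$Z$ index set for $(V\to Y_2)$ sits inside that for $(V\to Y_1)$, and a common choice of message/frozen sets works. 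A secondary technical point is that the $X$-layer conditionals $P(u_X(j)\mid u_X^{1:j-1}, v^{1:n})$ are not i.i.d.\ across $j$ in an obvious way, but the source-polarization framework of Proposition~\ref{thm:SrcPolarization} applied to the pair-process $(V,X)$ handles this; the chain rule gives $H(X^{1:n}\mid V^{1:n}) = nH(X\mid V)$, and the decoding-reliability indices at receiver~1 polarize to $I(X;Y_1\mid V)$ because receiver~1 has $V^{1:n}$ available as genuine side information. Once these alignment facts are in place, the error analysis is a routine combination of Lemma~\ref{lemma:TVBound}, Proposition~\ref{thm:RateOfPolarization}, and a union bound over the (at most two) SC-decoding stages.
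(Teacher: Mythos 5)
Your proposal follows essentially the same route as the paper's proof: the same polar transform applied to the two rows $(X^{1:n}, V^{1:n})$, the same four polarization sets ($\mathcal{H}^{(n)}_V$, $\mathcal{L}^{(n)}_{V|Y_2}$, $\mathcal{H}^{(n)}_{X|V}$, $\mathcal{L}^{(n)}_{X|V Y_1}$), the same message sets as intersections, the same use of degradedness $P_{Y_1|V}\degBC P_{Y_2|V}$ to force $\mathcal{M}^{(n)}_{2}\subseteq\mathcal{M}^{(n)}_{1v}$ (Lemma~\ref{lemma:NestedSetsSuperpostionCoding}), the same randomized fill-in plus shared maps, the same total-variation argument via the KL chain rule (Lemma~\ref{lemma:TVBoundSuperposition}), the same Bhattacharyya sum plus derandomization (Lemma~\ref{theorem:ErrorProbSuperposition}). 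One small inversion to fix in your third paragraph: under $P_{Y_1|V}\degBC P_{Y_2|V}$ the degradation monotonicity gives $Z(U_2(j)\mid U_2^{1:j-1},Y_1^n)\leq Z(U_2(j)\mid U_2^{1:j-1},Y_2^n)$, so the high-$Z$ set for $(V\to Y_1)$ sits inside that for $(V\to Y_2)$ — not the other way around as you wrote — which is exactly what makes the low-$Z$ (decodable) set at $Y_2$ contained in the one at $Y_1$, as you correctly stated in the first paragraph.
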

\vspace{0.05in}
\input{./BroadcastChannelTwoBSCs}
\begin{remark} The requirement that auxiliary $V$ is a binary random variable is due to the use of binary polarization theorems in the proof. Indeed, the auxiliary $V$ may need to have a larger alphabet in the case of broadcast channels. An extension to $q$-ary random variables is entirely possible if $q$-ary polarization theorems are utilized.
\end{remark}
\begin{remark}
The requirement that $V-X-(Y_1,Y_2)$ holds is standard for superposition coding over noisy channels. However, the listed property $P_{Y_1|V}(y_1|v) \degBC P_{Y_2|V}(y_2|v)$ is due to the structure of polarization and is used in the proof to guarantee that polarization indices are \emph{aligned}. If both receivers are able to decode the coarse message carried by the auxiliary random variable $V$, the polarization indices for the coarse message must be nested for the two receivers' channels.
\end{remark}
\vspace{0.1in}
\begin{figure*}[t]
\begin{center}


\scalebox{0.84} 
{
\begin{pspicture}(0,-2.6)(16.915,1.6)

\rput(0, -0.62) {

\usefont{T1}{ptm}{m}{n}
\rput(8.297969,0.31){$X^n$}
\psframe[linewidth=0.04,dimen=outer](12.3,1.3)(9.1,-1.3)
\usefont{T1}{ptm}{m}{n}
\rput(10.649531,0.01){$P_{Y_1 Y_2 | X}(y_1, y_2 | x)$}
\psline[linewidth=0.03cm](3.6,-0.1)(5.6,-0.1)
\psline[linewidth=0.03cm,arrowsize=0.05291667cm 2.0,arrowlength=1.4,arrowinset=0.4]{->}(6.9,0.0)(9.1,0.0)
\psline[linewidth=0.03cm,arrowsize=0.05291667cm 2.0,arrowlength=1.4,arrowinset=0.4]{->}(12.3,1.0)(13.9,1.0)
\psline[linewidth=0.03cm,arrowsize=0.05291667cm 2.0,arrowlength=1.4,arrowinset=0.4]{->}(12.3,-1.1)(13.9,-1.1)
\psframe[linewidth=0.04,dimen=outer](15.1,-0.7)(13.9,-1.4)
\usefont{T1}{ptm}{m}{n}
\rput(14.509375,-1.09){$\mathcal{D}_2$}
\usefont{T1}{ptm}{m}{n}
\rput(12.9975,1.31){$Y_1^n$}
\usefont{T1}{ptm}{m}{n}
\rput(13.076875,-0.79){$Y_2^{n}$}
\usefont{T1}{ptm}{m}{n}
\rput(0.52875,1.11){$W_1$}
\usefont{T1}{ptm}{m}{n}
\rput(16.236563,1.31){$\hat{W}_1$}
\psline[linewidth=0.03cm,arrowsize=0.05291667cm 2.0,arrowlength=1.4,arrowinset=0.4]{->}(15.1,-1.1)(16.9,-1.1)
\usefont{T1}{ptm}{m}{n}
\rput(16.236563,-0.79){$\hat{W}_2$}
\usefont{T1}{ptm}{m}{n}
\rput(4.2028127,-0.89){$\matbold{G}_n$}
\psline[linewidth=0.03cm,arrowsize=0.05291667cm 2.0,arrowlength=1.4,arrowinset=0.4]{->}(0.0,-0.9)(1.4,-0.9)
\usefont{T1}{ptm}{m}{n}
\rput(0.52875,-0.59){$W_2$}
\psline[linewidth=0.03cm](4.8,-0.9)(5.6,-0.9)
\psline[linewidth=0.03cm](5.6,-0.9)(5.6,-0.1)
\usefont{T1}{ptm}{m}{n}
\rput(5.1876564,-0.59){$V^n$}
\psframe[linewidth=0.04,linestyle=solid,framearc=0.2,dimen=outer](7.5,1.6)(1.1,-1.6)
\psframe[linewidth=0.04,dimen=outer](4.8,-0.6)(3.6,-1.2)
\psframe[linewidth=0.04,dimen=outer](15.1,1.4)(13.9,0.7)
\usefont{T1}{ptm}{m}{n}
\rput(14.4948435,1.01){$\mathcal{D}_1$}
\psline[linewidth=0.03cm,arrowsize=0.05291667cm 2.0,arrowlength=1.4,arrowinset=0.4]{->}(15.1,1.0)(16.9,1.0)
\psframe[linewidth=0.04,linestyle=solid,framearc=0.2,dimen=outer](15.4,1.6)(13.6,0.5)
\psframe[linewidth=0.04,linestyle=solid,framearc=0.2,dimen=outer](15.4,-0.5)(13.6,-1.6)
\psframe[linewidth=0.04,dimen=outer](2.6,-0.6)(1.4,-1.2)
\psline[linewidth=0.03cm,arrowsize=0.05291667cm 2.0,arrowlength=1.4,arrowinset=0.4]{->}(2.6,-0.9)(3.6,-0.9)
\usefont{T1}{ptm}{m}{n}
\rput(3.0782812,-0.59){$U_2^n$}
\usefont{T1}{ptm}{m}{n}
\rput(1.98875,-0.89){$\mathcal{E}_2$}
\psframe[linewidth=0.04,dimen=outer](4.2,1.1)(3.0,0.5)
\usefont{T1}{ptm}{m}{n}
\rput(3.5742188,0.81){$\mathcal{E}_1$}
\psframe[linewidth=0.04,dimen=outer](6.4,1.1)(5.2,0.5)
\usefont{T1}{ptm}{m}{n}
\rput(5.8028126,0.81){$\matbold{G}_n$}
\psline[linewidth=0.03cm](6.9,0.0)(6.9,0.8)
\psline[linewidth=0.03cm](6.4,0.8)(6.9,0.8)
\usefont{T1}{ptm}{m}{n}
\rput(4.6782813,1.11){$U_1^n$}
\psline[linewidth=0.03cm,arrowsize=0.05291667cm 2.0,arrowlength=1.4,arrowinset=0.4]{->}(4.2,0.8)(5.2,0.8)
\psline[linewidth=0.03cm,arrowsize=0.05291667cm 2.0,arrowlength=1.4,arrowinset=0.4]{->}(0.0,0.8)(3.0,0.8)
\psline[linewidth=0.03cm,arrowsize=0.05291667cm 2.0,arrowlength=1.4,arrowinset=0.4]{->}(3.6,-0.1)(3.6,0.5)

}

\end{pspicture}
}

\end{center}\vspace{-0.05in}
\caption{ Block diagram of a polarization-based superposition code for a two-user noisy broadcast channel. } \label{fig:CoverCoding}
\end{figure*}
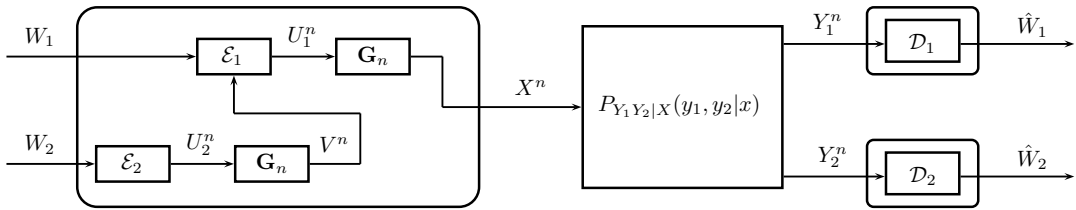
\section{Proof of Theorem~\ref{thm:CoverCode}}\label{section:ProofOfCoverPolarCode}

The block diagram for polarization-based superposition coding is given in Figure~\ref{fig:CoverCoding}. Similar to random codes in Shannon theory, polarization-based codes rely on $n$-length $i.i.d.$ statistics of random variables; however, a specific polarization structure based on the chain rule of entropy allows for efficient encoding and decoding. The key idea of Cover's inner bound is to superimpose two messages of information onto one codeword.

\subsection{Polar Transform}\label{sec:PolarTransformSuperposition}

Consider the $i.i.d.$ sequence of random variables $(V^{j}, X^{j}, Y_1^{j}, Y_2^{j}) \sim P_{V}(v)P_{X|V}(x|v)P_{Y_1Y_2|X}(y_1, y_2|x)$ where the index $j \in [n]$. Let the $n$-length sequence of auxiliary and input variables $(V^{j}, X^{j})$ be organized into the random matrix
\vspace{0.1in}
\begin{align}
\matbold{\Omega} & \triangleq \left[\begin{array}{ccccc} X^{1} & X^{2} & X^{3} & \ldots & X^{n} \\ V^{1} & V^{2} & V^{3} & \ldots & V^{n} \end{array}\right]. \label{eqn:AMatSuperposition}
\end{align}
Applying the polar transform to $\matbold{\Omega}$ results in the random matrix $\matbold{U} \triangleq \matbold{\Omega}\matbold{G}_n$. Let the random variables in the random matrix $\matbold{U}$ be indexed as follows:
\begin{align}
\matbold{U} & = \left[\begin{array}{ccccc} U_1^{1} & U_1^{2} & U_1^{3} & \ldots & U_1^{n} \\ U_2^{1} & U_2^{2} & U_2^{3} & \ldots & U_2^{n} \end{array}\right]. \label{eqn:UMatSuperposition}
\end{align}
The above definitions are consistent with the block diagram given in Figure~\ref{fig:CoverCoding} (and noting that $\matbold{G}_n = \matbold{G}_n^{-1}$). The polar transform extracts the randomness of $\matbold{\Omega}$. In the transformed domain, the joint distribution of the random variables in $\matbold{U}$ is given by
\begin{align}
& P_{U_1^{n}U_2^{n}}\bigl(u_1^{n}, u_2^{n}\bigl) \triangleq P_{X^{n}V^{n}}\bigl(u_1^{n}\matbold{G}_n, u_2^{n}\matbold{G}_n\bigl). \label{eqn:U1U2JointSuperposition}
\end{align}
For polar coding purposes, the joint distribution is decomposed as follows,
\begin{align}
& P_{U_1^{n}U_2^{n}}\bigl(u_1^{n}, u_2^{n}\bigl) = P_{U_2^{n}}(u_2^{n})P_{U_1^{n}|U_2^{n}}\bigl(u_1^{n}\bigl|u_2^{n}\bigl) \notag \\
& = \prod_{j=1}^{n} P\bigl(u_2(j) \bigl| u_2^{1:j-1}\bigl) P\bigl(u_1(j) \bigl| u_1^{1:j-1}, u_2^{n}\bigl).
\label{eqn:U1U2DecomposedDistributionSuperposition}
\end{align}
The conditional distributions may be computed efficiently using recursive protocols as already mentioned. The polarized variables in $\matbold{U}$ are \emph{not} \emph{i.i.d.} random variables.


\subsection{Polarization Theorems Revisited}

\begin{definition}[Polarization Sets for Superposition Coding]\label{def:PolarizationSetsSuperposition} Let $V^{n}, X^{n}, Y_1^{n}, Y_2^{n}$ be the sequence of random variables as introduced in Section~\ref{sec:PolarTransformSuperposition}. In addition, let $U_1^{n} = X^{n}\matbold{G}_n$ and $U_2^{n} = V^{n}\matbold{G}_n$. Let $\delta_n = 2^{-n^{\beta}}$ for $0 < \beta < \frac{1}{2}$. The following \emph{polarization sets} are defined:
\begin{align}
\mathcal{H}^{(n)}_{X|V} & \triangleq \Bigl\{j \in [n]: Z\left(U_1(j) \Bigl| U_1^{1:j-1}, V^{n}\right) \geq 1 - \delta_n \Bigl\}, \notag \\
\mathcal{L}^{(n)}_{X|VY_1} & \triangleq \Bigl\{j \in [n]: Z\left(U_1(j) \Bigl| U_1^{1:j-1}, V^{n}, Y_1^{n}\right) \leq \delta_n \Bigl\}, \notag \\
\mathcal{L}^{(n)}_{V|Y_1} & \triangleq \Bigl\{j \in [n]: Z\left(U_2(j) \Bigl| U_2^{1:j-1}, Y_1^{n}\right) \leq \delta_n \Bigl\}. \notag \\
\mathcal{H}^{(n)}_{V} & \triangleq \Bigl\{j \in [n]: Z\left(U_2(j) \Bigl| U_2^{1:j-1}\right) \geq 1 - \delta_n \Bigl\}, \notag \\
\mathcal{L}^{(n)}_{V|Y_2} & \triangleq \Bigl\{j \in [n]: Z\left(U_2(j) \Bigl| U_2^{1:j-1}, Y_2^{n}\right) \leq \delta_n \Bigl\}. \notag
\end{align}
\end{definition}

\begin{definition}[Message Sets for Superposition Coding]\label{def:MessageSetsSuperposition} In terms of the polarization sets given in Definition~\ref{def:PolarizationSetsSuperposition}, the following \emph{message sets} are defined:
\begin{align}
\mathcal{M}^{(n)}_{1v} & \triangleq \mathcal{H}^{(n)}_{V} \cap \mathcal{L}^{(n)}_{V|Y_1}, \label{eqn:MessageIndicesSuperpositionM1V} \\
\mathcal{M}^{(n)}_{1} & \triangleq \mathcal{H}^{(n)}_{X|V} \cap \mathcal{L}^{(n)}_{X|VY_1}. \label{eqn:MessageIndicesSuperpositionM1} \\
\mathcal{M}^{(n)}_2 & \triangleq \mathcal{H}^{(n)}_{V} \cap \mathcal{L}^{(n)}_{V|Y_2}. \label{eqn:MessageIndicesSuperpositionM2}
\end{align}
\end{definition}
\vspace{0.05in}
\begin{proposition}[Polarization]\label{thm:RateOfPolarizationSuperposition} Consider the polarization sets given in Definition~\ref{def:PolarizationSetsSuperposition} and the message sets given in Definition~\ref{def:MessageSetsSuperposition} with parameter $\delta_n = 2^{-n^{\beta}}$ for $0 < \beta < \frac{1}{2}$. Fix a constant $\tau > 0$. Then there exists an $N_o = N_o(\beta, \tau)$ such that
\begin{align}
\frac{1}{n}\left|\mathcal{M}^{(n)}_{1}\right| & \geq \Bigl(H(X|V) - H(X|V, Y_1)\Bigl) - \tau, \label{eqn:MessageSetAchievesCapacitySuperpositionM1} \\
\frac{1}{n}\left|\mathcal{M}^{(n)}_{2}\right| & \geq \Bigl(H(V) - H(V|Y_2)\Bigl) - \tau, \label{eqn:MessageSetAchievesCapacitySuperpositionM2}
\end{align}
for all $n > N_o$.
\end{proposition}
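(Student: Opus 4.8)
The plan is to reduce Proposition~\ref{thm:RateOfPolarizationSuperposition} to the single-user source-polarization statement of Proposition~\ref{thm:RateOfPolarization}, applied separately to the two rows of $\matbold{U}$. The point is that each quantity appearing in Definitions~\ref{def:PolarizationSetsSuperposition}--\ref{def:MessageSetsSuperposition} is a Bhattacharyya parameter of a binary variable $U_\ell(j)$ conditioned on its own ``past'' $U_\ell^{1:j-1}$ together with some side information, which is exactly the object controlled by the rate-of-polarization theorem. First I would observe that $\mathcal{M}^{(n)}_1 = \mathcal{H}^{(n)}_{X|V}\cap\mathcal{L}^{(n)}_{X|VY_1}$ is the set of indices $j$ that are simultaneously ``high-entropy'' when only $V^n$ is revealed and ``low-entropy'' when $(V^n,Y_1^n)$ is revealed; by the standard two-sided polarization argument (applied to the binary random variable $X$ with the polar transform $U_1^n = X^n\matbold{G}_n$, first using $V^n$ as frozen side information and then $(V^n,Y_1^n)$), the density of $\mathcal{H}^{(n)}_{X|V}$ tends to $H(X\mid V)$ and the density of $\mathcal{L}^{(n)}_{X|VY_1}$ tends to $1-H(X\mid V,Y_1)$, each within $\tau/2$ for $n$ large. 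Since the complement of $\mathcal{L}^{(n)}_{X|VY_1}$ has density at most $H(X\mid V,Y_1)+\tau/2$, an inclusion--exclusion (union-bound) step gives
\begin{align}
\tfrac1n\bigl|\mathcal{M}^{(n)}_1\bigr| \;\geq\; \tfrac1n\bigl|\mathcal{H}^{(n)}_{X|V}\bigr| - \tfrac1n\bigl|[n]\setminus\mathcal{L}^{(n)}_{X|VY_1}\bigr| \;\geq\; H(X\mid V) - H(X\mid V,Y_1) - \tau, \notag
\end{align}
which is~\eqref{eqn:MessageSetAchievesCapacitySuperpositionM1}. The second bound~\eqref{eqn:MessageSetAchievesCapacitySuperpositionM2} is identical with the roles of $(X,U_1)$ replaced by $(V,U_2)$: $\mathcal{H}^{(n)}_V$ has density $\to H(V)$ and $\mathcal{L}^{(n)}_{V|Y_2}$ has density $\to 1-H(V\mid Y_2)$, so the same union-bound yields $\tfrac1n|\mathcal{M}^{(n)}_2| \geq H(V)-H(V\mid Y_2)-\tau$.

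The key technical inputs I would invoke explicitly are: (i) Proposition~\ref{thm:RateOfPolarization} (equivalently the Ar\i kan--Telatar rate-of-polarization result) applied to the appropriate binary source with side information, which guarantees that the fraction of indices whose Bhattacharyya parameter exceeds $1-\delta_n$ (resp.\ falls below $\delta_n$) converges to the corresponding conditional entropy (resp.\ one minus it) at the prescribed speed, so in particular these fractions are within $\tau/2$ of their limits for all $n>N_o(\beta,\tau)$; (ii) the elementary relation $Z(T\mid V)\to 1 \Leftrightarrow H(T\mid V)\to 1$ and $Z(T\mid V)\to 0 \Leftrightarrow H(T\mid V)\to 0$ for binary $T$ (Lemma~\ref{lemma:ClosenessOfHandZOne}), used to pass between the Bhattacharyya thresholds defining the polarization sets and the entropy limits; and (iii) the chain-rule identities $H(X\mid V)-H(X\mid V,Y_1)=I(X;Y_1\mid V)$ and $H(V)-H(V\mid Y_2)=I(V;Y_2)$, which make the right-hand sides coincide with the mutual-information quantities in the rate region of Theorem~\ref{thm:CoverCode}. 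Taking $N_o$ to be the maximum of the finitely many thresholds arising from the separate applications handles the ``for all $n>N_o$'' clause uniformly.

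The main obstacle — and the only place where some care is genuinely needed — is checking that Proposition~\ref{thm:RateOfPolarization} is legitimately applicable to the conditioned objects here, i.e.\ that revealing $V^n$ (and then $(V^n,Y_1^n)$) to the polarization of $X^n$, or revealing $Y_2^n$ to the polarization of $V^n$, still fits the single-user ``source polarization with side information'' framework. This works because $\matbold{G}_n$ acts row-by-row, so $U_1^n$ depends only on $X^n$ and $U_2^n$ only on $V^n$; the conditioning variables $V^n$, $Y_1^n$, $Y_2^n$ are simply auxiliary observations that are i.i.d.\ per coordinate and jointly i.i.d.\ with the source coordinate, which is exactly the hypothesis of Proposition~\ref{thm:RateOfPolarization} (there the side information is $\{Y_k^{1:n}\}_{k\in[1:i-1]}$, playing the same structural role). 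One must also be mindful that $U_1$ is polarized \emph{conditioned on} all of $U_2^n = V^n\matbold{G}_n$ (equivalently $V^n$, since $\matbold{G}_n$ is invertible), which is consistent with the decomposition~\eqref{eqn:U1U2DecomposedDistributionSuperposition}; the high-entropy set $\mathcal{H}^{(n)}_{X|V}$ and the low-entropy set $\mathcal{L}^{(n)}_{X|VY_1}$ are then both computed relative to this same conditioning on $V^n$, so the intersection and the union-bound go through without any alignment issue. No additional alignment between the two receivers is needed for this proposition — that is handled separately in the proof of Theorem~\ref{thm:CoverCode} via the degradation hypothesis $P_{Y_1|V}\degBC P_{Y_2|V}$; here the claim is purely about cardinalities of sets defined on a single row at a time.
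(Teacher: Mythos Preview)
Your proposal is correct and follows exactly the approach the paper implicitly relies on: the paper states Proposition~\ref{thm:RateOfPolarizationSuperposition} without an explicit proof, treating it as a direct consequence of the single-user rate-of-polarization results (Propositions~\ref{thm:SrcPolarization}--\ref{thm:RateOfPolarization}, ultimately the Ar\i kan--Telatar theorem) applied separately to each row with the appropriate side information, combined via the union-bound/inclusion--exclusion step you describe. Your discussion of why the conditioning on $V^n$ (equivalently $U_2^n$) and on $(V^n,Y_1^n)$ fits the i.i.d.\ source-with-side-information framework is exactly the justification the paper leaves to the reader.
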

\vspace{0.1in}
\begin{lemma}\label{lemma:NestedSetsSuperpostionCoding} Consider the message sets defined in Definition~\ref{def:MessageSetsSuperposition}. If the property $P_{Y_1|V}(y_1|v) \degBC P_{Y_2|V}(y_2|v)$ holds for conditional distributions $P_{Y_1|V}(y_1|v)$ and $P_{Y_2|V}(y_2|v)$, then the Bhattacharyya parameters
\begin{align}
Z\left(U_2(j) \Bigl| U_2^{1:j-1}, Y_1^{n}\right) & \leq Z\left(U_2(j) \Bigl| U_2^{1:j-1}, Y_2^{n}\right) \notag
\end{align}
for all $j \in [n]$. As a result,
\begin{align}
\mathcal{M}^{(n)}_{2} \subseteq \mathcal{M}^{(n)}_{1v}. \notag
\end{align}
\end{lemma}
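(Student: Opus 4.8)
The plan is to deduce the pointwise Bhattacharyya inequality from a single monotonicity fact and then read off the set inclusion. First I would pass to a \emph{physically degraded surrogate} for the triple $(V^n,Y_1^n,Y_2^n)$. By hypothesis $P_{Y_1|V}\degBC P_{Y_2|V}$, so by Definition~\ref{def:StochasticDegradedBC} there is a kernel $\tilde P_{Y_2|Y_1}$ with $\sum_{y_1}P_{Y_1|V}(y_1|v)\tilde P_{Y_2|Y_1}(y_2|y_1)=P_{Y_2|V}(y_2|v)$. Define a new joint law in which $V^n$ keeps its i.i.d.\ law, $Y_1^n$ is produced from $V^n$ through $\prod_jP_{Y_1|V}$, and $Y_2^n$ is produced from $Y_1^n$ through $\prod_j\tilde P_{Y_2|Y_1}$, so that $V^n\to Y_1^n\to Y_2^n$ is Markov. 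Because $U_2^n=V^n\matbold{G}_n$ is a deterministic function of $V^n$, this surrogate reproduces exactly the joint laws of $(U_2(j),U_2^{1:j-1},Y_1^n)$ and of $(U_2(j),U_2^{1:j-1},Y_2^n)$ occurring in the real code — and these are the only marginals on which the two Bhattacharyya parameters depend — so it suffices to prove the inequality in the surrogate.

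Inside the surrogate, set $T=U_2(j)$, $A=(U_2^{1:j-1},Y_1^n)$, $B=(U_2^{1:j-1},Y_2^n)$. I would check that $T-A-B$ is a Markov chain: the $U_2^{1:j-1}$-coordinate of $B$ is a deterministic copy of that coordinate of $A$, and $Y_2^n$ is produced from $Y_1^n$ by noise that is independent of $V^n$ given $Y_1^n$, hence independent of $U_2(j)$ given $A$. The claim then reduces to the general lemma: \emph{if $T\in\{0,1\}$ and $T-A-B$ is Markov, then $Z(T|A)\le Z(T|B)$.} For this, write $Z(T|A)=2\sum_a\sqrt{P_{TA}(0,a)P_{TA}(1,a)}$, insert $1=\sum_bP_{B|A}(b|a)$ under the sum, regroup so the sum over $a$ sits inside the square root, and apply Cauchy--Schwarz, $\sum_a\sqrt{x_ay_a}\le\sqrt{\sum_ax_a}\,\sqrt{\sum_ay_a}$, with $x_a=P_{TAB}(0,a,b)$ and $y_a=P_{TAB}(1,a,b)$; this yields $Z(T|A)\le 2\sum_b\sqrt{P_{TB}(0,b)P_{TB}(1,b)}=Z(T|B)$. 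Applied for each $j\in[n]$ this gives $Z(U_2(j)|U_2^{1:j-1},Y_1^n)\le Z(U_2(j)|U_2^{1:j-1},Y_2^n)$.

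The set inclusion is then immediate. If $j\in\mathcal M^{(n)}_2=\mathcal H^{(n)}_V\cap\mathcal L^{(n)}_{V|Y_2}$, then $Z(U_2(j)|U_2^{1:j-1},Y_2^n)\le\delta_n$, so by the inequality just proved $Z(U_2(j)|U_2^{1:j-1},Y_1^n)\le\delta_n$, i.e.\ $j\in\mathcal L^{(n)}_{V|Y_1}$; together with $j\in\mathcal H^{(n)}_V$ this gives $j\in\mathcal H^{(n)}_V\cap\mathcal L^{(n)}_{V|Y_1}=\mathcal M^{(n)}_{1v}$, hence $\mathcal M^{(n)}_2\subseteq\mathcal M^{(n)}_{1v}$.

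The main obstacle is not depth but care in the reduction step: one must be precise that each Bhattacharyya parameter depends only on the respective three-variable marginal, that the degraded surrogate reproduces both of them, and that in the surrogate $Y_2^n\perp U_2(j)\mid(U_2^{1:j-1},Y_1^n)$. An alternative to the elementary Cauchy--Schwarz estimate is to invoke the known monotonicity of $Z$ under channel degradation together with the fact that degradation is preserved by Ar\i kan's transform, but carrying out the direct computation keeps the argument self-contained.
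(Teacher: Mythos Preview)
Your proof is correct, but it proceeds along a genuinely different route from the paper's. The paper invokes Lemma~\ref{lemma:DegradationBhatt} (degradation implies $Z(V|Y_2)\ge Z(V|Y_1)$ via Cauchy--Schwarz) together with repeated application of Lemma~\ref{lemma:SuccessiveDegradationBhatt}, which shows that a single Ar\i kan polarization step preserves stochastic degradation between the two conditional channels; iterating over the $\log_2 n$ levels of $\matbold{G}_n$ then yields the Bhattacharyya ordering for every polarized index $j$. You instead bypass the induction entirely: you build a \emph{block-level} physically degraded surrogate $V^n\to Y_1^n\to Y_2^n$, observe that the two Bhattacharyya parameters depend only on the marginals $(U_2^{1:j},Y_1^n)$ and $(U_2^{1:j},Y_2^n)$ which the surrogate reproduces exactly, verify the Markov chain $U_2(j)-(U_2^{1:j-1},Y_1^n)-(U_2^{1:j-1},Y_2^n)$ once, and apply a single data-processing inequality for $Z$ (your general lemma is precisely Lemma~\ref{lemma:DegradationBhatt} with the right substitutions). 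Amusingly, the alternative you mention in your last paragraph---monotonicity of $Z$ under degradation plus preservation of degradation by the polar transform---\emph{is} the paper's proof. Your one-shot argument is cleaner and self-contained for this lemma; the paper's inductive route has the advantage that Lemma~\ref{lemma:SuccessiveDegradationBhatt} is a structural statement reusable elsewhere (it is invoked again verbatim for Lemma~\ref{lemma:NestedSetsMartonCoding} in the Marton scheme).
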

\begin{proof} The proof follows from Lemma~\ref{lemma:DegradationBhatt} and repeated application of Lemma~\ref{lemma:SuccessiveDegradationBhatt} in Appendix~\ref{sec:PolarCodingLemmas}.
\end{proof}

\subsection{Broadcast Encoding Based on Polarization}
\label{sec:EncodingDecodingSuperposition}

The polarization theorems of the previous section are useful for defining a multi-user communication system as diagrammed in Figure~\ref{fig:CoverCoding}. The broadcast encoder must map two independent messages $(W_1, W_2)$ uniformly distributed over $[2^{nR_1}] \times [2^{nR_2}]$ to a codeword $x^{n} \in \mathcal{X}^{n}$ in such a way that the decoding at each separate receiver is successful. The achievable rates for a particular block length $n$ are
\begin{align}
R_1 & = \frac{1}{n}\left|\mathcal{M}^{(n)}_{1}\right|, \notag \\
R_2 & = \frac{1}{n}\left|\mathcal{M}^{(n)}_{2}\right|. \notag
\end{align}

To construct a codeword, the encoder first produces two binary sequences $u_1^{n} \in \{0,1\}^{n}$ and $u_2^{n} \in \{0,1\}^{n}$. To determine $u_1(j)$ for $j \in \mathcal{M}^{(n)}_{1}$, the bit is selected as a uniformly distributed message bit intended for the first receiver. To determine $u_2(j)$ for $j \in \mathcal{M}^{(n)}_{2}$, the bit is selected as a uniformly distributed message bit intended for the second receiver. The remaining \emph{non-message} indices of $u_1^{n}$ and $u_2^{n}$ are computed according to deterministic or random functions which are \emph{shared} between the encoder and decoder.

\vspace{0.1in}
\subsubsection{Deterministic Mapping} Consider the following deterministic boolean functions indexed by $j \in [n]$:
\begin{align}
\psi_1^{(j)}: \{0,1\}^{n+j-1} \rightarrow \{0,1\}, \label{eqn:BooleanMapsDetSuperposition1} \\
\psi_2^{(j)}: \{0,1\}^{j-1} \rightarrow \{0,1\}.  \label{eqn:BooleanMapsDetSuperposition2}
\end{align}
As an example, consider the deterministic boolean functions based on the \emph{maximum a posteriori} polar coding rule.
\begin{align}
& \psi^{(j)}_{1}\left(u_1^{1:j-1}, v^{n}\right) \triangleq \notag \\
& ~~ \argmax_{u \in \{0,1\}} \Bigl\{ \mathbb{P}\left(U_1(j) = u \Bigl| U_1^{1:j-1} = u_1^{1:j-1}, V^{n} = v^{n} \right)\Bigl\}. \label{eqn:DetMapSpecificSuperposition1} \\
& \psi^{(j)}_{2}\left(u_2^{1:j-1}\right) \triangleq \notag \\
& ~~ \argmax_{u \in \{0,1\}} \Bigl\{ \mathbb{P}\left(U_2(j) = u \Bigl| U_2^{1:j-1} = u_2^{1:j-1} \right)\Bigl\}. \label{eqn:DetMapSpecificSuperposition2}
\end{align}

\subsubsection{Random Mapping} Consider the following class of random boolean functions indexed by $j \in [n]$:
\begin{align}
\Psi_1^{(j)}: \{0,1\}^{n+j-1} \rightarrow \{0,1\}, \label{eqn:BooleanMapsRANDSuperposition1} \\
\Psi_2^{(j)}: \{0,1\}^{j-1} \rightarrow \{0,1\}. \label{eqn:BooleanMapsRANDSuperposition2}
\end{align}
As an example, consider the random boolean functions
\begin{align}
\Psi_1^{(j)}\bigl(u_1^{1:j-1}, v^{n} \bigl) & \triangleq \begin{cases} 0, & \mbox{w.p. }~ \lambda_{0}\bigl(u_1^{1:j-1}, v^{n}\bigl), \\ 1, & \mbox{w.p. }~1-\lambda_{0}\bigl(u_1^{1:j-1}, v^{n}\bigl), \end{cases} \label{eqn:RANDMapSpecificSuperposition1} \\
\Psi_2^{(j)}\bigl(u_2^{1:j-1} \bigl) & \triangleq \begin{cases} 0, & \mbox{w.p. }~ \lambda_{0}\bigl(u_2^{1:j-1} \bigl), \\ 1, & \mbox{w.p. }~1-\lambda_{0}\bigl(u_2^{1:j-1} \bigl), \end{cases} \label{eqn:RANDMapSpecificSuperposition2}
\end{align}
where
\begin{align}
\lambda_{0}\bigl(u_2^{1:j-1} \bigl) & \triangleq \mathbb{P}\bigl(U_2(j) = 0 \bigl| U_2^{1:j-1} = u_2^{1:j-1}\bigl). \notag \\
\lambda_{0}\bigl(u_1^{1:j-1}, v^{n} \bigl) & \triangleq \notag \\
& \!\!\!\! \mathbb{P}\bigl(U_1(j) = 0 \bigl| U_1^{1:j-1} = u_1^{1:j-1}, V^{n} = v^{n}\bigl). \notag
\end{align}
The random boolean functions $\Psi_1^{(j)}$ and $\Psi_2^{(j)}$ may be thought of as a vector of independent Bernoulli random variables indexed by the input to the function. Each Bernoulli random variable of the vector is zero or one with a fixed probability.
\vspace{0.05in}
\subsubsection{Protocol} The encoder constructs the sequence $u_2^{n}$ first using the message bits $W_2$ and either~\eqref{eqn:DetMapSpecificSuperposition2} or~\eqref{eqn:RANDMapSpecificSuperposition2}. Next, the sequence $v^{n} = u_2^{n}\matbold{G}_n$ is created. Finally, the sequence $u_1^{n}$ is constructed using the message bits $W_1$, the sequence $v^{n}$, and either the deterministic maps defined in~\eqref{eqn:DetMapSpecificSuperposition1} or the randomized maps in~\eqref{eqn:RANDMapSpecificSuperposition1}. The transmitted codeword is $x^{n} = u_1^{n}\matbold{G}_n$.

\subsection{Broadcast Decoding Based on Polarization}
\label{sec:DecodingSuperposition}

\subsubsection{Decoding At First Receiver}

Decoder $\mathcal{D}_1$ decodes the binary sequence $\hat{u}_2^{n}$ first using its observations $y_1^{n}$. It then reconstructs $\hat{v}^{n} = \hat{u}_2^{n}\matbold{G}_n$. Using the sequence $\hat{v}^{n}$ and observations $y_1^{n}$, the decoder reconstructs $\hat{u}_1^{n}$. The message $W_1$ is located at the indices $j \in \mathcal{M}^{(n)}_{1}$ in the sequence $\hat{u}_1^{n}$. More precisely, define the following deterministic polar decoding functions:
\begin{align}
& \xi^{(j)}_{v}\left(u_2^{1:j-1}, y_1^{n}\right) \triangleq \notag \\
& ~~\argmax_{u \in \{0,1\}} \Bigl\{ \mathbb{P}\left(U_2(j) = u \Bigl| U_2^{1:j-1} = u_2^{1:j-1}, Y_1^{n} = y_1^{n} \right)\Bigl\}. \label{eqn:DecodingSuperposition1a} \\
& \xi^{(j)}_{u_1}\left(u_1^{1:j-1}, v^{n}, y_1^{n}\right) \triangleq \argmax_{u \in \{0,1\}} \Bigl\{ \notag \\
& ~~\mathbb{P}\left(U_1(j) = u \Bigl| U_1^{1:j-1} = u_1^{1:j-1}, V^{n} = v^{n}, Y_1^{n} = y_1^{n} \right)\Bigl\}. \label{eqn:DecodingSuperposition1b}
\end{align}
The decoder $\mathcal{D}_1$ reconstructs $\hat{u}_2^{n}$ bit-by-bit successively as follows using the \emph{identical} shared random mapping $\Psi_2^{(j)}$ (or possibly the identical shared mapping $\psi_2^{(j)}$) used at the encoder:
\begin{align}
\hat{u}_2(j) & = \begin{cases} \xi^{(j)}_{v}\left(\hat{u}_2^{1:j-1}, y_1^{n}\right), & \mbox{\emph{if}}~ j \in \mathcal{M}^{(n)}_{2}, \\ \Psi_2^{(j)}\left(\hat{u}_2^{1:j-1} \right), & \mbox{\emph{otherwise.}}~\end{cases} \label{eqn:Decoder1ForV}
\end{align}
If Lemma~\ref{lemma:NestedSetsSuperpostionCoding} holds, note that $\mathcal{M}^{(n)}_{2} \subseteq \mathcal{M}^{(n)}_{1v}$. With $\hat{u}_2^{n}$, decoder $\mathcal{D}_1$ reconstructs $\hat{v}^{n} = \hat{u}_2^{n}\matbold{G}_n$. Then the sequence $\hat{u}_1^{n}$ is constructed bit-by-bit successively as follows using the \emph{identical} shared random mapping $\Psi_1^{(j)}$ (or possibly the identical shared mapping $\psi_1^{(j)}$) used at the encoder:
\begin{align}
\hat{u}_1(j) & = \begin{cases} \xi^{(j)}_{u_1}\left(\hat{u}_1^{1:j-1}, \hat{v}^{n}, y_1^{n}\right), & \mbox{\emph{if}}~ j \in \mathcal{M}^{(n)}_{1}, \\ \Psi_1^{(j)}\left(\hat{u}_1^{1:j-1}, \hat{v}^{n} \right), & \mbox{\emph{otherwise.}}~\end{cases} \label{eqn:Decoder1ForU1}
\end{align}

\vspace{0.1in}
\subsubsection{Decoding At Second Receiver}

The decoder $\mathcal{D}_2$ decodes the binary sequence $\hat{u}_2^{n}$ using observations $y_2^{n}$. The message $W_2$ is located at the indices $j \in \mathcal{M}^{(n)}_{2}$ of the sequence $\hat{u}_2^{n}$. More precisely, define the following polar decoding functions
\begin{align}
& \xi^{(j)}_{v}\left(u_2^{1:j-1}, y_2^{n}\right) \triangleq \notag \\
& ~~\argmax_{u \in \{0,1\}} \Bigl\{ \mathbb{P}\left(U_2(j) = u \Bigl| U_2^{1:j-1} = u_2^{1:j-1}, Y_2^{n} = y_2^{n} \right)\Bigl\}. \label{eqn:DecodingSuperposition2}
\end{align}
The decoder $\mathcal{D}_2$ reconstructs $\hat{u}_2^{n}$ bit-by-bit successively as follows using the \emph{identical} shared random mapping $\Psi_2^{(j)}$ (or possibly the identical shared mapping $\psi_2^{(j)}$) used at the encoder:
\begin{align}
\hat{u}_2(j) & = \begin{cases} \xi^{(j)}_{v}\left(\hat{u}_2^{1:j-1}, y_2^{n}\right), & \mbox{\emph{if}}~ j \in \mathcal{M}^{(n)}_{2}, \\ \Psi_2^{(j)}\left(\hat{u}_2^{1:j-1} \right), & \mbox{\emph{otherwise.}}~\end{cases} \label{eqn:Decoder2ForV}
\end{align}

\begin{remark} The encoder and decoders execute \emph{the same} protocol for reconstructing bits at the \emph{non-message} indices. This is achieved by applying the same deterministic maps $\psi_1^{(j)}$ and $\psi_2^{(j)}$ or randomized maps $\Psi_1^{(j)}$ and $\Psi_2^{(j)}$.
\end{remark}

\subsection{Total Variation Bound}

To analyze the average probability of error $P_e^{(n)}$ via the probabilistic method, it is assumed that both the encoder and decoder share the \emph{randomized} mappings $\Psi_1^{(j)}$ and $\Psi_2^{(j)}$. Define the following probability measure on the space of tuples of binary sequences.
\begin{align}
& Q\bigl(u_1^{n}, u_2^{n}\bigl) \triangleq Q\bigl(u_2^{n}\bigl)Q\bigl(u_1^{n}\bigl|u_2^{n}\bigl) \notag \\
& = \prod_{j=1}^{n} Q\Bigl(u_2(j) \Bigl| u_2^{1:j-1}\Bigl) Q\Bigl(u_1(j) \Bigl| u_1^{1:j-1}, u_2^{n}\Bigl).
\label{eqn:QUJOINTSuperposition}
\end{align}
In~\eqref{eqn:QUJOINTSuperposition}, the conditional probability measures are defined as
\begin{align}
& Q\Bigl(u_2(j) \Bigl| u_2^{1:j-1}\Bigl) \triangleq \notag \\
& ~~~~~~ \begin{cases} \frac{1}{2}, & \mbox{\emph{if}}~j \in \mathcal{M}_2^{(n)}, \\ P\left( u_2(j) \Bigl| u_2^{1:j-1}\right), & \mbox{\emph{otherwise.}} \end{cases} \notag \\
& Q\Bigl(u_1(j) \Bigl| u_1^{1:j-1}, u_2^{n}\Bigl) \triangleq \notag \\
& ~~~~~~ \begin{cases} \frac{1}{2}, & \mbox{\emph{if}}~j \in \mathcal{M}_1^{(n)}, \\ P\left( u_1(j) \Bigl| u_1^{1:j-1}, u_2^{n}\right), & \mbox{\emph{otherwise.}} \end{cases} \notag
\end{align}
The probability measure $Q$ defined in~\eqref{eqn:QUJOINTSuperposition} is a perturbation of the joint probability measure $P_{U_1^{n}U_2^{n}}(u_1^{n}, u_2^{n})$ in~\eqref{eqn:U1U2DecomposedDistributionSuperposition}. The only difference in definition between $P$ and $Q$ is due to those indices in message sets $\mathcal{M}_1^{(n)}$ and $\mathcal{M}_2^{(n)}$. The following lemma provides a bound on the total variation distance between $P$ and $Q$. The lemma establishes the fact that inserting uniformly distributed message bits in the proper indices $\mathcal{M}_1^{(n)}$ and $\mathcal{M}_2^{(n)}$ at the encoder \emph{does not} perturb the statistics of the $n$-length random variables too much.
\vspace{0.1in}
\begin{lemma}(\emph{Total Variation Bound})\label{lemma:TVBoundSuperposition}
Let probability measures $P$ and $Q$ be defined as in~\eqref{eqn:U1U2DecomposedDistributionSuperposition} and~\eqref{eqn:QUJOINTSuperposition} respectively. Let $0 < \beta < 1$. For sufficiently large $n$, the total variation distance between $P$ and $Q$ is bounded as
\begin{align}
\sum_{\begin{subarray}{c} u_1^{n} \in \{0,1\}^{n} \\ u_2^{n} \in \{0,1\}^{n} \end{subarray}} \Bigl| P_{U_1^{n}U_2^{n}}\bigl( u_1^{n}, u_2^{n} \bigl) - Q\bigl( u_1^{n}, u_2^{n} \bigl) \Bigl| \leq 2^{-n^{\beta}}. \notag
\end{align}
\end{lemma}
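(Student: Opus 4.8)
The plan is to reuse the template of the proof of Lemma~\ref{lemma:TVBound} from the deterministic case: bound the $\ell_1$ distance by Pinsker's inequality after controlling the Kullback--Leibler divergence $D(P\,\|\,Q)$ via the chain rule. First observe that $Q$ is strictly positive on the support of $P$ — each conditional factor of $Q$ in~\eqref{eqn:QUJOINTSuperposition} is either $\tfrac12$ (at a message index) or the corresponding factor of $P$ — so $D(P\,\|\,Q)<\infty$, and Pinsker gives $\sum_{u_1^{n},u_2^{n}}|P_{U_1^{n}U_2^{n}}-Q|\le\sqrt{2\ln 2\cdot D(P\,\|\,Q)}$. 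It therefore suffices to show $D(P\,\|\,Q)=\mathcal{O}(n\,\delta_n)$, since then the $\ell_1$ distance is $\mathcal{O}\bigl(\sqrt{n}\,2^{-n^{\beta}/2}\bigr)$, which falls below $2^{-n^{\beta}}$ for all large $n$ once the polarization parameter in Definition~\ref{def:PolarizationSetsSuperposition} is taken small enough for the target exponent.

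Next I would apply the chain rule for the divergence along the ordering $u_2(1),\dots,u_2(n),u_1(1),\dots,u_1(n)$, which is exactly the order in which both $P$ in~\eqref{eqn:U1U2DecomposedDistributionSuperposition} and $Q$ in~\eqref{eqn:QUJOINTSuperposition} factor. This expresses $D(P\,\|\,Q)$ as a sum over $j$ of $P$-averaged conditional divergences between $P(u_2(j)\mid u_2^{1:j-1})$ and $Q(u_2(j)\mid u_2^{1:j-1})$, plus the analogous terms for $u_1(j)$ conditioned on $(u_1^{1:j-1},u_2^{n})$. For $j\notin\mathcal{M}_2^{(n)}$ the $u_2$-term vanishes because the two conditionals coincide, and likewise the $u_1$-term vanishes for $j\notin\mathcal{M}_1^{(n)}$. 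At a message index the $Q$-conditional is uniform, so the $u_2$-term with $j\in\mathcal{M}_2^{(n)}$ equals $1-H\bigl(U_2(j)\mid U_2^{1:j-1}\bigr)$ (in bits), and the $u_1$-term with $j\in\mathcal{M}_1^{(n)}$ equals $1-H\bigl(U_1(j)\mid U_1^{1:j-1},U_2^{n}\bigr)$; since $\matbold{G}_n$ is invertible and $V^{n}=U_2^{n}\matbold{G}_n$, conditioning on $U_2^{n}$ is the same as conditioning on $V^{n}$, so the latter equals $1-H\bigl(U_1(j)\mid U_1^{1:j-1},V^{n}\bigr)$.

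Then I would bound each surviving term using the closeness of $H$ and $Z$ (Lemma~\ref{lemma:ClosenessOfHandZOne}). By Definition~\ref{def:MessageSetsSuperposition} we have $\mathcal{M}_2^{(n)}\subseteq\mathcal{H}^{(n)}_{V}$, so for $j\in\mathcal{M}_2^{(n)}$ the Bhattacharyya parameter $Z(U_2(j)\mid U_2^{1:j-1})\ge 1-\delta_n$ and hence $1-H(U_2(j)\mid U_2^{1:j-1})\le\mathcal{O}(\delta_n)$; likewise $\mathcal{M}_1^{(n)}\subseteq\mathcal{H}^{(n)}_{X|V}$ gives $Z(U_1(j)\mid U_1^{1:j-1},V^{n})\ge 1-\delta_n$ and $1-H(U_1(j)\mid U_1^{1:j-1},V^{n})\le\mathcal{O}(\delta_n)$. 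Summing over the at most $n$ message indices in each of the two sets yields $D(P\,\|\,Q)\le\mathcal{O}(n\,\delta_n)=\mathcal{O}(n\,2^{-n^{\beta}})$, and the Pinsker step above then completes the proof.

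I do not expect any single step to be a serious obstacle — the argument is mostly bookkeeping — and its real content is simply that $Q$ was engineered to differ from $P$ only at coordinates lying in the high-Bhattacharyya (``near-uniform'') sets $\mathcal{H}^{(n)}_{V}$ and $\mathcal{H}^{(n)}_{X|V}$, where replacing the true conditional by the exactly-uniform one costs only $\mathcal{O}(\delta_n)$ per coordinate in divergence. The two places that most need care are the $\matbold{G}_n$-invertibility identification of $U_2^{n}$ with $V^{n}$ (so that the $\mathcal{H}^{(n)}_{X|V}$ condition actually applies to the $u_1$-terms that appear in the chain rule), and the exponent arithmetic, since the square root in Pinsker halves the polarization exponent; the latter is harmless because $\beta$ in Definition~\ref{def:PolarizationSetsSuperposition} may be chosen anywhere in $(0,\tfrac12)$.
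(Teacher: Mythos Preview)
Your proposal is correct and follows essentially the same approach as the paper's proof: chain rule for $D(P\|Q)$ along the factorization order $u_2^{1:n},u_1^{1:n}$, cancellation at non-message indices, the identity $D(P(\cdot\mid\cdot)\|\tfrac12)=1-H(\cdot\mid\cdot)$ at message indices, the $U_2^{n}\leftrightarrow V^{n}$ identification via $\matbold{G}_n$-invertibility, the $Z\ge 1-\delta_n\Rightarrow 1-H\le 2\delta_n$ bound, and Pinsker to finish. Your index bookkeeping is in fact cleaner than the appendix version, which appears to have a minor swap of the roles of $u_1$ and $u_2$ in the displayed chain-rule expansion.
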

\begin{proof} See Section~\ref{sec:AppendixSuperposition} of the Appendices. \end{proof}
\vspace{0.05in}

\subsection{Error Sequences}

The decoding protocols for $\mathcal{D}_1$ and $\mathcal{D}_2$ were established in Section~\ref{sec:DecodingSuperposition}. To analyze the probability of error of successive cancelation (SC) decoding, consider the sequences $u_1^{n}$ and $u_2^{n}$ formed at the encoder, and the resulting observations $y_1^{n}$ and $y_2^{n}$ received by the decoders. It is convenient to group the sequences together and consider all tuples $(u_1^{n}, u_2^{n}, y_1^{n}, y_2^{n})$.

Decoder $\mathcal{D}_1$ makes an SC decoding error on the $j$-th bit for the following tuples:
\begin{align}
\mathcal{T}_{1v}^{j} & \triangleq \Bigl\{\bigl(u_1^{n}, u_2^{n}, y_1^{n}, y_2^{n}\bigl): \notag \\
& ~~ P_{U_2^{j}\bigl|U_2^{1:j-1}Y_1^{n}}\bigl(u_2(j)\bigl|u_2^{1:j-1},y_1^{n}\bigl) \leq \notag \\
& ~~ P_{U_2^{j}\bigl|U_2^{1:j-1}Y_1^{n}}\bigl(u_2(j) \oplus 1 \bigl| u_2^{1:j-1}, y_1^{n}\bigl)\Bigl\}, \notag \\
\mathcal{T}_{1}^{j} & \triangleq \Bigl\{\bigl(u_1^{n}, u_2^{n}, y_1^{n}, y_2^{n}\bigl): \notag \\
& ~~ P_{U_1^{j}\bigl|U_1^{1:j-1}V^{n}Y_1^{n}}\bigl(u_1(j)\bigl|u_1^{1:j-1}, u_2^{n}\matbold{G}_n, y_1^{n}\bigl) \leq \notag \\
& ~~ P_{U_1^{j}|U_1^{1:j-1}V^{n}Y_1^{n}}\bigl(u_1(j) \oplus 1 \bigl|u_1^{1:j-1}, u_2^{n}\matbold{G}_n, y_1^{n}\bigl)\Bigl\}.
\end{align}
The set $\mathcal{T}_{1v}^{j}$ represents those tuples causing an error at $\mathcal{D}_1$ in the case $u_2(j)$ is inconsistent with respect to observations $y_1^{n}$ and the decoding rule. The set $\mathcal{T}_{1}^{j}$ represents those tuples causing an error at $\mathcal{D}_1$ in the case $u_1(j)$ is inconsistent with respect to $v^{n} = u_2^{n}\matbold{G}_n$, observations $y_1^{n}$, and the decoding rule. Similarly, decoder $\mathcal{D}_2$ makes an SC decoding error on the $j$-th bit for the following tuples:
\begin{align}
\mathcal{T}_2^{j} & \triangleq \Bigl\{\bigl(u_1^{n}, u_2^{n}, y_1^{n}, y_2^{n}\bigl): P_{U_2\bigl|U_2^{1:j-1}Y_2^{n}}\bigl(u_2\bigl|u_2^{1:j-1},y_2^{n}\bigl) \leq \notag \\
& ~~ P_{U_2\bigl|U_2^{1:j-1}Y_2^{n}}\bigl(u_2 \oplus 1 \bigl| u_2^{1:j-1}, y_2^{n}\bigl)\Bigl\}. \notag
\end{align}
The set $\mathcal{T}_2^{j}$ represents those tuples causing an error at $\mathcal{D}_2$ in the case $u_2(j)$ is inconsistent with respect to observations $y_2^{n}$ and the decoding rule. Since both decoders $\mathcal{D}_1$ and $\mathcal{D}_2$ only declare errors for those indices in the message sets, the set of tuples causing an error is
\begin{align}
\mathcal{T}_{1v} & \triangleq \bigcup_{j \in \mathcal{M}_2^{(n)} \subseteq \mathcal{M}_{1v}^{(n)}} \mathcal{T}_{1v}^{j}, \label{eqn:Total1vErrorTuple} \\
\mathcal{T}_{1} & \triangleq \bigcup_{j \in \mathcal{M}_1^{(n)}} \mathcal{T}_{1}^{j}, \label{eqn:Total1ErrorTuple} \\
\mathcal{T}_{2} & \triangleq \bigcup_{j \in \mathcal{M}_2^{(n)}} \mathcal{T}_{2}^{j}. \label{eqn:Total2ErrorTuple}
\end{align}
The complete set of tuples causing a broadcast error is
\begin{align}
\mathcal{T} \triangleq \mathcal{T}_{1v} \cup \mathcal{T}_{1} \cup \mathcal{T}_{2}. \label{eqn:TotalErrorTupleSuperposition}
\end{align}
The goal is to show that the probability of choosing tuples of error sequences in the set $\mathcal{T}$ is small under the distribution induced by the broadcast code.

\subsection{Average Error Probability}

Denote the total sum rate of the broadcast protocol as $R_{\Sigma} = R_1 + R_2$. Consider first the use of fixed deterministic maps $\psi_1^{(j)}$ and $\psi_2^{(j)}$ shared between the encoder and decoders. Then the probability of error of broadcasting the two messages at rates $R_1$ and $R_2$ is given by
\begin{align}
& P_e^{(n)}\left[\{\psi_1^{(j)}, \psi_2^{(j)}\}\right] = \notag \\
& ~~ \sum_{\{u_1^{n}, u_2^{n}, y_1^{n}, y_2^{n}\} \in \mathcal{T}} \Biggl[ P_{Y_1^{n}Y_2^{n}\bigl|U_1^{n}U_2^{n}}\bigl(y_1^{n}, y_2^{n}\bigl|u_1^{n}, u_2^{n}\bigl) \notag \\
& ~~ \cdot \frac{1}{2^{nR_2}} \prod_{j \in [n]: j \notin \mathcal{M}_2^{(n)}} \indicator{\psi_2^{(j)}\left(u_2^{1:j-1}\right) = u_2(j)} \notag \\
& ~~ \cdot \frac{1}{2^{nR_1}} \prod_{j \in [n]: j \notin \mathcal{M}_1^{(n)}} \indicator{\psi_1^{(j)}\left(u_1^{1:j-1}, u_2^{n}\matbold{G}_n \right) = u_1(j)}\Biggl]. \notag
\end{align}

If the encoder and decoders share randomized maps $\Psi_1^{(j)}$ and $\Psi_{2}^{(j)}$, then the average probability of error is a random quantity determined as follows
\begin{align}
& P_e^{(n)}\left[\{\Psi_1^{(j)}, \Psi_2^{(j)}\}\right] = \notag \\
& ~~ \sum_{\{u_1^{n}, u_2^{n}, y_1^{n}, y_2^{n}\} \in \mathcal{T}} \Biggl[ P_{Y_1^{n}Y_2^{n}\bigl|U_1^{n}U_2^{n}}\bigl(y_1^{n}, y_2^{n}\bigl|u_1^{n}, u_2^{n}\bigl) \notag \\
& ~~ \cdot \frac{1}{2^{nR_2}} \prod_{j \in [n]: j \notin \mathcal{M}_2^{(n)}} \indicator{\Psi_2^{(j)}\left(u_2^{1:j-1}\right) = u_2(j)} \notag \\
& ~~ \cdot \frac{1}{2^{nR_1}} \prod_{j \in [n]: j \notin \mathcal{M}_1^{(n)}} \indicator{\Psi_1^{(j)}\left(u_1^{1:j-1}, u_2^{n}\matbold{G}_n \right) = u_1(j)}\Biggl]. \notag
\end{align}
By averaging over the randomness in the encoders and decoders, the expected block error probability $P_e^{(n)}[\{\Psi_1^{(j)}, \Psi_2^{(j)}\}]$ is upper bounded in the following lemma.
\vspace{0.05in}
\begin{lemma}\label{theorem:ErrorProbSuperposition} Consider the polarization-based superposition code described in Section~\ref{sec:EncodingDecodingSuperposition} and Section~\ref{sec:DecodingSuperposition}. Let $R_1$ and $R_2$ be the broadcast rates selected according to the Bhattacharyya criterion given in Proposition~\ref{thm:RateOfPolarizationSuperposition}. Then for $0 < \beta < 1$ and sufficiently large $n$,
\begin{align}
\mathbb{E}_{\{\Psi_1^{(j)}, \Psi_2^{(j)}\}} \Bigl[ P_e^{(n)}[\{\Psi_1^{(j)}, \Psi_2^{(j)}\}] \Bigl] < 2^{-n^{\beta}}. \notag
\end{align}
\end{lemma}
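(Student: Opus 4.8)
The plan is to follow the template of Lemma~\ref{theorem:ErrorProb}, now with the additional contribution of successive cancelation (SC) decoding at the two receivers. First I would average the expression for $P_e^{(n)}[\{\Psi_1^{(j)},\Psi_2^{(j)}\}]$ over the shared random maps. Since $\Psi_1^{(j)}$ and $\Psi_2^{(j)}$ are, at each fixed argument, independent Bernoulli variables whose bias is exactly the conditional probability appearing in~\eqref{eqn:QUJOINTSuperposition}, the expectation turns each indicator $\indicator{\Psi_i^{(j)}(\cdot)=u_i(j)}$ into the corresponding $P(u_i(j)\mid\cdot)$; together with the $\tfrac{1}{2}$-factors supplied by $\tfrac{1}{2^{nR_1}}$ and $\tfrac{1}{2^{nR_2}}$ at the message coordinates, the product collapses to $Q(u_1^{n},u_2^{n})$. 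Hence
\begin{align}
& \mathbb{E}_{\{\Psi_1^{(j)},\Psi_2^{(j)}\}}\bigl[P_e^{(n)}\bigr] = \notag \\
& \sum_{(u_1^{n},u_2^{n},y_1^{n},y_2^{n})\in\mathcal{T}} Q(u_1^{n},u_2^{n})\,P_{Y_1^{n}Y_2^{n}|U_1^{n}U_2^{n}}(y_1^{n},y_2^{n}|u_1^{n},u_2^{n}). \notag
\end{align}
Appending the common kernel $P_{Y_1^{n}Y_2^{n}|U_1^{n}U_2^{n}}$ to both $Q$ and $P_{U_1^{n}U_2^{n}}$ gives two joint laws whose total variation distance equals that between $Q$ and $P_{U_1^{n}U_2^{n}}$, which Lemma~\ref{lemma:TVBoundSuperposition} bounds by $2^{-n^{\beta}}$. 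Writing $\bar P(\mathcal{T})$ for the probability of $\mathcal{T}$ under the genuine joint law of $(U_1^{n},U_2^{n},Y_1^{n},Y_2^{n})$ (i.e.\ $(U_1^{n},U_2^{n})\sim P_{U_1^{n}U_2^{n}}$ followed by the channel), we therefore get $\mathbb{E}[P_e^{(n)}]\le \bar P(\mathcal{T})+2^{-n^{\beta}}$, and it remains to control $\bar P(\mathcal{T})$.

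For the second step I would use $\mathcal{T}=\mathcal{T}_{1v}\cup\mathcal{T}_{1}\cup\mathcal{T}_{2}$ with the per-coordinate decompositions~\eqref{eqn:Total1vErrorTuple}--\eqref{eqn:TotalErrorTupleSuperposition} and apply the union bound:
\begin{align}
\bar P(\mathcal{T}) \le \sum_{j\in\mathcal{M}_2^{(n)}}\!\bar P(\mathcal{T}_{1v}^{j}) + \sum_{j\in\mathcal{M}_1^{(n)}}\!\bar P(\mathcal{T}_{1}^{j}) + \sum_{j\in\mathcal{M}_2^{(n)}}\!\bar P(\mathcal{T}_{2}^{j}). \notag
\end{align}
Each summand is the error probability of a binary maximum-a-posteriori test for a single coordinate given its (correct) past, so by the standard polar-coding bound relating this error probability to the Bhattacharyya parameter (cf.\ the lemmas of Appendix~\ref{sec:PolarCodingLemmas}), $\bar P(\mathcal{T}_{1v}^{j})\le Z(U_2(j)\mid U_2^{1:j-1},Y_1^{n})$, $\bar P(\mathcal{T}_{1}^{j})\le Z(U_1(j)\mid U_1^{1:j-1},V^{n},Y_1^{n})$, and $\bar P(\mathcal{T}_{2}^{j})\le Z(U_2(j)\mid U_2^{1:j-1},Y_2^{n})$. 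By Definitions~\ref{def:PolarizationSetsSuperposition} and~\ref{def:MessageSetsSuperposition}, $\mathcal{M}_1^{(n)}\subseteq\mathcal{L}^{(n)}_{X|VY_1}$ and $\mathcal{M}_2^{(n)}\subseteq\mathcal{L}^{(n)}_{V|Y_2}$, so the second and third parameters are $\le\delta_n=2^{-n^{\beta}}$; and for $j\in\mathcal{M}_2^{(n)}$, Lemma~\ref{lemma:NestedSetsSuperpostionCoding} gives $j\in\mathcal{M}^{(n)}_{1v}\subseteq\mathcal{L}^{(n)}_{V|Y_1}$, so the first parameter is $\le\delta_n$ as well --- the only place the hypothesis $P_{Y_1|V}(y_1|v)\degBC P_{Y_2|V}(y_2|v)$ enters. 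Since each message set has at most $n$ elements, $\bar P(\mathcal{T})\le 3n\,2^{-n^{\beta}}$, hence $\mathbb{E}[P_e^{(n)}]\le(3n+1)\,2^{-n^{\beta}}$; absorbing the polynomial prefactor into an arbitrarily small reduction of the exponent (equivalently, running the construction with polarization parameter slightly above the target $\beta$) yields the claimed bound $2^{-n^{\beta}}$ for all sufficiently large $n$.

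The step I expect to be the real obstacle --- the rest being bookkeeping --- is justifying that the block error of the \emph{actual} decoders of Section~\ref{sec:DecodingSuperposition}, which recursively feed back their own estimates $\hat u_i^{1:j-1}$ and the reconstruction $\hat v^{n}=\hat u_2^{n}\matbold{G}_n$, lies in the union of the \emph{genie-aided} events $\mathcal{T}_{1v}^{j},\mathcal{T}_{1}^{j},\mathcal{T}_{2}^{j}$, which are phrased in terms of the true past. This is the usual first-error coupling for polar codes: one shows inductively that as long as no genie-aided event has occurred through coordinate $j$, the decoder's prefix equals the transmitted prefix (non-message coordinates being reproduced verbatim via the shared maps $\Psi_i^{(j)}$), so that $\mathcal{D}_1$ recovers all of $v^{n}$ correctly before decoding the $u_1$-layer --- which is exactly what makes the conditioning on $v^{n}=u_2^{n}\matbold{G}_n$ inside $\mathcal{T}_1^{j}$ legitimate --- and the nesting $\mathcal{M}_2^{(n)}\subseteq\mathcal{M}^{(n)}_{1v}$ ensures $\mathcal{D}_1$ can in fact decode every message coordinate of the $v$-layer. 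Handling this coupling without double-counting error contributions across the two layers is the main subtlety of the proof.
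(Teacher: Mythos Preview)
Your proposal is correct and follows essentially the same approach as the paper: average over the random maps to obtain $\sum_{\mathcal{T}} Q(u_1^{n},u_2^{n})\,P_{Y_1^{n}Y_2^{n}|U_1^{n}U_2^{n}}$, add and subtract $P_{U_1^{n}U_2^{n}}$ to split into a total-variation piece (handled by Lemma~\ref{lemma:TVBoundSuperposition}) plus the genuine-law probability $\bar P(\mathcal{T})$, and bound the latter via the union bound and the Bhattacharyya inequalities $\mathcal{E}_{1v}^{j},\mathcal{E}_{1}^{j},\mathcal{E}_{2}^{j}\le Z(\cdot)$, yielding $3n\delta_n$. Your explicit discussion of the first-error coupling (that the shared maps force $\hat u_i^{1:j-1}=u_i^{1:j-1}$ until the first genie-aided error, so the actual decoder's error event is contained in $\mathcal{T}$) spells out a point the paper encodes directly into its definition of $P_e^{(n)}[\{\Psi_1^{(j)},\Psi_2^{(j)}\}]$ as a sum over $\mathcal{T}$, but the argument is otherwise identical.
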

\begin{proof} See Section~\ref{sec:AppendixSuperposition} of the Appendices.
\end{proof}
If the average probability of error decays to zero in expectation over the random maps $\{\Psi_1^{(j)}\}$ and $\{\Psi_2^{(j)}\}$, then there must exist at least one fixed set of maps for which $P_e^{(n)} \rightarrow 0$.


\section{Noisy Broadcast Channels \\ Marton's Coding Scheme}

\subsection{Marton's Inner Bound} For general noisy broadcast channels, Marton's inner bound involves two correlated auxiliary random variables $V_1$ and $V_2$~\cite{marton79}. The intuition behind the coding strategy is to identify two ``virtual'' channels, one from $V_1$ to $Y_1$, and the other from $V_2$ to $Y_2$. Somewhat surprisingly, although the broadcast messages are independent, the auxiliary random variables $V_1$ and $V_2$ may be correlated to increase rates to both receivers. While there exist generalizations of Marton's strategy, the basic version of the inner bound is presented in this section\footnote{In addition, it is difficult even to evaluate Marton's inner bound for general channels due to the need for proper cardinality bounds on the auxiliaries~\cite{gohari2012}. These issues lie outside the scope of the present paper.}.
\vspace{0.05in}
\begin{proposition}[Marton's Inner Bound] For any two-user DM-BC, the rates $(R_1, R_2) \in \mathbb{R}_{+}^{2}$ in the pentagonal region $\mathfrak{R}(X, V_1, V_2, Y_1, Y_2)$ are achievable where
\begin{align}
\mathfrak{R}(X, V_1, V_2, & Y_1, Y_2) \triangleq \notag \\
\Bigl\{R_1, R_2~\Bigl|~ R_1 & \leq I(V_1; Y_1), \notag \\
R_2 & \leq I(V_2; Y_2), \notag \\
R_1 + R_2 & \leq I(V_1; Y_1) + I(V_2; Y_2) - I(V_1; V_2) \Bigl\}. \label{eqn:MartonRateRegion}
\end{align}
and where $X, V_1, V_2, Y_1, Y_2$ have a joint distribution given by $P_{V_1V_2}(v_1,v_2)P_{X|V_1V_2}(x|v_1, v_2)P_{Y_1Y_2|X}(y_1, y_2|x)$.
\end{proposition}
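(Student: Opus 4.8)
The plan is to establish achievability by the standard random-coding argument of Marton~\cite{marton79}, using independent subcodebooks for the two auxiliaries together with joint-typicality encoding and decoding; the treatment parallels the exposition in~\cite{elgamalkim2010}. Fix the joint distribution $P_{V_1V_2}(v_1,v_2)P_{X|V_1V_2}(x|v_1,v_2)P_{Y_1Y_2|X}(y_1,y_2|x)$ and a small $\epsilon>0$, and introduce nonnegative auxiliary (resolvability) rates $\tilde R_1$ and $\tilde R_2$. For each $m_1\in[2^{nR_1}]$ draw $2^{n\tilde R_1}$ sequences $v_1^n(m_1,l_1)$, $l_1\in[2^{n\tilde R_1}]$, i.i.d.\ from $\prod_{i=1}^n P_{V_1}(v_{1,i})$, and for each $m_2\in[2^{nR_2}]$ draw $2^{n\tilde R_2}$ sequences $v_2^n(m_2,l_2)$ i.i.d.\ from $\prod_{i=1}^n P_{V_2}(v_{2,i})$, all mutually independent. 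This simultaneously encodes the messages and reserves ``extra'' indices $l_1,l_2$ that will be used to correlate the two auxiliaries.

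To send $(m_1,m_2)$ the encoder searches its two subcodebooks for a pair of indices $(l_1,l_2)$ such that $\bigl(v_1^n(m_1,l_1),v_2^n(m_2,l_2)\bigr)$ is jointly $\epsilon$-typical, and then transmits $x^n$ with $x_i=x\bigl(v_{1,i}(m_1,l_1),v_{2,i}(m_2,l_2)\bigr)$; by the mutual covering lemma such a pair exists with probability tending to $1$ provided $\tilde R_1+\tilde R_2>I(V_1;V_2)+\delta(\epsilon)$. Receiver~$1$ outputs the unique $\hat m_1$ for which $\bigl(v_1^n(\hat m_1,l_1),y_1^n\bigr)$ is jointly typical for some $l_1$, and receiver~$2$ proceeds analogously with $Y_2$. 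A routine error analysis — union bound over incorrect messages and inner indices, together with the joint-typicality (packing) lemma, plus the conditional typicality/Markov lemma for the event that the chosen codeword is atypical — shows the error probability at receiver~$1$ vanishes when $R_1+\tilde R_1<I(V_1;Y_1)-\delta(\epsilon)$ and at receiver~$2$ when $R_2+\tilde R_2<I(V_2;Y_2)-\delta(\epsilon)$.

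Collecting the surviving constraints $R_1+\tilde R_1<I(V_1;Y_1)$, $R_2+\tilde R_2<I(V_2;Y_2)$, $\tilde R_1+\tilde R_2>I(V_1;V_2)$ and $\tilde R_1,\tilde R_2\ge 0$ (and letting $\epsilon\to 0$ so the $\delta(\epsilon)$ slack disappears), I would eliminate $\tilde R_1$ and $\tilde R_2$ by Fourier--Motzkin: pairing each upper bound with each relevant lower bound yields exactly $R_1<I(V_1;Y_1)$, $R_2<I(V_2;Y_2)$ and $R_1+R_2<I(V_1;Y_1)+I(V_2;Y_2)-I(V_1;V_2)$, i.e.\ the pentagon $\mathfrak{R}(X,V_1,V_2,Y_1,Y_2)$; a standard limiting argument extends achievability to the closure. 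The main obstacle is the mutual covering step: because the typicality events for index pairs $(l_1,l_2)$ sharing a first or a second coordinate are positively correlated, a first-moment/union bound is insufficient, and one must instead control the second moment of the number of jointly typical pairs and invoke the Paley--Zygmund (second-moment) inequality to ensure at least one such pair exists with high probability.
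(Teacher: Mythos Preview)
Your argument is the standard random-coding proof of Marton's inner bound and is correct as written; the identification of the mutual covering step as the technical crux, together with the second-moment (Paley--Zygmund) treatment of the correlated typicality events, is exactly the right way to close that gap, and the Fourier--Motzkin elimination of the resolvability rates $\tilde R_1,\tilde R_2$ is carried out correctly.

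However, there is nothing to compare against: the paper does not supply its own proof of this proposition. Marton's inner bound is stated here as a known classical result, attributed to~\cite{marton79} and accompanied only by a remark about the deterministic-$\phi$ reduction and the semi-deterministic case. The paper's original contribution is not a new proof of the proposition but rather Theorem~\ref{thm:MartonCode}, which constructs explicit polar codes achieving corner points of the Marton pentagon under an additional degradation alignment condition $P_{Y_2|V_2}\succ P_{V_1|V_2}$. So your random-coding sketch is correct and matches the standard textbook treatment (e.g.,~\cite{elgamalkim2010}), but it is reproving a cited background result rather than engaging with any argument the paper itself provides.
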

\vspace{0.1in}
\begin{remark} It can be shown that for Marton's inner bound there is no loss of generality if $P_{X|V_1V_2}(x|v_1,v_2) = \indicator{x = \phi(v_1, v_2)}$ where $\phi(v_1, v_2)$ is a deterministic function~\cite[Section~8.3]{elgamalkim2010}. Thus, by allowing a larger alphabet size for the auxiliaries, $X$ may be a deterministic function of auxiliaries $(V_1, V_2)$. Marton's inner bound is tight for the class of \emph{semi-deterministic} DM-BCs for which one of the outputs is a deterministic function of the input.
\end{remark}
\vspace{0.1in}
\begin{figure*}[t]
\begin{center}


\scalebox{0.84} 
{
\begin{pspicture}(0,-2.6)(16.615,1.6)

\rput(0, -0.62) {

\definecolor{color3775}{rgb}{0.0,0.0,0.2}
\definecolor{color3781}{rgb}{0.0,0.2,0.2}
\psframe[linewidth=0.04,dimen=outer](4.7,1.2)(3.5,0.6)
\usefont{T1}{ptm}{m}{n}
\rput(4.1028123,0.91){$\matbold{G}_n$}
\usefont{T1}{ptm}{m}{n}
\rput(8.7579685,0.31){$X^n$}
\psframe[linewidth=0.04,dimen=outer](12.4,1.3)(9.2,-1.3)
\usefont{T1}{ptm}{m}{n}
\rput(10.749531,0.01){$P_{Y_1 Y_2 | X}(y_1, y_2 | x)$}
\psline[linewidth=0.03cm,arrowsize=0.05291667cm 2.0,arrowlength=1.4,arrowinset=0.4]{->}(0.0,0.9)(1.5,0.9)
\psline[linewidth=0.03cm,arrowsize=0.05291667cm 2.0,arrowlength=1.4,arrowinset=0.4]{->}(8.2,0.0)(9.2,0.0)
\psline[linewidth=0.03cm,arrowsize=0.05291667cm 2.0,arrowlength=1.4,arrowinset=0.4]{->}(12.4,1.0)(14.0,1.0)
\psline[linewidth=0.03cm,arrowsize=0.05291667cm 2.0,arrowlength=1.4,arrowinset=0.4]{->}(12.4,-1.1)(14.0,-1.1)
\psframe[linewidth=0.04,dimen=outer](15.2,-0.7)(14.0,-1.4)
\usefont{T1}{ptm}{m}{n}
\rput(14.609375,-1.09){$\mathcal{D}_2$}
\usefont{T1}{ptm}{m}{n}
\rput(13.0975,1.31){$Y_1^n$}
\usefont{T1}{ptm}{m}{n}
\rput(13.176875,-0.79){$Y_2^n$}
\usefont{T1}{ptm}{m}{n}
\rput(0.42875,1.21){$W_1$}
\usefont{T1}{ptm}{m}{n}
\rput(16.036562,1.31){$\hat{W}_1$}
\psline[linewidth=0.03cm,arrowsize=0.05291667cm 2.0,arrowlength=1.4,arrowinset=0.4]{->}(15.2,-1.1)(16.6,-1.1)
\usefont{T1}{ptm}{m}{n}
\rput(16.036562,-0.79){$\hat{W}_2$}
\usefont{T1}{ptm}{m}{n}
\rput(4.1028123,-0.89){$\matbold{G}_n$}
\psline[linewidth=0.03cm,arrowsize=0.05291667cm 2.0,arrowlength=1.4,arrowinset=0.4]{->}(0.0,-0.9)(1.5,-0.9)
\usefont{T1}{ptm}{m}{n}
\rput(0.42875,-0.59){$W_2$}
\psframe[linewidth=0.04,dimen=outer](8.2,0.4)(5.8,-0.4)
\usefont{T1}{ptm}{m}{n}
\rput(6.970156,0.01){$x = \phi(v_1, v_2)$}
\psline[linewidth=0.03cm](4.7,0.9)(5.5,0.9)
\psline[linewidth=0.03cm](5.5,0.9)(5.5,0.2)
\psline[linewidth=0.03cm,arrowsize=0.05291667cm 2.0,arrowlength=1.4,arrowinset=0.4]{->}(5.5,0.2)(5.8,0.2)
\psline[linewidth=0.03cm](4.7,-0.9)(5.5,-0.9)
\psline[linewidth=0.03cm](5.5,-0.9)(5.5,-0.2)
\psline[linewidth=0.03cm,arrowsize=0.05291667cm 2.0,arrowlength=1.4,arrowinset=0.4]{->}(5.5,-0.2)(5.8,-0.2)
\usefont{T1}{ptm}{m}{n}
\rput(5.0776563,1.21){$V_1^n$}
\usefont{T1}{ptm}{m}{n}
\rput(5.0776563,-0.59){$V_2^n$}
\psframe[linewidth=0.04,linecolor=color3775,linestyle=solid,framearc=0.2,dimen=outer](8.4,1.6)(1.0,-1.6)
\psframe[linewidth=0.04,dimen=outer](4.7,-0.6)(3.5,-1.2)
\psframe[linewidth=0.04,dimen=outer](15.2,1.4)(14.0,0.7)
\usefont{T1}{ptm}{m}{n}
\rput(14.594844,1.01){$\mathcal{D}_1$}
\psline[linewidth=0.03cm,arrowsize=0.05291667cm 2.0,arrowlength=1.4,arrowinset=0.4]{->}(15.2,1.0)(16.6,1.0)
\psframe[linewidth=0.04,linecolor=color3781,linestyle=solid,framearc=0.2,dimen=outer](15.5,1.6)(13.7,0.5)
\psframe[linewidth=0.04,linecolor=color3781,linestyle=solid,framearc=0.2,dimen=outer](15.5,-0.5)(13.7,-1.6)
\psframe[linewidth=0.04,dimen=outer](2.7,-0.6)(1.5,-1.2)
\psframe[linewidth=0.04,dimen=outer](2.7,1.2)(1.5,0.6)
\psline[linewidth=0.03cm,arrowsize=0.05291667cm 2.0,arrowlength=1.4,arrowinset=0.4]{->}(2.7,0.9)(3.5,0.9)
\psline[linewidth=0.03cm,arrowsize=0.05291667cm 2.0,arrowlength=1.4,arrowinset=0.4]{->}(2.7,-0.9)(3.5,-0.9)
\usefont{T1}{ptm}{m}{n}
\rput(3.0882812,1.21){$U_1^n$}
\usefont{T1}{ptm}{m}{n}
\rput(3.0882812,-0.59){$U_2^n$}
\usefont{T1}{ptm}{m}{n}
\rput(2.0742188,0.91){$\mathcal{E}_1$}
\usefont{T1}{ptm}{m}{n}
\rput(2.08875,-0.89){$\mathcal{E}_2$}
\psline[linewidth=0.02cm,linestyle=dashed,dash=0.16cm 0.16cm](2.2,0.2)(5.5,0.2)
\psline[linewidth=0.02cm,linestyle=dashed,dash=0.16cm 0.16cm,arrowsize=0.05291667cm 2.0,arrowlength=1.4,arrowinset=0.4]{->}(2.2,0.2)(2.2,-0.6)
\psline[linewidth=0.02cm,linestyle=dashed,dash=0.16cm 0.16cm](2.0,-0.2)(5.5,-0.2)
\psline[linewidth=0.02cm,linestyle=dashed,dash=0.16cm 0.16cm,arrowsize=0.05291667cm 2.0,arrowlength=1.4,arrowinset=0.4]{<-}(2.0,0.6)(2.0,-0.2)

}

\end{pspicture}
}

\end{center}\vspace{-0.05in}
\caption{ Block diagram of a polarization-based Marton code for a two-user noisy broadcast channel. } \label{fig:MartonCoding}
\end{figure*}
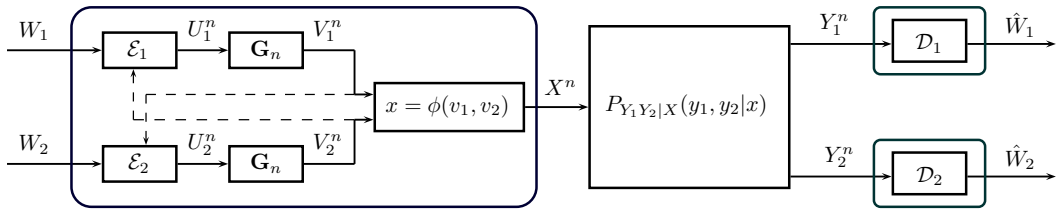
\subsection{Main Result}

\begin{theorem}[Polarization-Based Marton Code]\label{thm:MartonCode} Consider any two-user DM-BC with arbitrary input and output alphabets. There exist sequences of polar broadcast codes over $n$ channel uses which achieve the following rate region
\begin{align}
\mathfrak{R}(V_1, V_2, X, Y_1, Y_2) \triangleq \notag \\
\Bigl\{R_1, R_2~\Bigl|~ & R_1 \leq I(V_1; Y_1), \notag \\
& R_2 \leq I(V_2; Y_2) - I(V_1; V_2) \Bigl\}, \label{eqn:MartonAchievable1}
\end{align}
where random variables $V_1, V_2, X, Y_1, Y_2$ have the following listed properties:
\begin{itemize}
\item $V_1$ and $V_2$ are binary random variables.
\item $P_{Y_2|V_2}(y_2|v_2) \degBC P_{V_1|V_2}(v_1|v_2)$.
\item For a deterministic function $\phi: \{0,1\}^{2} \rightarrow \mathcal{X}$, the joint distribution of all random variables is given by
\begin{align}
& P_{V_1V_2XY_1Y_2}(v_1, v_2, x, y_1, y_2) = \notag \\
& ~~~~~~P_{V_1V_2}(v_1, v_2)\indicator{x = \phi(v_1, v_2)}P_{Y_1Y_2|X}(y_1, y_2 | x). \notag
\end{align}
\end{itemize}
For $0 < \beta < \frac{1}{2}$, the average error probability of this code sequence decays as $P_e^{(n)} = \mathcal{O}(2^{-n^{\beta}})$. The complexity of encoding and decoding is $\mathcal{O}(n \log n)$.
\end{theorem}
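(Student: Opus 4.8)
The plan is to follow the same template as the proof of Theorem~\ref{thm:CoverCode}, the one genuinely new ingredient being a Marton-style alignment of polarization indices forced by the hypothesis $P_{Y_2|V_2}\degBC P_{V_1|V_2}$. First I would fix the $i.i.d.$ sequence $(V_1^{j},V_2^{j},X^{j},Y_1^{j},Y_2^{j})\sim P_{V_1V_2}(v_1,v_2)\,\indicator{x=\phi(v_1,v_2)}\,P_{Y_1Y_2|X}(y_1,y_2|x)$, stack $(V_1^{n},V_2^{n})$ into a $2\times n$ matrix with $V_1$ in the first row (it is encoded first, and $V_2$ is binned against it), and set $U_1^{n}=V_1^{n}\matbold{G}_n$, $U_2^{n}=V_2^{n}\matbold{G}_n$. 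The joint law factors as $P_{U_1^{n}U_2^{n}}(u_1^{n},u_2^{n})=\prod_{j=1}^{n}P(u_1(j)|u_1^{1:j-1})\cdot\prod_{j=1}^{n}P(u_2(j)|u_2^{1:j-1},v_1^{n})$ with $v_1^{n}=u_1^{n}\matbold{G}_n$, the same conditional-chain structure used in Section~\ref{section:ProofOfCoverPolarCode}, so all these conditionals and likelihoods are computable by Ar\i kan's recursion in $\mathcal{O}(n\log n)$. With $\delta_n=2^{-n^{\beta}}$ I would then introduce the polarization sets $\mathcal{H}^{(n)}_{V_1}$, $\mathcal{L}^{(n)}_{V_1|Y_1}$ (from $Z(U_1(j)|U_1^{1:j-1})$ and $Z(U_1(j)|U_1^{1:j-1},Y_1^{n})$) and $\mathcal{H}^{(n)}_{V_2}$, $\mathcal{H}^{(n)}_{V_2|V_1}$, $\mathcal{L}^{(n)}_{V_2|V_1}$, $\mathcal{L}^{(n)}_{V_2|Y_2}$ (from $Z(U_2(j)|U_2^{1:j-1})$, $Z(U_2(j)|U_2^{1:j-1},V_1^{n})$, $Z(U_2(j)|U_2^{1:j-1},Y_2^{n})$); the message sets $\mathcal{M}^{(n)}_1\triangleq\mathcal{H}^{(n)}_{V_1}\cap\mathcal{L}^{(n)}_{V_1|Y_1}$ and $\mathcal{M}^{(n)}_2\triangleq\mathcal{H}^{(n)}_{V_2|V_1}\cap\mathcal{L}^{(n)}_{V_2|Y_2}$; and the \emph{binning} set $\mathcal{B}^{(n)}\triangleq\mathcal{H}^{(n)}_{V_2}\cap\mathcal{L}^{(n)}_{V_2|V_1}$ of the second row. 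Source polarization (Proposition~\ref{thm:SrcPolarization} and Proposition~\ref{thm:RateOfPolarization}), applied to $V_1^{n}$ with side information $Y_1^{n}$ and to $V_2^{n}$ with side information $V_1^{n}$ and, separately, $Y_2^{n}$, then yields $\tfrac1n|\mathcal{M}^{(n)}_1|\ge I(V_1;Y_1)-\tau$ and $\tfrac1n|\mathcal{M}^{(n)}_2|\ge H(V_2|V_1)-H(V_2|Y_2)-\tau=I(V_2;Y_2)-I(V_1;V_2)-\tau$, which are exactly the claimed rates.

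The step I expect to be the main obstacle --- and which is precisely the ``proper alignment of polarization indices'' flagged in the abstract --- is the Marton counterpart of Lemma~\ref{lemma:NestedSetsSuperpostionCoding}: under $P_{Y_2|V_2}\degBC P_{V_1|V_2}$ one has $Z(U_2(j)\,|\,U_2^{1:j-1},Y_2^{n})\le Z(U_2(j)\,|\,U_2^{1:j-1},V_1^{n})$ for every $j\in[n]$, hence $\mathcal{L}^{(n)}_{V_2|V_1}\subseteq\mathcal{L}^{(n)}_{V_2|Y_2}$ and, for $n$ large, $[n]\setminus\mathcal{L}^{(n)}_{V_2|Y_2}\subseteq\mathcal{H}^{(n)}_{V_2|V_1}$. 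The first inclusion is the load-bearing one: it guarantees that every binning index --- at which the encoder is \emph{forced} to set $u_2(j)$ as a function of $v_1^{n}$ in order to preserve the correct $(V_1^{n},V_2^{n})$ joint law --- is decodable at receiver~2 from $y_2^{n}$ alone, so receiver~2 neither knows nor needs $v_1^{n}$; the second inclusion also ensures $H(V_2|Y_2)\le H(V_2|V_1)$, so the rate of the previous paragraph is nonnegative. The proof of the inequality is routine once set up correctly: conditioned on $V_2^{n}$, the tuples $Y_2^{n}$ and $V_1^{n}$ are each $i.i.d.$ with the single-letter laws $P_{Y_2|V_2}$ and $P_{V_1|V_2}$, so the hypothesis is exactly a stochastic-degradation ordering of these two side-information channels; this ordering is preserved by the one-step polar combine/split operation (Lemma~\ref{lemma:DegradationBhatt}) and hence, by induction over the $\log_2 n$ levels of the recursion (Lemma~\ref{lemma:SuccessiveDegradationBhatt}), by the whole transform $\matbold{G}_n$.

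It then remains to write down the code and bound the error, both as in Section~\ref{sec:EncodingDecodingSuperposition} and Section~\ref{sec:DecodingSuperposition}. The encoder forms $u_1^{n}$ by placing $W_1$ on $\mathcal{M}^{(n)}_1$ and the shared deterministic/randomized maps on the remaining indices, sets $v_1^{n}=u_1^{n}\matbold{G}_n$, forms $u_2^{n}$ by placing $W_2$ on $\mathcal{M}^{(n)}_2$, drawing $u_2(j)\sim P(u_2(j)|u_2^{1:j-1},v_1^{n})$ on $\mathcal{B}^{(n)}$, and placing fixed shared values on the remaining indices of $[n]\setminus\mathcal{L}^{(n)}_{V_2|Y_2}$, and transmits $x(j)=\phi(v_1(j),v_2(j))$ --- in contrast to Theorem~\ref{thm:DetDMBC} there is no encoder error event here. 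Receiver~1 runs successive cancelation on $\mathcal{L}^{(n)}_{V_1|Y_1}$ from $y_1^{n}$ and reads $\hat W_1$ off $\mathcal{M}^{(n)}_1$; receiver~2 runs successive cancelation on $\mathcal{L}^{(n)}_{V_2|Y_2}\supseteq\mathcal{M}^{(n)}_2\cup\mathcal{B}^{(n)}$ from $y_2^{n}$ only and reads $\hat W_2$ off $\mathcal{M}^{(n)}_2$. Letting $Q$ be the law induced on $(u_1^{n},u_2^{n})$ by the randomized encoder (value $\tfrac12$ on the perturbed indices of each row, equal to $P$ elsewhere), I would prove $\sum_{u_1^{n},u_2^{n}}\bigl|P_{U_1^{n}U_2^{n}}(u_1^{n},u_2^{n})-Q(u_1^{n},u_2^{n})\bigr|\le 2^{-n^{\beta}}$ exactly as in Lemma~\ref{lemma:TVBoundSuperposition}, via the chain rule for the Kullback--Leibler divergence (each per-index contribution being small because $Z\ge 1-\delta_n$ forces the true conditional to be near-uniform, by Lemma~\ref{lemma:ClosenessOfHandZOne}) followed by Pinsker's inequality. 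Finally $\mathbb{E}_{\{\Psi\}}[P_e^{(n)}]$ is at most the $P$-probability of the successive-cancelation error set plus this total-variation distance, and the former is bounded by $\sum_j Z(U_i(j)|U_i^{1:j-1},Y_i^{n})\le 2n\delta_n$ over the decoded indices (all lying in an $\mathcal{L}^{(n)}$ set); hence $\mathbb{E}_{\{\Psi\}}[P_e^{(n)}]=\mathcal{O}(2^{-n^{\beta}})$, averaging produces a fixed deterministic-map code with the same decay, and the $\mathcal{O}(n\log n)$ complexity is inherited from the recursive likelihood computation. Apart from the alignment step of the second paragraph, the argument is a transcription of the superposition-coding proof.
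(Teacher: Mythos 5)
Your overall skeleton matches the paper's: the same polar transform $U_1^n=V_1^n\matbold{G}_n$, $U_2^n=V_2^n\matbold{G}_n$ with the row-1-first chain-rule factorization; the same polarization sets $\mathcal{H}^{(n)}_{V_1},\mathcal{L}^{(n)}_{V_1|Y_1},\mathcal{H}^{(n)}_{V_2|V_1},\mathcal{L}^{(n)}_{V_2|V_1},\mathcal{L}^{(n)}_{V_2|Y_2}$ and message sets $\mathcal{M}^{(n)}_1=\mathcal{H}^{(n)}_{V_1}\cap\mathcal{L}^{(n)}_{V_1|Y_1}$, $\mathcal{M}^{(n)}_2=\mathcal{H}^{(n)}_{V_2|V_1}\cap\mathcal{L}^{(n)}_{V_2|Y_2}$; the same alignment lemma $Z(U_2(j)\,|\,U_2^{1:j-1},Y_2^n)\le Z(U_2(j)\,|\,U_2^{1:j-1},V_1^n)$ from $P_{Y_2|V_2}\degBC P_{V_1|V_2}$ via Lemmas~\ref{lemma:DegradationBhatt}--\ref{lemma:SuccessiveDegradationBhatt}; and the same TV-bound-plus-SC-error decomposition. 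So you have correctly identified the new ingredient (alignment) and carried over the superposition-coding machinery.

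There is, however, a genuine gap, concentrated exactly in your closing sentence that ``apart from the alignment step, the argument is a transcription of the superposition-coding proof.'' That is not true, and the missing piece is the one the paper devotes Section~\ref{sec:PartialPolarIndices} and the final subsection of Section~\ref{section:ProofOfMartonPolarCode} to: the \emph{partially-polarized} indices $\Delta_1=[n]\setminus(\mathcal{H}^{(n)}_{V_2|V_1}\cup\mathcal{L}^{(n)}_{V_2|V_1})$ and $\Delta_2=[n]\setminus(\mathcal{H}^{(n)}_{V_2|Y_2}\cup\mathcal{L}^{(n)}_{V_2|Y_2})$. In the superposition proof this issue never arises because decoder~$\mathcal{D}_1$ reconstructs every non-message bit of $u_2^n$ by replaying the \emph{same} shared map $\Psi_2^{(j)}(u_2^{1:j-1})$ used at the encoder, with no dependence on $v^n$. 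In the Marton scheme, by contrast, the encoder must draw $u_2(j)$ from $P(u_2(j)\,|\,u_2^{1:j-1},v_1^n)$ on the entire complement $[n]\setminus\mathcal{H}^{(n)}_{V_2|V_1}=\mathcal{L}^{(n)}_{V_2|V_1}\cup\Delta_1$ to preserve the joint law, while $\mathcal{D}_2$ has no access to $v_1^n$ and can only reliably SC-decode on $\mathcal{L}^{(n)}_{V_2|Y_2}$. The alignment gives $\mathcal{L}^{(n)}_{V_2|V_1}\subseteq\mathcal{L}^{(n)}_{V_2|Y_2}$, but the indices in $(\mathcal{L}^{(n)}_{V_2|V_1}\cup\Delta_1)\cap\Delta_2$ are encoder-determined by $v_1^n$ yet \emph{not} SC-decodable from $y_2^n$. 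Your stated containment ``$[n]\setminus\mathcal{L}^{(n)}_{V_2|Y_2}\subseteq\mathcal{H}^{(n)}_{V_2|V_1}$ for $n$ large'' is false precisely because of $\Delta_2$; only $\mathcal{H}^{(n)}_{V_2|Y_2}\subseteq\mathcal{H}^{(n)}_{V_2|V_1}$ holds. The paper resolves this by having the encoder record $u_2(j)$ for $j\in\Delta_2$ and deliver those $o(n)$ bits to $\mathcal{D}_2$ in a second communication phase (the ``genie-aided bits''), incurring a rate penalty $\mathcal{O}(\eta)$ that vanishes since $|\Delta_1\cup\Delta_2|/n\to0$. Without this step the error analysis is incomplete.

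A secondary, smaller inaccuracy: your row-2 encoding rule (message bits on $\mathcal{M}^{(n)}_2$, conditional draws on $\mathcal{B}^{(n)}=\mathcal{H}^{(n)}_{V_2}\cap\mathcal{L}^{(n)}_{V_2|V_1}$, ``fixed shared values'' on $[n]\setminus\mathcal{L}^{(n)}_{V_2|Y_2}$) does not cover $\mathcal{L}^{(n)}_{V_2|V_1}\setminus\mathcal{H}^{(n)}_{V_2}$ or the $\Delta_1$ indices, and ``fixed shared values'' is not what the paper (or the TV-bound argument) needs on the high-entropy side: to make $D(P\|Q)$ small one must put Bernoulli$(\tfrac12)$ bits on \emph{all} of $\mathcal{H}^{(n)}_{V_2|V_1}$ (the paper's $\Gamma$ map on $\mathcal{H}^{(n)}_{V_2|V_1}\setminus\mathcal{M}^{(n)}_2$), and conditional draws $\Psi_2^{(j)}(\cdot,v_1^n)$ on everything else. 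The extra set $\mathcal{H}^{(n)}_{V_2}$ plays no role in the paper's construction.
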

\vspace{0.05in}
\begin{remark} The listed property $P_{Y_2|V_2}(y_2|v_2) \degBC P_{V_1|V_2}(v_1|v_2)$ is necessary in the proof due to polarization-based codes requiring an \emph{alignment} of polarization indices. The property is a \emph{natural} restriction since it also implies that $I(Y_2; V_2) > I(V_1; V_2)$ so that $R_2 > 0$. However, certain joint distributions on random variables are not permitted using the analysis of polarization presented here. It is not clear whether a different approach obviates the need for an alignment of indices.
\end{remark}
\begin{remark} By symmetry, the rate tuple $(R_1, R_2) = (I(V_1; Y_1) - I(V_1; V_2), I(V_2, Y_2))$ is achievable with low-complexity codes under similar constraints on the joint distribution of $V_1, V_2, X, Y_1, Y_2$. The rate tuple is a corner point of the pentagonal rate region of Marton's inner bound given in~\eqref{eqn:MartonRateRegion}.
\end{remark}

\section{Proof of Theorem~\ref{thm:MartonCode}}\label{section:ProofOfMartonPolarCode}

The block diagram for polarization-based Marton coding is given in Figure~\ref{fig:MartonCoding}. Marton's strategy differs form Cover's superposition coding with the presence of two auxiliaries and the function $\phi(v_1, v_2)$ which forms the codeword symbol-by-symbol. The polar transform is applied to each $n$-length $i.i.d.$ sequence of auxiliary random variables.

\subsection{Polar Transform}\label{sec:PolarTransformMarton}

Consider the $i.i.d.$ sequence of random variables $(V_1^{j}, V_2^{j}, X^{j}, Y_1^{j}, Y_2^{j}) \sim P_{V_1V_2}(v_1,v_2)P_{X|V_1V_2}(x|v_1,v_2)P_{Y_1Y_2|X}(y_1, y_2|x)$ where the index $j \in [n]$. For the particular coding strategy analyzed in this section, $P_{X|V_1V_2}(x|v_1,v_2) = \indicator{x = \phi(v_1, v_2)}$. Let the $n$-length sequence of auxiliary variables $(V_1^{j}, V_2^{j})$ be organized into the random matrix
\vspace{0.1in}
\begin{align}
\matbold{\Omega} & \triangleq \left[\begin{array}{ccccc} V_1^{1} & V_1^{2} & V_1^{3} & \ldots & V_1^{n} \\ V_2^{1} & V_2^{2} & V_2^{3} & \ldots & V_2^{n} \end{array}\right]. \label{eqn:AMatMarton}
\end{align}
Applying the polar transform to $\matbold{\Omega}$ results in the random matrix $\matbold{U} \triangleq \matbold{\Omega}\matbold{G}_n$. Index the random variables of $\matbold{U}$ as follows:
\begin{align}
\matbold{U} & = \left[\begin{array}{ccccc} U_1^{1} & U_1^{2} & U_1^{3} & \ldots & U_1^{n} \\ U_2^{1} & U_2^{2} & U_2^{3} & \ldots & U_2^{n} \end{array}\right]. \label{eqn:UMatMarton}
\end{align}
The above definitions are consistent with the block diagram given in Figure~\ref{fig:MartonCoding} (and noting that $\matbold{G}_n = \matbold{G}_n^{-1}$). The polar transform extracts the randomness of $\matbold{\Omega}$. In the transformed domain, the joint distribution of the variables in $\matbold{U}$ is given by
\begin{align}
& P_{U_1^{n}U_2^{n}}\bigl(u_1^{n}, u_2^{n}\bigl) \triangleq P_{V_1^{n}V_2^{n}}\bigl(u_1^{n}\matbold{G}_n, u_2^{n}\matbold{G}_n\bigl). \label{eqn:U1U2JointMarton}
\end{align}
However, for polar coding purposes, the joint distribution is decomposed as follows,
\begin{align}
& P_{U_1^{n}U_2^{n}}\bigl(u_1^{n}, u_2^{n}\bigl) = P_{U_1^{n}}(u_1^{n})P_{U_2^{n}|U_1^{n}}\bigl(u_2^{n}\bigl|u_1^{n}\bigl) \notag \\
& = \prod_{j=1}^{n} P\bigl(u_1(j) \bigl| u_1^{1:j-1}\bigl)P\bigl(u_2(j) \bigl| u_2^{1:j-1}, u_1^{n}\bigl).
\label{eqn:U1U2DecomposedDistributionMarton}
\end{align}
The above conditional distributions may be computed efficiently using recursive protocols. The polarized random variables of $\matbold{U}$ do \emph{not} have an $i.i.d.$ distribution.


\subsection{Effective Channel}

Marton's achievable strategy establishes virtual channels for the two receivers via the function $\phi(v_1, v_2)$. The virtual channel is given by
\begin{align}
P_{Y_1Y_2|V_1V_2}^{\phi}\Bigl(y_1, y_2 \Bigl| v_1, v_2\Bigl) & \triangleq P_{Y_1Y_2|X}\Bigl(y_1, y_2 \Bigl| \phi\bigl(v_1, v_2\bigl)\Bigl). \notag
\end{align}
Due to the memoryless property of the DM-BC, the effective channel between auxiliaries and channel outputs is given by
\begin{align}
& P^{\phi}_{Y_1^{n}Y_2^{n}|V_1^{n}V_2^{n}}\Bigl(y_1^{n}, y_2^{n} \Bigl| v_1^{n}, v_2^{n}\Bigl) \triangleq \notag \\
& ~~~~~~~~~~~~ \prod_{i=1}^{n}P_{Y_1Y_2|X}\Bigl(y_1(i), y_2(i) \Bigl| \phi\bigl(v_1(i), v_2(i)\bigl)\Bigl). \notag
\end{align}
The polarization-based Marton code establishes a different effective channel between polar-transformed auxiliaries and the channel outputs. The effective polarized channel is
\begin{align}
& P^{\phi}_{Y_1^{n}Y_2^{n}|U_1^{n}U_2^{n}}\Bigl(y_1^{n}, y_2^{n} \Bigl| u_1^{n}, u_2^{n}\Bigl) \triangleq \notag \\ & ~~~~~~~~~~~~~~ P^{\phi}_{Y_1^{n}Y_2^{n}|V_1^{n}V_2^{n}}\Bigl(y_1^{n}, y_2^{n} \Bigl| u_1^{n}\matbold{G}_n, u_2^{n}\matbold{G}_n\Bigl). \label{eqn:EffectivePolarizedChannelMarton}
\end{align}
\subsection{Polarization Theorems Revisited}
\vspace{0.1in}
\begin{figure*}[t]
\begin{center}


\scalebox{0.692} 
{
\begin{pspicture}(0,-4.8039064)(13.038438,4.8039064)
\definecolor{color3184b}{rgb}{0.8431372549019608,0.403921568627451,0.403921568627451}
\definecolor{color3184}{rgb}{0.6,0.0,0.0}
\definecolor{color3217}{rgb}{0.0,0.2,0.2}
\definecolor{color3221b}{rgb}{0.0,0.4,0.0}
\definecolor{color3230b}{rgb}{0.6,0.6,0.6}
\definecolor{color308b}{rgb}{0.8,0.8,0.8}
\psframe[linewidth=0.02,dimen=outer,fillstyle=solid,fillcolor=black](1.6,1.9539063)(0.8,1.1539062)
\psframe[linewidth=0.06,linecolor=color3184,dimen=outer,fillstyle=solid,fillcolor=color3184b](11.21,2.7339063)(10.39,-2.4660938)
\psframe[linewidth=0.06,linecolor=color3184,dimen=outer,fillstyle=solid,fillcolor=color3184b](8.81,2.7339063)(7.99,-2.4660938)
\psframe[linewidth=0.06,linecolor=color3184,dimen=outer,fillstyle=solid,fillcolor=color3184b](7.21,2.7339063)(6.39,-2.4660938)
\psframe[linewidth=0.06,linecolor=color3184,dimen=outer,fillstyle=solid,fillcolor=color3184b](5.61,2.7339063)(4.79,-2.4660938)
\psline[linewidth=0.035cm,linestyle=dashed,dash=0.16cm 0.16cm,arrowsize=0.05291667cm 2.0,arrowlength=1.4,arrowinset=0.4]{<->}(0.02,2.3339062)(12.82,2.3339062)
\usefont{T1}{ptm}{m}{n}
\rput(12.908125,2.6239061){$n$}
\psframe[linewidth=0.02,dimen=outer,fillstyle=solid](0.8,1.9539063)(0.0,1.1539062)
\psframe[linewidth=0.02,dimen=outer,fillstyle=solid,fillcolor=black](3.2,1.9539063)(2.4,1.1539062)
\psframe[linewidth=0.02,dimen=outer,fillstyle=solid](2.4,1.9539063)(1.6,1.1539062)
\psframe[linewidth=0.02,dimen=outer,fillstyle=solid](4.0,1.9539063)(3.2,1.1539062)
\psframe[linewidth=0.02,dimen=outer,fillstyle=solid,fillcolor=color308b](4.8,1.9539063)(4.0,1.1539062)
\psframe[linewidth=0.02,dimen=outer,fillstyle=solid](5.6,1.9539063)(4.8,1.1539062)
\psframe[linewidth=0.02,dimen=outer,fillstyle=solid,fillcolor=color308b](6.4,1.9539063)(5.6,1.1539062)
\psframe[linewidth=0.02,dimen=outer,fillstyle=solid](8.0,1.9539063)(7.2,1.1539062)
\psframe[linewidth=0.02,dimen=outer,fillstyle=solid](9.6,1.9539063)(8.8,1.1539062)
\psframe[linewidth=0.02,dimen=outer,fillstyle=solid,fillcolor=black](10.4,1.9539063)(9.6,1.1539062)
\psframe[linewidth=0.02,dimen=outer,fillstyle=solid](11.2,1.9539063)(10.4,1.1539062)
\psframe[linewidth=0.02,dimen=outer,fillstyle=solid](12.0,1.9539063)(11.2,1.1539062)
\psframe[linewidth=0.02,dimen=outer,fillstyle=solid,fillcolor=black](12.8,1.9539063)(12.0,1.1539062)
\psframe[linewidth=0.02,dimen=outer,fillstyle=solid,fillcolor=black](1.6,-0.8460938)(0.8,-1.6460937)
\psframe[linewidth=0.02,dimen=outer,fillstyle=solid](0.8,-0.8460938)(0.0,-1.6460937)
\psframe[linewidth=0.02,dimen=outer,fillstyle=solid,fillcolor=black](3.2,-0.8460938)(2.4,-1.6460937)
\psframe[linewidth=0.02,dimen=outer,fillstyle=solid](2.4,-0.8460938)(1.6,-1.6460937)
\psframe[linewidth=0.02,dimen=outer,fillstyle=solid,fillcolor=black](6.4,-0.8460938)(5.6,-1.6460937)
\psframe[linewidth=0.02,dimen=outer,fillstyle=solid,fillcolor=black](7.2,-0.8460938)(6.4,-1.6460937)
\psframe[linewidth=0.02,dimen=outer,fillstyle=solid](8.0,-0.8460938)(7.2,-1.6460937)
\psframe[linewidth=0.02,dimen=outer,fillstyle=solid,fillcolor=black](8.8,-0.8460938)(8.0,-1.6460937)
\psframe[linewidth=0.02,dimen=outer,fillstyle=solid](9.6,-0.8460938)(8.8,-1.6460937)
\psframe[linewidth=0.02,dimen=outer,fillstyle=solid,fillcolor=black](10.4,-0.8460938)(9.6,-1.6460937)
\psframe[linewidth=0.02,dimen=outer,fillstyle=solid](12.0,-0.8460938)(11.2,-1.6460937)
\psframe[linewidth=0.02,dimen=outer,fillstyle=solid,fillcolor=black](12.8,-0.8460938)(12.0,-1.6460937)
\psline[linewidth=0.04cm,linecolor=color3217,fillcolor=black,dotsize=0.07055555cm 2.0]{-*}(9.22,3.5339062)(9.22,1.5339062)
\usefont{T1}{ptm}{m}{n}
\rput(9.446094,3.8239062){$Z\left(U_2(j) \Bigl| U_2^{1:j-1}, V_1^{n}\right) \geq 1 - \delta_n$}
\psline[linewidth=0.04cm,linecolor=color3217,fillcolor=color3221b,dotsize=0.07055555cm 2.0]{-*}(4.42,4.333906)(4.42,1.5339062)
\usefont{T1}{ptm}{m}{n}
\rput(4.713281,4.623906){$\delta_n < Z\left(U_2(j) \Bigl| U_2^{1:j-1}, V_1^{n}\right) < 1 - \delta_n$}
\usefont{T1}{ptm}{m}{n}
\rput(1.1960938,3.8239062){$Z\left(U_2(j) \Bigl| U_2^{1:j-1}, V_1^{n}\right) \leq \delta_n$}
\psframe[linewidth=0.02,dimen=outer,fillstyle=solid](7.2,1.9539063)(6.4,1.1539062)
\psframe[linewidth=0.02,dimen=outer,fillstyle=solid](8.8,1.9539063)(8.0,1.1539062)
\psframe[linewidth=0.02,dimen=outer,fillstyle=solid,fillcolor=color3230b](4.0,-0.8460938)(3.2,-1.6460937)
\psline[linewidth=0.035cm,linestyle=dashed,dash=0.16cm 0.16cm,arrowsize=0.05291667cm 2.0,arrowlength=1.4,arrowinset=0.4]{<->}(0.02,-0.46609375)(12.82,-0.46609375)
\usefont{T1}{ptm}{m}{n}
\rput(12.908125,-0.17609376){$n$}
\psframe[linewidth=0.02,dimen=outer,fillstyle=solid,fillcolor=black](5.6,-0.8460938)(4.8,-1.6460937)
\psframe[linewidth=0.02,dimen=outer,fillstyle=solid,fillcolor=black](11.2,-0.8460938)(10.4,-1.6460937)
\psline[linewidth=0.04cm,linecolor=color3217,fillcolor=color3221b,dotsize=0.07055555cm 2.0]{-*}(1.22,3.5339062)(1.22,1.5339062)
\psline[linewidth=0.04cm,linecolor=color3217,fillcolor=color3221b,dotsize=0.07055555cm 2.0]{*-}(1.22,-1.2660937)(1.22,-3.2660937)
\psline[linewidth=0.04cm,linecolor=color3217,fillcolor=black,dotsize=0.07055555cm 2.0]{*-}(11.62,-1.2660937)(11.62,-3.2660937)
\psline[linewidth=0.04cm,linecolor=color3217,fillcolor=color3221b,dotsize=0.07055555cm 2.0]{*-}(3.62,-1.2660937)(3.62,-4.066094)
\usefont{T1}{ptm}{m}{n}
\rput(3.9232812,-4.5760937){$\delta_n < Z\left(U_2(j) \Bigl| U_2^{1:j-1}, Y_2^{n}\right) < 1 - \delta_n$}
\usefont{T1}{ptm}{m}{n}
\rput(1.2060938,-3.7760937){$Z\left(U_2(j) \Bigl| U_2^{1:j-1}, Y_2^{n}\right) \leq \delta_n$}
\usefont{T1}{ptm}{m}{n}
\rput(11.456094,-3.7760937){$Z\left(U_2(j) \Bigl| U_2^{1:j-1}, Y_2^{n}\right) \geq 1 - \delta_n$}
\psframe[linewidth=0.02,dimen=outer,fillstyle=solid,fillcolor=color3230b](4.8,-0.8460938)(4.0,-1.6460937)
\end{pspicture}
}

\end{center}\vspace{-0.05in}
\caption{The alignment of polarization indices for Marton coding over noisy broadcast channels with respect to the second receiver. The message set $\mathcal{M}^{(n)}_{2}$ is highlighted by the vertical red rectangles. At finite code length $n$, exact alignment is not possible due to partially-polarized indices pictured in gray.} \label{fig:PolarTransformBitIndicesMarton}
\end{figure*}
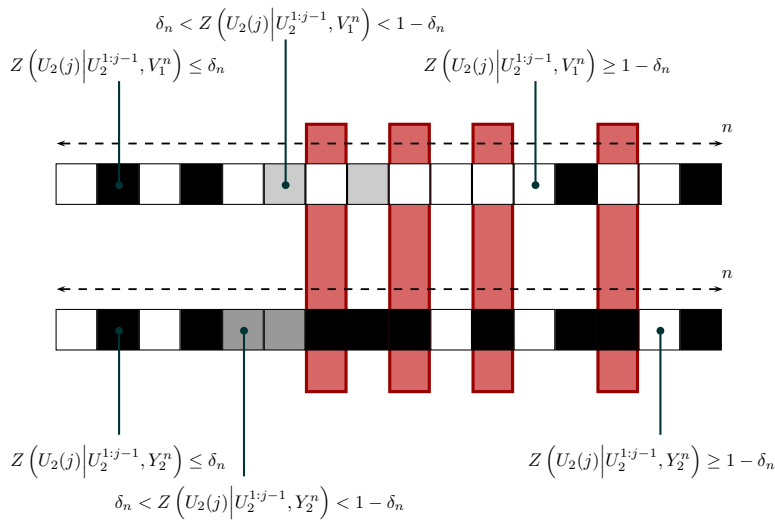

\begin{definition}[Polarization Sets for Marton Coding]\label{def:PolarizationSetsMarton} Let $V_1^{n}, V_2^{n}, X^{n}, Y_1^{n}, Y_2^{n}$ be the sequence of random variables as introduced in Section~\ref{sec:PolarTransformMarton}. In addition, let $U_1^{n} = V_1^{n}\matbold{G}_n$ and $U_2^{n} = V_2^{n}\matbold{G}_n$. Let $\delta_n = 2^{-n^{\beta}}$ for $0 < \beta < \frac{1}{2}$. The following \emph{polarization sets} are defined:
\begin{align}
\mathcal{H}^{(n)}_{V_1} & \triangleq \Bigl\{j \in [n]: Z\left(U_1(j) \Bigl| U_1^{1:j-1}\right) \geq 1 - \delta_n \Bigl\}, \notag \\
\mathcal{L}^{(n)}_{V_1|Y_1} & \triangleq \Bigl\{j \in [n]: Z\left(U_1(j) \Bigl| U_1^{1:j-1}, Y_1^{n}\right) \leq \delta_n \Bigl\}, \notag \\
\mathcal{H}^{(n)}_{V_2|V_1} & \triangleq \Bigl\{j \in [n]: Z\left(U_2(j) \Bigl| U_2^{1:j-1}, V_1^{n}\right) \geq 1 - \delta_n \Bigl\}, \notag \\
\mathcal{L}^{(n)}_{V_2|V_1} & \triangleq \Bigl\{j \in [n]: Z\left(U_2(j) \Bigl| U_2^{1:j-1}, V_1^{n}\right) \leq \delta_n \Bigl\}, \notag \\
\mathcal{H}^{(n)}_{V_2|Y_2} & \triangleq \Bigl\{j \in [n]: Z\left(U_2(j) \Bigl| U_2^{1:j-1}, Y_2^{n}\right) \geq 1 - \delta_n \Bigl\}, \notag \\
\mathcal{L}^{(n)}_{V_2|Y_2} & \triangleq \Bigl\{j \in [n]: Z\left(U_2(j) \Bigl| U_2^{1:j-1}, Y_2^{n}\right) \leq \delta_n \Bigl\}. \notag
\end{align}
\end{definition}

\begin{definition}[Message Sets for Marton Coding]\label{def:MessageSetsMarton} In terms of the polarization sets given in Definition~\ref{def:PolarizationSetsMarton}, the following \emph{message sets} are defined:
\begin{align}
\mathcal{M}^{(n)}_{1} & \triangleq \mathcal{H}^{(n)}_{V_1} \cap \mathcal{L}^{(n)}_{V_1|Y_1}, \label{eqn:MessageIndicesMartonM1} \\
\mathcal{M}^{(n)}_2 & \triangleq \mathcal{H}^{(n)}_{V_2|V_1} \cap \mathcal{L}^{(n)}_{V_2|Y_2}. \label{eqn:MessageIndicesMartonM2}
\end{align}
\end{definition}

\vspace{0.05in}
\begin{proposition}[Polarization]\label{thm:RateOfPolarizationMarton} Consider the polarization sets given in Definition~\ref{def:PolarizationSetsMarton} and the message sets given in Definition~\ref{def:MessageSetsMarton} with parameter $\delta_n = 2^{-n^{\beta}}$ for $0 < \beta < \frac{1}{2}$. Fix a constant $\tau > 0$. Then there exists an $N_o = N_o(\beta, \tau)$ such that
\begin{align}
& \frac{1}{n}\left|\mathcal{M}^{(n)}_{1}\right| \geq \Bigl(H(V_1) - H(V_1|Y_1)\Bigl) - \tau, \label{eqn:MessageSetAchievesCapacityMartonM1} \\
& \frac{1}{n}\left|\mathcal{M}^{(n)}_{2}\right| \geq \Bigl(H(V_2|V_1) - H(V_2|Y_2)\Bigl) - \tau, \label{eqn:MessageSetAchievesCapacityMartonM2}
\end{align}
for all $n > N_o$.
\end{proposition}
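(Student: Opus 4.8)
The plan is to obtain both bounds directly from the polarization and rate-of-polarization results of Propositions~\ref{thm:SrcPolarization} and~\ref{thm:RateOfPolarization}, applied row-by-row to $\matbold{U}$ and, within each row, to the relevant genie-aided process, combined with the elementary counting fact that $|\mathcal{A}\cap\mathcal{B}|\geq|\mathcal{A}|+|\mathcal{B}|-n$ for any $\mathcal{A},\mathcal{B}\subseteq[n]$.

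First I would record the entropy book-keeping. Since $(V_1^{j},V_2^{j})\sim P_{V_1V_2}$ are $i.i.d.$ and $\matbold{G}_n$ is an $\mathbb{F}_2$-bijection, the chain rule of entropy gives
\begin{align}
\sum_{j=1}^{n} H(U_1(j)\,|\,U_1^{1:j-1}) & = H(V_1^{n}) = n H(V_1), \notag \\
\sum_{j=1}^{n} H(U_1(j)\,|\,U_1^{1:j-1},Y_1^{n}) & = H(V_1^{n}|Y_1^{n}) = n H(V_1|Y_1), \notag \\
\sum_{j=1}^{n} H(U_2(j)\,|\,U_2^{1:j-1},V_1^{n}) & = H(V_2^{n}|V_1^{n}) = n H(V_2|V_1), \notag \\
\sum_{j=1}^{n} H(U_2(j)\,|\,U_2^{1:j-1},Y_2^{n}) & = H(V_2^{n}|Y_2^{n}) = n H(V_2|Y_2), \notag
\end{align}
where the second-column identities use that the pairs $(V_\ell^{j},Y_\ell^{j})$ are $i.i.d.$ over $j$, so $H(V_\ell^{n}|Y_\ell^{n})=nH(V_\ell|Y_\ell)$. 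Each of these four lines is the total conditional entropy driving a binary source-polarization process: for the first two, $U_1^{1:n}=V_1^{1:n}\matbold{G}_n$ is the polar transform of the binary $i.i.d.$ source $V_1$ with side information void, resp.\ $Y_1^{n}$; for the last two, $U_2^{1:n}=V_2^{1:n}\matbold{G}_n$ is the polar transform of $V_2$ with side information $V_1^{n}$, resp.\ $Y_2^{n}$, and the entropies displayed are exactly those appearing in the Bhattacharyya sets of Definition~\ref{def:PolarizationSetsMarton}.

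Next I would invoke the rate of polarization. Propositions~\ref{thm:SrcPolarization} and~\ref{thm:RateOfPolarization} apply verbatim to each of these four processes (the correspondence between $Z$- and $H$-polarization being Lemma~\ref{lemma:ClosenessOfHandZOne}): with $\delta_n=2^{-n^{\beta}}$, $0<\beta<\frac{1}{2}$, the fraction of coordinates whose Bhattacharyya parameter is $\geq 1-\delta_n$ is at least the corresponding total conditional entropy minus $\tau'$, and dually the fraction with Bhattacharyya parameter $\leq\delta_n$ is at least the complement of the total conditional entropy minus $\tau'$ (the second direction being the standard channel-coding dual of Proposition~\ref{thm:RateOfPolarization}, cf.~\cite{arikantelatar09,arikan10}). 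Hence, for any $\tau'>0$ there is $N_o(\beta,\tau')$ such that for all $n>N_o$,
\begin{align}
\frac{1}{n}|\mathcal{H}^{(n)}_{V_1}| & \geq H(V_1) - \tau', \notag \\
\frac{1}{n}|\mathcal{L}^{(n)}_{V_1|Y_1}| & \geq \bigl(1 - H(V_1|Y_1)\bigr) - \tau', \notag \\
\frac{1}{n}|\mathcal{H}^{(n)}_{V_2|V_1}| & \geq H(V_2|V_1) - \tau', \notag \\
\frac{1}{n}|\mathcal{L}^{(n)}_{V_2|Y_2}| & \geq \bigl(1 - H(V_2|Y_2)\bigr) - \tau'. \notag
\end{align}
Applying $|\mathcal{A}\cap\mathcal{B}|\geq|\mathcal{A}|+|\mathcal{B}|-n$ to $\mathcal{M}^{(n)}_{1}=\mathcal{H}^{(n)}_{V_1}\cap\mathcal{L}^{(n)}_{V_1|Y_1}$ in~\eqref{eqn:MessageIndicesMartonM1} and to $\mathcal{M}^{(n)}_{2}=\mathcal{H}^{(n)}_{V_2|V_1}\cap\mathcal{L}^{(n)}_{V_2|Y_2}$ in~\eqref{eqn:MessageIndicesMartonM2}, and taking $\tau'=\tau/2$, yields $\frac{1}{n}|\mathcal{M}^{(n)}_{1}|\geq H(V_1)-H(V_1|Y_1)-\tau$ and $\frac{1}{n}|\mathcal{M}^{(n)}_{2}|\geq H(V_2|V_1)-H(V_2|Y_2)-\tau$ for all $n$ past the maximum of the thresholds produced above, which we take as $N_o(\beta,\tau)$.

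I do not expect a genuine obstacle: the statement reduces to already-established polarization theorems plus counting. The one point requiring care is the bookkeeping above — applying the rate-of-polarization theorem with the correct side information for each of the four processes and verifying that the averaged conditional entropies collapse to the single-letter quantities $H(V_1)$, $H(V_1|Y_1)$, $H(V_2|V_1)$, $H(V_2|Y_2)$. In particular, this argument uses \emph{no} alignment between the $V_1^{n}$-conditioned and $Y_2^{n}$-conditioned polarizations of the second row (such alignment, secured separately via the degradedness hypothesis $P_{Y_2|V_2}(y_2|v_2)\degBC P_{V_1|V_2}(v_1|v_2)$, is needed only for the decoding analysis). As a side remark, the counting bound is informative only when $H(V_2|V_1)\geq H(V_2|Y_2)$, which is exactly what that degradedness hypothesis guarantees; the inequality in the statement nonetheless holds verbatim without it, since its right-hand side is then nonpositive while $|\mathcal{M}^{(n)}_2|\geq 0$.
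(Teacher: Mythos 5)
Your proof is correct. The paper does not in fact supply an explicit proof of this proposition (nor of its superposition analogue, Proposition~\ref{thm:RateOfPolarizationSuperposition}) — both are stated as direct consequences of the source-polarization machinery in Section~\ref{sec:PolarizationTheorems} — so you are filling in exactly the details the authors take for granted, by the route one would expect: chain-rule entropy identities reducing the four row-averaged conditional entropies to $H(V_1)$, $H(V_1|Y_1)$, $H(V_2|V_1)$, $H(V_2|Y_2)$; rate-of-polarization bounds on $\frac{1}{n}|\mathcal{H}^{(n)}_{\cdot}|$ from below by $H(\cdot)-\tau'$ and on $\frac{1}{n}|\mathcal{L}^{(n)}_{\cdot}|$ from below by $(1-H(\cdot))-\tau'$ (the latter being the standard low-$Z$ dual); and the inclusion--exclusion inequality $|\mathcal{A}\cap\mathcal{B}|\geq|\mathcal{A}|+|\mathcal{B}|-n$ applied to~\eqref{eqn:MessageIndicesMartonM1} and~\eqref{eqn:MessageIndicesMartonM2}. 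Your closing remark is also correct and worth making explicit: the cardinality bound itself uses no alignment between the $V_1^{n}$-conditioned and $Y_2^{n}$-conditioned polarizations of the second row; the degradedness hypothesis $P_{Y_2|V_2}(y_2|v_2)\degBC P_{V_1|V_2}(v_1|v_2)$ enters only through Lemma~\ref{lemma:NestedSetsMartonCoding}, where it secures the nesting $\mathcal{L}^{(n)}_{V_2|V_1}\subseteq\mathcal{L}^{(n)}_{V_2|Y_2}$ and $\mathcal{H}^{(n)}_{V_2|Y_2}\subseteq\mathcal{H}^{(n)}_{V_2|V_1}$ needed for decodability, and, as you note, makes the right-hand side of~\eqref{eqn:MessageSetAchievesCapacityMartonM2} positive.
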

\vspace{0.1in}
\begin{lemma}\label{lemma:NestedSetsMartonCoding} Consider the polarization sets defined in Proposition~\ref{thm:RateOfPolarizationMarton}. If the property $P_{Y_2|V_2}(y_2|v_2) \degBC P_{V_1|V_2}(v_1|v_2)$ holds for conditional distributions $P_{Y_2|V_2}(y_2|v_2)$ and $P_{V_1|V_2}(v_1|v_2)$, then $I(V_2;Y_2) > I(V_1;V_2)$ and the Bhattacharyya parameters
\begin{align}
Z\left(U_2(j) \Bigl| U_2^{1:j-1}, Y_2^{n}\right) & \leq Z\left(U_2(j) \Bigl| U_2^{1:j-1}, V_1^{n}\right) \notag
\end{align}
for all $j \in [n]$. As a result,
\begin{align}
\mathcal{L}^{(n)}_{V_2|V_1} \subseteq \mathcal{L}^{(n)}_{V_2|Y_2},  \notag \\
\mathcal{H}^{(n)}_{V_2|Y_2} \subseteq \mathcal{H}^{(n)}_{V_2|V_1}. \notag
\end{align}
\end{lemma}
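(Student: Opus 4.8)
The plan is to reduce the claim to the single-source polarization comparison already carried out in Lemma~\ref{lemma:NestedSetsSuperpostionCoding}, but now with source $V_2$ (equivalently $U_2^{n} = V_2^{n}\matbold{G}_n$) and with the two competing side-information blocks $Y_2^{n}$ and $V_1^{n}$. First I would observe that, by the i.i.d.\ structure of the code, the joint law of $(V_2^{n}, Y_2^{n})$ factors as $\prod_{i} P_{V_2}(v_2(i))P_{Y_2|V_2}(y_2(i)|v_2(i))$ and the joint law of $(V_2^{n}, V_1^{n})$ factors as $\prod_{i} P_{V_2}(v_2(i))P_{V_1|V_2}(v_1(i)|v_2(i))$. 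Hence the Bhattacharyya parameters $Z\left(U_2(j) \Bigl| U_2^{1:j-1}, Y_2^{n}\right)$ and $Z\left(U_2(j) \Bigl| U_2^{1:j-1}, V_1^{n}\right)$ depend on the model only through the common marginal $P_{V_2}$ and the respective single-letter channels $P_{Y_2|V_2}$ and $P_{V_1|V_2}$, so the comparison is legitimate even though $V_1$ is an auxiliary random variable rather than a literal channel output.

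With this reduction in hand, the hypothesis $P_{Y_2|V_2}(y_2|v_2)\degBC P_{V_1|V_2}(v_1|v_2)$ says precisely that the channel $V_2\to V_1$ is the concatenation of the channel $V_2\to Y_2$ with a further degrading channel $Y_2\to V_1$. Since the polar transform $\matbold{G}_n$ acts identically on $V_2^{n}$ in the two processes, this degradation between the two ``big'' channels is preserved through each of the $\log_2 n$ levels of the Kronecker-product recursion: I would apply Lemma~\ref{lemma:SuccessiveDegradationBhatt} once per level to conclude that, for every $j\in[n]$, the synthetic channel with input $U_2(j)$ and output $\bigl(U_2^{1:j-1}, V_1^{n}\bigr)$ is stochastically degraded with respect to the synthetic channel with input $U_2(j)$ and output $\bigl(U_2^{1:j-1}, Y_2^{n}\bigr)$, and Lemma~\ref{lemma:DegradationBhatt} then yields $Z\left(U_2(j) \Bigl| U_2^{1:j-1}, Y_2^{n}\right)\le Z\left(U_2(j) \Bigl| U_2^{1:j-1}, V_1^{n}\right)$. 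This is the same argument used in the proof of Lemma~\ref{lemma:NestedSetsSuperpostionCoding}, with the roles of the stronger and weaker side information interchanged.

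The two set inclusions then follow immediately from this pointwise inequality and Definition~\ref{def:PolarizationSetsMarton}: if $j\in\mathcal{L}^{(n)}_{V_2|V_1}$ then $Z\left(U_2(j)\Bigl|U_2^{1:j-1},V_1^{n}\right)\le\delta_n$, so $Z\left(U_2(j)\Bigl|U_2^{1:j-1},Y_2^{n}\right)\le\delta_n$ and $j\in\mathcal{L}^{(n)}_{V_2|Y_2}$; symmetrically, if $j\in\mathcal{H}^{(n)}_{V_2|Y_2}$ then $Z\left(U_2(j)\Bigl|U_2^{1:j-1},V_1^{n}\right)\ge Z\left(U_2(j)\Bigl|U_2^{1:j-1},Y_2^{n}\right)\ge 1-\delta_n$ and $j\in\mathcal{H}^{(n)}_{V_2|V_1}$. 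For the mutual-information statement I would use the same degradation factorization: the data-processing inequality gives $I(V_1;V_2)\le I(Y_2;V_2)$, i.e.\ $H(V_2|Y_2)\le H(V_2|V_1)$, with the inequality strict unless the degrading channel $Y_2\to V_1$ is a sufficient statistic for $V_2$ (the degenerate case that the stated restriction excludes); strictness is precisely what makes the limiting rate $H(V_2|V_1)-H(V_2|Y_2)$ of $\mathcal{M}^{(n)}_2$ from Proposition~\ref{thm:RateOfPolarizationMarton} positive.

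The one step I expect to require genuine work is the invocation of Lemma~\ref{lemma:SuccessiveDegradationBhatt}: one has to check that both the ``minus'' (check-node) and ``plus'' (variable-node) combining operations of the polar recursion preserve stochastic degradation when the propagated side information is the genie-aided past paired with the channel-output block. This is, however, the same building block already established for Lemma~\ref{lemma:NestedSetsSuperpostionCoding}, so in practice the argument reduces to the bookkeeping sketched above.
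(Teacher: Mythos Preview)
Your proposal is correct and follows essentially the same approach as the paper: the paper's proof simply cites Lemma~\ref{lemma:DegradationBhatt} and repeated application of Lemma~\ref{lemma:SuccessiveDegradationBhatt}, which is exactly the recursive degradation-preservation argument you spell out, and your derivation of the set inclusions and the mutual-information inequality from the pointwise Bhattacharyya ordering is the natural completion of that sketch.
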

\begin{proof} The proof follows from Lemma~\ref{lemma:DegradationBhatt} and repeated application of Lemma~\ref{lemma:SuccessiveDegradationBhatt} in Appendix~\ref{sec:PolarCodingLemmas}.
\end{proof}

\begin{remark}
The alignment of polarization indices characterized by Lemma~\ref{lemma:NestedSetsMartonCoding} is diagrammed in Figure~\ref{fig:PolarTransformBitIndicesMarton}. The alignment ensures the existence of polarization indices in the set $\mathcal{M}^{(n)}_{2}$ for the message $W_2$ to have a positive rate $R_2 > 0$. The indices in $\mathcal{M}^{(n)}_{2}$ represent those bits freely set at the broadcast encoder and simultaneously those bits that may be decoded by $\mathcal{D}_2$ given its observations.
\end{remark}

\subsection{Partially-Polarized Indices}\label{sec:PartialPolarIndices}

As shown in Figure~\ref{fig:PolarTransformBitIndicesMarton}, for the Marton coding scheme, exact alignment of polarization indices is not possible. However, the alignment holds for all but $o(n)$ indices. The sets of partially-polarized indices shown in Figure~\ref{fig:PolarTransformBitIndicesMarton} are defined as follows.
\begin{definition}[Sets of Partially-Polarized Indices]
\begin{align}
\Delta_1 & \triangleq [n] ~ \backslash ~ \bigl( \mathcal{H}^{(n)}_{V_2|V_1} \cup \mathcal{L}^{(n)}_{V_2|V_1} \bigl), \label{eqn:MartonDelta1} \\
\Delta_2 & \triangleq [n] ~ \backslash ~ \bigl( \mathcal{H}^{(n)}_{V_2|Y_2} \cup \mathcal{L}^{(n)}_{V_2|Y_2} \bigl). \label{eqn:MartonDelta2}
\end{align}
\end{definition}
As implied by Ar\i kan's polarization theorems, the number of partially-polarized indices is negligible asymptotically as $n \rightarrow \infty$. For an arbitrarily small $\eta > 0$,
\begin{align}
\frac{\bigl| \Delta_1 \cup \Delta_2 \bigl|}{n} \leq \eta, \label{eqn:EtaPartialPolarizationMarton}
\end{align}
for all $n$ sufficiently large enough. As will be discussed, providing these $o(n)$ bits as ``genie-given'' bits to the decoders results in a rate penalty; however, the rate penalty is negligible for sufficiently large code lengths.

\subsection{Broadcast Encoding Based on Polarization}
\label{sec:EncodingDecodingMarton}

As diagrammed in Figure~\ref{fig:MartonCoding}, the broadcast encoder must map two independent messages $(W_1, W_2)$ uniformly distributed over $[2^{nR_1}] \times [2^{nR_2}]$ to a codeword $x^{n} \in \mathcal{X}^{n}$ in such a way that the decoding at each separate receiver is successful. The achievable rates for a particular block length $n$ are
\begin{align}
R_1 & = \frac{1}{n}\left|\mathcal{M}^{(n)}_{1}\right|, \notag \\
R_2 & = \frac{1}{n}\left|\mathcal{M}^{(n)}_{2}\right|. \notag
\end{align}

To construct a codeword, the encoder first produces two binary sequences $u_1^{n} \in \{0,1\}^{n}$ and $u_2^{n} \in \{0,1\}^{n}$. To determine $u_1(j)$ for $j \in \mathcal{M}^{(n)}_{1}$, the bit is selected as a uniformly distributed message bit intended for the first receiver. To determine $u_2(j)$ for $j \in \mathcal{M}^{(n)}_{2}$, the bit is selected as a uniformly distributed message bit intended for the second receiver. The remaining \emph{non-message} indices of $u_1^{n}$ and $u_2^{n}$ are decided \emph{randomly} according to the proper statistics as will be described in this section. The transmitted codeword is formed symbol-by-symbol via the $\phi$ function,
\begin{align}
\forall j \in [n]: x(j) = \phi\bigl(v_1(j), v_2(j)\bigl) \notag
\end{align}
where $v_1^{n} = u_1^{n}\matbold{G}_n$ and $v_2^{n} = u_2^{n}\matbold{G}_n$. A valid codeword sequence is always guaranteed to be formed unlike in the case of coding for deterministic broadcast channels.

\vspace{0.15in}
\subsubsection{Random Mapping} To fill in the non-message indices, we define the following random mappings. Consider the following class of random boolean functions where $j \in [n]$:
\begin{align}
\Psi_1^{(j)}: \{0,1\}^{j-1} \rightarrow \{0,1\}, \label{eqn:BooleanMapsRANDMarton1} \\
\Psi_2^{(j)}: \{0,1\}^{n+j-1} \rightarrow \{0,1\}, \label{eqn:BooleanMapsRANDMarton2} \\
\Gamma: [n] \rightarrow \{0,1\}. \label{eqn:BooleanMapsRANDMartonBernoulli}
\end{align}
More concretely, we consider the following specific random boolean functions based on the statistics derived from polarization methods:
\begin{align}
\Psi_1^{(j)}\left(u_1^{1:j-1} \right) & \triangleq \begin{cases} 0, & \mbox{\emph{w.p.} }~ \lambda_{0}\left(u_1^{1:j-1} \right), \\ 1, & \mbox{\emph{w.p.} }~1-\lambda_{0}\left(u_1^{1:j-1} \right), \end{cases} \label{eqn:RANDMapSpecificMarton1} \\
\Psi_2^{(j)}\left(u_2^{1:j-1}, v_1^{n} \right) & \triangleq \begin{cases} 0, & \mbox{\emph{w.p.} }~ \lambda_{0}\left(u_2^{1:j-1}, v_1^{n}\right), \\ 1, & \mbox{\emph{w.p.} }~1-\lambda_{0}\left(u_2^{1:j-1}, v_1^{n}\right) \end{cases} \label{eqn:RANDMapSpecificMarton2} \\
\Gamma(j) & \triangleq \begin{cases} 0, & \mbox{\emph{w.p.} }~ \frac{1}{2}, \\ 1, & \mbox{\emph{w.p.} }~\frac{1}{2}, \end{cases} \label{eqn:RANDMapSpecificMartonBernoulli}
\end{align}
where
\begin{align}
\lambda_{0}\left(u_1^{1:j-1} \right) & \triangleq \mathbb{P}\left(U_1(j) = 0 \Bigl| U_1^{1:j-1} = u_1^{1:j-1}\right). \notag \\
\lambda_{0}\left(u_2^{1:j-1}, v_1^{n} \right) & \triangleq \notag \\
& \!\!\!\! \mathbb{P}\left(U_2(j) = 0 \Bigl| U_2^{1:j-1} = u_2^{1:j-1}, V_1^{n} = v_1^{n}\right). \notag
\end{align}
For a fixed $j \in [n]$, the random boolean functions $\Psi_1^{(j)}$, $\Psi_2^{(j)}$ may be thought of as a vector of independent Bernoulli random variables indexed by the input to the function. Each Bernoulli random variable of the vector is zero or one with a fixed well-defined probability that is efficiently computable. The random boolean function $\Gamma$ may be thought of as an $n$-length vector of Bernoulli$(\frac{1}{2})$ random variables.

\vspace{0.15in}
\subsubsection{Encoding Protocol} The broadcast encoder constructs the sequence $u_1^{n}$ bit-by-bit successively,
\begin{align}
u_1(j) & = \begin{cases} W_1~\mbox{message bit}, & \mbox{\emph{if}}~ j \in \mathcal{M}^{(n)}_{1}, \\ \Psi_1^{(j)}\bigl(u_1^{1:j-1} \bigl), & \mbox{\emph{otherwise.}}~\end{cases} \label{eqn:EncoderMartonRule1}
\end{align}
The encoder then computes the sequence $v_1^{n} = u_1^{n}\matbold{G}_n$. To generate $v_2^{n}$, the encoder constructs the sequence $u_2^{n}$ (given $v_1^{n}$) as follows,
\begin{align}
u_2(j) & = \begin{cases} W_2~\mbox{message bit}, & \mbox{\emph{if}}~ j \in \mathcal{M}^{(n)}_{2}, \\ \Gamma(j), & \mbox{\emph{if}}~ j \in \mathcal{H}^{(n)}_{V_2|V_1} ~\backslash~ \mathcal{M}^{(n)}_{2}, \\ \Psi_2^{(j)}\bigl(u_2^{1:j-1}, v_1^{n} \bigl), & \mbox{\emph{otherwise.}}~\end{cases} \label{eqn:EncoderMartonRule2}
\end{align}
Then the sequence $v_2^{n} = u_2^{n}\matbold{G}_n$. The randomness in the above encoding protocol over non-message indices ensures that the pair of sequences $(u_1^{n}, u_2^{n})$ has the correct statistics as if drawn from the joint distribution of $(U_1^{n}, U_2^{n})$. In the last step, the encoder transmits a codeword $x^{n}$ formed symbol-by-symbol: $x(j) = \phi\bigl(v_1(j), v_2(j)\bigl)$ for all $j \in [n]$. For $j \in \Delta_2$, where $\Delta_2$ is the set of partially-polarized indices defined in~\eqref{eqn:MartonDelta2}, the encoder records the realization of $u_2(j)$. These indices will be provided to the second receiver's decoder $\mathcal{D}_2$ as ``genie-given'' bits.

\subsection{Broadcast Decoding Based on Polarization}
\label{sec:DecodingMarton}

\subsubsection{Decoding At First Receiver}

Decoder $\mathcal{D}_1$ decodes the binary sequence $\hat{u}_1^{n}$ using its observations $y_1^{n}$. The message $W_1$ is located at the indices $j \in \mathcal{M}^{(n)}_{1}$ in the sequence $\hat{u}_1^{n}$. More precisely, we define the following deterministic polar decoding function for the $j$-th bit:
\begin{align}
& \xi^{(j)}_{u_1}\left(u_1^{1:j-1}, y_1^{n}\right) \triangleq \argmax_{u \in \{0,1\}} \Bigl\{ \notag \\
& ~~\mathbb{P}\left(U_1(j) = u \Bigl| U_1^{1:j-1} = u_1^{1:j-1}, Y_1^{n} = y_1^{n} \right)\Bigl\}. \label{eqn:DecodingMarton1}
\end{align}
Decoder $\mathcal{D}_1$ reconstructs $\hat{u}_1^{n}$ bit-by-bit successively as follows using the identical random mapping $\Psi_1^{(j)}$ at the encoder:
\begin{align}
\hat{u}_1(j) & = \begin{cases} \xi^{(j)}_{u_1}\bigl(\hat{u}_1^{1:j-1}, y_1^{n}\bigl), & \mbox{\emph{if}}~ j \in \mathcal{M}^{(n)}_{1}, \\ \Psi_1^{(j)}\bigl(\hat{u}_1^{1:j-1} \bigl), & \mbox{\emph{otherwise.}}~\end{cases} \label{eqn:DecoderMartonRule1}
\end{align}
Given that all previous bits $\hat{u}_1^{1:j-1}$ have been decoded correctly, decoder $\mathcal{D}_1$ makes a mistake on the $j$-th bit $\hat{u}_1(j)$ only if $j \in \mathcal{M}^{(n)}_{1}$. For the remaining indices, the decoder produces the same bit produced at the encoder due to the shared random maps.

\vspace{0.1in}
\subsubsection{Decoding At Second Receiver}

The decoder $\mathcal{D}_2$ decodes the binary sequence $\hat{u}_2^{n}$ using observations $y_2^{n}$. The message $W_2$ is located at the indices $j \in \mathcal{M}^{(n)}_{2}$ of the sequence $\hat{u}_2^{n}$. Define the following deterministic polar decoding functions
\begin{align}
& \xi^{(j)}_{u_2}\left(u_2^{1:j-1}, y_2^{n}\right) \triangleq \notag \\
& ~~\argmax_{u \in \{0,1\}} \Bigl\{ \mathbb{P}\left(U_2(j) = u \Bigl| U_2^{1:j-1} = u_2^{1:j-1}, Y_2^{n} = y_2^{n} \right)\Bigl\}. \label{eqn:DecodingMarton2}
\end{align}
Decoder $\mathcal{D}_2$ reconstructs $\hat{u}_2^{n}$ bit-by-bit successively as follows using the \emph{identical} shared random mapping $\Gamma$ used at the encoder. Including all but $o(n)$ of the indices,
\begin{align}
\hat{u}_2(j) & = \begin{cases} \xi^{(j)}_{u_2}\left(\hat{u}_2^{1:j-1}, y_2^{n}\right), & \mbox{\emph{if}}~ j \in \mathcal{L}^{(n)}_{V_2|Y_2}, \\ \Gamma(j), & \mbox{\emph{if}}~ j \in \mathcal{H}^{(n)}_{V_2|Y_2}.~\end{cases} \label{eqn:DecoderMartonRule2}
\end{align}
For those indices $j \in \Delta_2$ where $\Delta_2$ is the set of partially-polarized indices defined in~\eqref{eqn:MartonDelta2}, the decoder $\mathcal{D}_2$ is provided with ``genie-given'' bits from the encoder. Thus, all bits are decoded, and $\mathcal{D}_2$ only makes a successive cancelation error for those indices $j \in \mathcal{L}^{(n)}_{V_2|Y_2}$. Communicating the genie-given bits from the encoder to decoder results in a rate penalty. However, since the number of genie-given bits scales asymptotically as $o(n)$, the rate penalty can be made arbitrarily small.
\vspace{0.15in}
\begin{remark}
It is notable that decoder $\mathcal{D}_2$ reconstructs $\hat{u}_2^{n}$ using only the observations $y_2^{n}$. At the encoder, the sequence $u_2^{n}$ was generated with the realization of a sequence $v_1^{n}$ as given in~\eqref{eqn:EncoderMartonRule2}. However, decoder $\mathcal{D}_2$ does \emph{not} reconstruct the sequence $\hat{v}_1^{n}$. From this operational perspective, Marton's scheme differs crucially from Cover's superposition scheme because there does not exist the notion of a ``stronger'' receiver which reconstructs all the sequences decoded at the ``weaker'' receiver.
\end{remark}

\subsection{Total Variation Bound}

To analyze the average probability of error $P_e^{(n)}$, it is assumed that both the encoder and decoder share the \emph{randomized} mappings $\Psi_1^{(j)}$, $\Psi_2^{(j)}$, and $\Gamma$ (where $\Psi_2^{(j)}$ is not utilized at decoder $\mathcal{D}_2$). Define the following probability measure on the space of tuples of binary sequences.
\begin{align}
& Q\bigl(u_1^{n}, u_2^{n}\bigl) \triangleq Q\bigl(u_1^{n}\bigl)Q\bigl(u_1^{n}\bigl|u_2^{n}\bigl) \notag \\
& = \prod_{j=1}^{n} Q\bigl(u_1(j) \bigl| u_1^{1:j-1}\bigl) Q\bigl(u_2(j) \bigl| u_2^{1:j-1}, u_1^{n}\bigl), \label{eqn:QUJOINTMarton}
\end{align}
where the conditional probability measures are defined as
\begin{align}
& Q\Bigl(u_1(j) \Bigl| u_1^{1:j-1}\Bigl) \triangleq \notag \\
& ~~~~~~ \begin{cases} \frac{1}{2}, & \mbox{\emph{if}}~j \in \mathcal{M}_1^{(n)}, \\ P\left( u_1(j) \Bigl| u_1^{1:j-1}\right), & \mbox{\emph{otherwise.}} \end{cases} \notag \\
& Q\Bigl(u_2(j) \Bigl| u_2^{1:j-1}, u_1^{n}\Bigl) \triangleq \notag \\
& ~~~~~~ \begin{cases} \frac{1}{2}, & \mbox{\emph{if}}~j \in \mathcal{H}^{(n)}_{V_2|V_1}, \\ P\left( u_2(j) \Bigl| u_2^{1:j-1}, u_1^{n}\right), & \mbox{\emph{otherwise.}} \end{cases} \notag
\end{align}
The probability measure $Q$ defined in~\eqref{eqn:QUJOINTMarton} is a perturbation of the joint probability measure $P_{U_1^{n}U_2^{n}}(u_1^{n}, u_2^{n})$ in~\eqref{eqn:U1U2DecomposedDistributionMarton}. The only difference in definition between $P$ and $Q$ is due to those indices in message sets $\mathcal{M}_1^{(n)}$ and $\mathcal{H}^{(n)}_{V_2|V_1}$ (note: $\mathcal{M}_2^{(n)} \subseteq \mathcal{H}^{(n)}_{V_2|V_1}$). The following lemma provides a bound on the total variation distance between $P$ and $Q$. The lemma establishes the fact that inserting uniformly distributed message bits in the proper indices $\mathcal{M}_1^{(n)}$ and $\mathcal{M}_2^{(n)}$ (or the entire set $\mathcal{H}^{(n)}_{V_2|V_1}$) at the encoder \emph{does not} perturb the statistics of the $n$-length random variables too much.
\vspace{0.1in}
\begin{lemma}(\emph{Total Variation Bound})\label{lemma:TVBoundMarton}
Let probability measures $P$ and $Q$ be defined as in~\eqref{eqn:U1U2DecomposedDistributionMarton} and~\eqref{eqn:QUJOINTMarton} respectively. Let $0 < \beta < 1$. For sufficiently large $n$, the total variation distance between $P$ and $Q$ is bounded as
\begin{align}
\sum_{\begin{subarray}{c} u_1^{n} \in \{0,1\}^{n} \\ u_2^{n} \in \{0,1\}^{n} \end{subarray}} \Bigl| P_{U_1^{n}U_2^{n}}\bigl( u_1^{n}, u_2^{n} \bigl) - Q\bigl( u_1^{n}, u_2^{n} \bigl) \Bigl| \leq 2^{-n^{\beta}}. \notag
\end{align}
\end{lemma}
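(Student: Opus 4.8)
The plan is to follow the same route as in Lemma~\ref{lemma:TVBound} and Lemma~\ref{lemma:TVBoundSuperposition}: bound $\|P-Q\|_1$ by a Kullback--Leibler divergence via Pinsker's inequality, and control the divergence through the chain rule for KL divergence together with the quantitative form of Lemma~\ref{lemma:ClosenessOfHandZOne}, namely that a Bhattacharyya parameter close to $1$ forces the associated binary conditional entropy close to $1$. First I would write both measures in the telescoped conditional form of~\eqref{eqn:U1U2DecomposedDistributionMarton} and~\eqref{eqn:QUJOINTMarton}, i.e.\ as the products $\prod_j P(u_1(j)\mid u_1^{1:j-1})\prod_j P(u_2(j)\mid u_2^{1:j-1},u_1^n)$ and the corresponding product for $Q$, and record that the $Q$-conditionals coincide with the $P$-conditionals at every index except $j\in\mathcal{M}_1^{(n)}$ in the first block and $j\in\mathcal{H}^{(n)}_{V_2|V_1}$ in the second block (recalling $\mathcal{M}_2^{(n)}\subseteq\mathcal{H}^{(n)}_{V_2|V_1}$), where instead they are the $\mathrm{Bernoulli}(\tfrac12)$ law.

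Next, apply the chain rule for KL divergence: condition on $U_1^n$ to split $D(P\|Q)$ into a first-block term and the $P_{U_1^n}$-average of a second-block term, then chain each of these over the coordinate index $j$. Every summand with $j$ outside the exceptional set vanishes, since the conditionals agree there. For $j\in\mathcal{M}_1^{(n)}\subseteq\mathcal{H}^{(n)}_{V_1}$ the $j$-th summand equals $1-H(U_1(j)\mid U_1^{1:j-1})$ (KL to the uniform binary law is one minus entropy, in bits), and the defining inequality $Z(U_1(j)\mid U_1^{1:j-1})\ge 1-\delta_n$ together with Lemma~\ref{lemma:ClosenessOfHandZOne} yields $1-H(U_1(j)\mid U_1^{1:j-1})\le g(\delta_n)$ for an explicit function $g$ with $g(\delta_n)\to 0$. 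For $j\in\mathcal{H}^{(n)}_{V_2|V_1}$ the corresponding summand is $1-H(U_2(j)\mid U_2^{1:j-1},U_1^n)$; since $\matbold{G}_n$ is invertible, $U_1^n$ and $V_1^n$ determine one another, so this equals $1-H(U_2(j)\mid U_2^{1:j-1},V_1^n)$, which is at most $g(\delta_n)$ by $Z(U_2(j)\mid U_2^{1:j-1},V_1^n)\ge 1-\delta_n$. Summing over at most $n$ indices per block gives $D(P\|Q)\le 2n\,g(\delta_n)$, and Pinsker's inequality then gives $\|P-Q\|_1=\mathcal{O}\bigl(\sqrt{n\,g(\delta_n)}\bigr)$; with $\delta_n=2^{-n^{\beta}}$ this is stretched-exponentially small, hence below the stated bound once the polarization parameter is chosen large enough relative to the target exponent.

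The two features special to the Marton construction are harmless. First, the perturbed set in the second block is the full high-entropy set $\mathcal{H}^{(n)}_{V_2|V_1}$ rather than just $\mathcal{M}^{(n)}_2$, but every index of $\mathcal{H}^{(n)}_{V_2|V_1}$ already obeys the Bhattacharyya bound by definition, so no extra estimate is needed. Second, the second block is conditioned on the entire transformed sequence $U_1^n$ (equivalently $V_1^n$) rather than on a causal prefix, which is precisely why the outer chain-rule step conditions on $U_1^n$ before chaining over $j$. The main obstacle is therefore essentially bookkeeping: extracting the explicit rate $g(\delta_n)$ from the $Z$--$H$ comparison and tracking the polynomial-in-$n$ and square-root factors so that the resulting exponent is consistent with the claimed $2^{-n^{\beta}}$; no estimate beyond the polarization machinery already established (Proposition~\ref{thm:RateOfPolarizationMarton} and Lemma~\ref{lemma:ClosenessOfHandZOne}) is required.
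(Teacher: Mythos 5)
Your proposal matches the paper's (omitted) proof, which the authors explicitly describe as identical in method to Lemmas~\ref{lemma:TVBound} and~\ref{lemma:TVBoundSuperposition}: chain rule for KL, the identity $D(P\|\tfrac12)=1-H$ from Lemma~\ref{lemma:ConditionalEntropyAndKLDistance}, the $Z$-to-$H$ bound from Lemma~\ref{lemma:ClosenessOfHandZOne}, and Pinsker. You also correctly handle the one Marton-specific subtlety---that the perturbed set in the second block is all of $\mathcal{H}^{(n)}_{V_2|V_1}$ rather than just $\mathcal{M}^{(n)}_2$, and that the Bhattacharyya lower bound there holds by definition of $\mathcal{H}^{(n)}_{V_2|V_1}$---so the same $2\delta_n$ per-index estimate applies and the bound goes through unchanged.
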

\begin{proof} \emph{Omitted.} The proof follows via the chain rule for KL-divergence and is identical to the previous proofs of Lemma~\ref{lemma:TVBound} and Lemma~\ref{lemma:TVBoundSuperposition}. \end{proof}

\subsection{Error Sequences}

The decoding protocols for $\mathcal{D}_1$ and $\mathcal{D}_2$ were established in Section~\ref{sec:DecodingMarton}. To analyze the probability of error of successive cancelation (SC) decoding, consider the sequences $u_1^{n}$ and $u_2^{n}$ formed at the encoder, and the resulting observations $y_1^{n}$ and $y_2^{n}$ received by the decoders. The effective polarized channel $P^{\phi}_{Y_1^{n}Y_2^{n}|U_1^{n}U_2^{n}}\bigl(y_1^{n}, y_2^{n} \bigl| u_1^{n}, u_2^{n}\bigl)$ was defined in~\eqref{eqn:EffectivePolarizedChannelMarton} for a fixed $\phi$ function. It is convenient to group the sequences together and consider all tuples $(u_1^{n}, u_2^{n}, y_1^{n}, y_2^{n})$.

Decoder $\mathcal{D}_1$ makes an SC decoding error on the $j$-th bit for the following tuples:
\begin{align}
\mathcal{T}_{1}^{j} & \triangleq \Bigl\{\bigl(u_1^{n}, u_2^{n}, y_1^{n}, y_2^{n}\bigl): \notag \\
& ~~ P_{U_1^{j}\bigl|U_1^{1:j-1}Y_1^{n}}\bigl(u_1(j)\bigl|u_1^{1:j-1}, y_1^{n}\bigl) \leq \notag \\
& ~~ P_{U_1^{j}|U_1^{1:j-1}Y_1^{n}}\bigl(u_1(j) \oplus 1 \bigl|u_1^{1:j-1}, y_1^{n}\bigl)\Bigl\}.
\end{align}
The set $\mathcal{T}_{1}^{j}$ represents those tuples causing an error at $\mathcal{D}_1$ in the case $u_1(j)$ is inconsistent with respect to observations $y_1^{n}$ and the decoding rule. Similarly, decoder $\mathcal{D}_2$ makes an SC decoding error on the $j$-th bit for the following tuples:
\begin{align}
\mathcal{T}_2^{j} & \triangleq \Bigl\{\bigl(u_1^{n}, u_2^{n}, y_1^{n}, y_2^{n}\bigl):  \notag \\
& ~~ P_{U_2\bigl|U_2^{1:j-1}Y_2^{n}}\bigl(u_2\bigl|u_2^{1:j-1},y_2^{n}\bigl) \leq \notag \\
& ~~ P_{U_2\bigl|U_2^{1:j-1}Y_2^{n}}\bigl(u_2 \oplus 1 \bigl| u_2^{1:j-1}, y_2^{n}\bigl)\Bigl\}. \notag
\end{align}
The set $\mathcal{T}_2^{j}$ represents those tuples causing an error at $\mathcal{D}_2$ in the case $u_2(j)$ is inconsistent with respect to observations $y_2^{n}$ and the decoding rule. The set of tuples causing an error is
\begin{align}
\mathcal{T}_{1} & \triangleq \bigcup_{j \in \mathcal{M}_1^{(n)}} \mathcal{T}_{1}^{j}, \label{eqn:Total1ErrorTupleMarton} \\
\mathcal{T}_{2} & \triangleq \bigcup_{j \in \mathcal{L}^{(n)}_{V_2|V_1}} \mathcal{T}_{2}^{j}, \label{eqn:Total2ErrorTupleMarton} \\
\mathcal{T} & \triangleq \mathcal{T}_{1} \cup \mathcal{T}_{2}. \label{eqn:TotalErrorTupleSuperposition}
\end{align}
The goal is to show that the probability of choosing tuples of error sequences in the set $\mathcal{T}$ is small under the distribution induced by the broadcast code.

\subsection{Average Error Probability}

If the encoder and decoders share randomized maps $\Psi_1^{(j)}$, $\Psi_{2}^{(j)}$, and $\Gamma$, then the average probability of error is a random quantity determined as follows
\begin{align}
& P_e^{(n)}\left[\{\Psi_1^{(j)}, \Psi_2^{(j)}, \Gamma\}\right] = \notag \\
& ~~ \sum_{\{u_1^{n}, u_2^{n}, y_1^{n}, y_2^{n}\} \in \mathcal{T}} \Biggl[ P^{\phi}_{Y_1^{n}Y_2^{n}\bigl|U_1^{n}U_2^{n}}\bigl(y_1^{n}, y_2^{n}\bigl|u_1^{n}, u_2^{n}\bigl) \notag \\
& ~~ \cdot \frac{1}{2^{nR_1}} \prod_{j \in [n]: j \notin \mathcal{M}_1^{(n)}} \indicator{\Psi_1^{(j)}\left(u_1^{1:j-1}\right) = u_1(j)} \notag \\
& ~~ \cdot \frac{1}{2^{nR_2}} \prod_{j \in \mathcal{H}^{(n)}_{V_2|V_1} \backslash \mathcal{M}_2^{(n)}} \indicator{\Gamma(j) = u_2(j)} \notag \\
& ~~ \cdot \prod_{j \in [n]: j \notin \mathcal{H}^{(n)}_{V_2|V_1}} \indicator{\Psi_2^{(j)}\left(u_2^{1:j-1}, u_1^{n}\matbold{G}_n \right) = u_2(j)}\Biggl]. \notag
\end{align}
By averaging over the randomness in the encoders and decoders, the expected block error probability $P_e^{(n)}[\{\Psi_1^{(j)}, \Psi_2^{(j)}\}, \Gamma]$ is upper bounded in the following lemma.
\vspace{0.05in}
\begin{lemma}\label{theorem:ErrorProbMarton} Consider the polarization-based Marton code described in Section~\ref{sec:EncodingDecodingMarton} and Section~\ref{sec:DecodingMarton}. Let $R_1$ and $R_2$ be the broadcast rates selected according to the Bhattacharyya criterion given in Proposition~\ref{thm:RateOfPolarizationMarton}. Then for $0 < \beta < 1$ and sufficiently large $n$,
\begin{align}
\mathbb{E}_{\{\Psi_1^{(j)}, \Psi_2^{(j)}, \Gamma\}} \Bigl[ P_e^{(n)}[\{\Psi_1^{(j)}, \Psi_2^{(j)}, \Gamma\}] \Bigl] < 2^{-n^{\beta}}. \notag
\end{align}
\end{lemma}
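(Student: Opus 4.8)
The plan is to follow the template of the proofs of Lemma~\ref{theorem:ErrorProb} and Lemma~\ref{theorem:ErrorProbSuperposition}: push the expectation over the shared random maps inside the sum, recognize the resulting weights as the measure $Q$ of~\eqref{eqn:QUJOINTMarton} composed with the effective polarized channel, bound the $Q$-probability of the error-tuple set $\mathcal{T}$ by a union bound together with Lemma~\ref{lemma:TVBoundMarton}, and finally bound the $P$-probability of $\mathcal{T}$ by the classical Bhattacharyya union bound for successive-cancellation decoding.

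\emph{Step 1 (collapsing the expectation).} First I would take $\mathbb{E}_{\{\Psi_1^{(j)},\Psi_2^{(j)},\Gamma\}}$ of the displayed expression for $P_e^{(n)}[\{\Psi_1^{(j)},\Psi_2^{(j)},\Gamma\}]$. Each $\mathbb{E}[\indicator{\Psi_1^{(j)}(u_1^{1:j-1}) = u_1(j)}]$ equals $\lambda_0(\cdot)$ or $1-\lambda_0(\cdot)$, i.e. exactly the conditional $Q(u_1(j)\,|\,u_1^{1:j-1})$ on the non-message indices; likewise for $\Psi_2^{(j)}$; and each $\mathbb{E}[\indicator{\Gamma(j)=u_2(j)}]$ contributes a factor $\tfrac12$ on $\mathcal{H}^{(n)}_{V_2|V_1}\setminus\mathcal{M}_2^{(n)}$. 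Combined with the normalizations $\tfrac{1}{2^{nR_1}}=\prod_{j\in\mathcal{M}_1^{(n)}}\tfrac12$ and $\tfrac{1}{2^{nR_2}}=\prod_{j\in\mathcal{M}_2^{(n)}}\tfrac12$, the entire product reproduces $Q(u_1^n,u_2^n)$ as defined in~\eqref{eqn:QUJOINTMarton}. Hence $\mathbb{E}_{\{\Psi_1^{(j)},\Psi_2^{(j)},\Gamma\}}[P_e^{(n)}] = \sum_{(u_1^n,u_2^n,y_1^n,y_2^n)\in\mathcal{T}} Q(u_1^n,u_2^n)\, P^{\phi}_{Y_1^{n}Y_2^{n}|U_1^{n}U_2^{n}}(y_1^n,y_2^n\,|\,u_1^n,u_2^n)$, which I abbreviate $Q^{\phi}(\mathcal{T})$.

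\emph{Step 2 (transfer and Bhattacharyya bound).} By~\eqref{eqn:TotalErrorTupleSuperposition} and a union bound, $Q^{\phi}(\mathcal{T}) \le Q^{\phi}(\mathcal{T}_1) + Q^{\phi}(\mathcal{T}_2)$. Since $Q^{\phi}$ and the genuine joint law $P^{\phi}$ (in which $U_1^n,U_2^n$ are the polar transforms of the i.i.d. $V_1^n,V_2^n$ and $y_1^n,y_2^n$ pass through the same kernel $P^{\phi}_{Y_1^{n}Y_2^{n}|U_1^{n}U_2^{n}}$ of~\eqref{eqn:EffectivePolarizedChannelMarton}) differ only through the replacement of $P$ by $Q$ on $(U_1^n,U_2^n)$, for every event $\mathcal{A}$ we have $|Q^{\phi}(\mathcal{A}) - P^{\phi}(\mathcal{A})| \le \sum_{u_1^n,u_2^n}|P(u_1^n,u_2^n)-Q(u_1^n,u_2^n)| \le 2^{-n^{\beta}}$ by Lemma~\ref{lemma:TVBoundMarton}. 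It then remains to bound $P^{\phi}(\mathcal{T}_1)$ and $P^{\phi}(\mathcal{T}_2)$, which are the (genie-aided) successive-cancellation error probabilities; the standard pairwise estimate gives $P^{\phi}(\mathcal{T}_1)\le\sum_{j\in\mathcal{M}_1^{(n)}}Z(U_1(j)\,|\,U_1^{1:j-1},Y_1^n)$ and $P^{\phi}(\mathcal{T}_2)\le\sum_{j\in\mathcal{L}^{(n)}_{V_2|V_1}}Z(U_2(j)\,|\,U_2^{1:j-1},Y_2^n)$. By Definition~\ref{def:MessageSetsMarton}, $\mathcal{M}_1^{(n)}\subseteq\mathcal{L}^{(n)}_{V_1|Y_1}$, and by the alignment Lemma~\ref{lemma:NestedSetsMartonCoding}, $\mathcal{L}^{(n)}_{V_2|V_1}\subseteq\mathcal{L}^{(n)}_{V_2|Y_2}$, so every Bhattacharyya term above is at most $\delta_n=2^{-n^{\beta}}$ and each probability is at most $n\,2^{-n^{\beta}}$. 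The partially-polarized bits on $\Delta_2$ are revealed to $\mathcal{D}_2$ and contribute no error term; their number $o(n)$ affects only the rate. Collecting, $\mathbb{E}[P_e^{(n)}]\le 2n\,2^{-n^{\beta}} + 2\cdot 2^{-n^{\beta}}$, which is $<2^{-n^{\beta'}}$ for any $\beta'<\beta$ and $n$ large; after renaming the polarization exponent this is the claimed bound. Since it holds in expectation over $\{\Psi_1^{(j)},\Psi_2^{(j)},\Gamma\}$, at least one deterministic choice of maps achieves it.

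\emph{Main obstacle.} The delicate point, and where Marton differs from superposition, is the bookkeeping in Step~1 and the consistency of the decoders: one must check that the encoder's random fills genuinely induce $Q$ so the expectation collapses, and that on every index $\mathcal{D}_2$ does \emph{not} successively decode, its frozen ($\Gamma$) or genie value agrees with the encoder's bit — which is exactly what the containments $\mathcal{H}^{(n)}_{V_2|Y_2}\subseteq\mathcal{H}^{(n)}_{V_2|V_1}$ and $\mathcal{L}^{(n)}_{V_2|V_1}\subseteq\mathcal{L}^{(n)}_{V_2|Y_2}$ from Lemma~\ref{lemma:NestedSetsMartonCoding}, plus the $o(n)$ genie bits on $\Delta_2$, supply, and what lets $\mathcal{D}_2$ decode using $y_2^n$ alone without reconstructing $v_1^n$. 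One also has to confirm that the effective polarized channel $P^{\phi}_{Y_1^{n}Y_2^{n}|U_1^{n}U_2^{n}}$ behaves as a fixed memoryless kernel under the chain-rule/total-variation machinery so that Lemma~\ref{lemma:TVBoundMarton} transfers verbatim. The remaining pieces — the union bound, the $Z$-sum estimate for SC decoding, and the $Q$-to-$P$ transfer — are routine and mirror the proofs of Lemma~\ref{theorem:ErrorProb} and Lemma~\ref{theorem:ErrorProbSuperposition}.
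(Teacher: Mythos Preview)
Your proposal is correct and follows essentially the same approach as the paper: collapse the expectation over the random maps to recognize the $Q$-measure of~\eqref{eqn:QUJOINTMarton}, transfer from $Q^{\phi}$ to $P^{\phi}$ via the total-variation bound of Lemma~\ref{lemma:TVBoundMarton}, and then bound $P^{\phi}(\mathcal{T})$ by the Bhattacharyya union bound, arriving at $2n\delta_n$ plus the total-variation term. The only cosmetic difference is that the paper takes the Bhattacharyya sum for $\mathcal{T}_2$ directly over $j\in\mathcal{L}^{(n)}_{V_2|Y_2}$ (the indices where $\mathcal{D}_2$ actually makes SC decisions) rather than over $\mathcal{L}^{(n)}_{V_2|V_1}$ followed by the containment from Lemma~\ref{lemma:NestedSetsMartonCoding}, but either way each term is at most $\delta_n$ and the bound is the same.
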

\begin{proof} See Section~\ref{sec:AppendixMarton} of the Appendices.
\end{proof}
\vspace{0.1in}
If the average probability of block error decays to zero in expectation over the random maps $\{\Psi_1^{(j)}\}$, $\{\Psi_2^{(j)}\}$, and $\Gamma$, then there must exist at least one fixed set of maps for which $P_e^{(n)} \rightarrow 0$. Hence, polar codes for Marton's inner bound exist under suitable restrictions on distributions and they achieve reliable transmission according to the advertised rates (except for a small set of $o(n)$ polarization indices as is discussed next).

\subsection{Rate Penalty Due to Partial Polarization}

Lemma~\ref{theorem:ErrorProbMarton} is true assuming that decoder $\mathcal{D}_2$ obtains ``genie-given'' bits for the set of indices $\Delta_2$ defined in~\eqref{eqn:MartonDelta2}. The set $\Delta_2$ represents those indices that are partially-polarized and which cause a slight misalignment of polarization indices in the Marton scheme. Fortunately, the set $\Delta_2$ contains a vanishing fraction of indices: $\frac{1}{n}\bigl| \Delta_2 \bigl| \leq \eta$ for $\eta > 0$ arbitrarily small and $n$ sufficiently large. Therefore, a two-phase strategy suffices for sending the ``genie-given'' bits. In the first phase of communication, the encoder sends several $n$-length blocks while decoder $\mathcal{D}_2$ waits to decode. After accumulating several blocks of output sequences, the encoder transmits all the known bits in the set $\Delta_2$ for all the first-phase transmissions. The encoder and decoder can use any reliable point-to-point polar code with non-vanishing rate for communication. Having received the ``genie-aided'' bits in the second-phase, the second receiver then decodes all the first-phase blocks. The number of blocks sent in the first-phase is $\mathcal{O}(\frac{1}{\eta})$. The rate penalty is $\mathcal{O}(\eta)$ where $\eta$ can be made arbitrarily small. A similar argument was provided in~\cite{korada10} for designing polar codes for the Gelfand-Pinsker problem.

\vspace{-0.025in}
\section{Conclusion}

Coding for broadcast channels is fundamental to our understanding of communication systems. Broadcast codes based on polarization methods achieve rates on the capacity boundary for several classes of DM-BCs. In the case of $m$-user deterministic DM-BCs, polarization of random variables from the channel output provides the ability to extract uniformly random message bits while maintaining broadcast constraints at the encoder. As referenced in the literature, maintaining  multi-user constraints for the DM-BC is a difficult task for traditional belief propagation algorithms and LDPC codes.

\vspace{0.1in}
For two-user noisy DM-BCs, polar codes were designed based on Marton's coding strategy and Cover's superposition strategy. Constraints on auxiliary and input distributions were placed in both cases to ensure alignment of polarization indices in the multi-user setting. The asymptotic behavior of the average error probability was shown to be $P_e^{(n)} = \mathcal{O}(2^{-n^{\beta}})$ with an encoding and decoding complexity of $\mathcal{O}(n \log n)$. The next step is to supplement the theory with experimental evidence of the error-correcting capability of polar codes over simulated channels for finite code lengths. The results demonstrate that polar codes have a potential for use in several \emph{network communication} scenarios.

\appendices

\section{Polar Coding Lemmas}\label{sec:PolarCodingLemmas}

The following lemmas provide a basis for proving polar coding theorems. A subset of the lemmas were proven in different contexts, e.g., channel vs. source coding, and contain citations to references.
\begin{lemma}\label{lemma:ConditionalEntropyAndKLDistance}
Consider two random variables $X \in \{0,1\}$ and $Y \in \mathcal{Y}$ with joint distribution $P_{X,Y}(x,y)$. Let $Q(x|y) = \frac{1}{2}$ denote a uniform conditional distribution for $x \in \{0,1\}$ and $y \in \mathcal{Y}$. Then the following identity holds.
\begin{align}
D\left(P_{X|Y}(x|y) \Bigl\| Q(x|y)\right) & = 1 - H(X|Y). \label{eqn:EntropyAndKL}
\end{align}
\end{lemma}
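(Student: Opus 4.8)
The plan is to unfold the definition of the conditional Kullback--Leibler divergence (the divergence between the conditional laws, averaged over $Y$) and split the logarithm into two additive pieces. Writing
\begin{align}
D\left(P_{X|Y}(x|y) \,\Bigl\| Q(x|y)\right) &= \sum_{y \in \mathcal{Y}} P_Y(y) \sum_{x \in \{0,1\}} P_{X|Y}(x|y) \log_2 \frac{P_{X|Y}(x|y)}{Q(x|y)} \notag \\
&= \sum_{x \in \{0,1\},\, y \in \mathcal{Y}} P_{X,Y}(x,y) \log_2 \frac{P_{X|Y}(x|y)}{1/2}, \notag
\end{align}
and using $\log_2\frac{a}{1/2} = \log_2 a + \log_2 2 = \log_2 a + 1$, the right-hand side decomposes as $\sum_{x,y} P_{X,Y}(x,y)\log_2 P_{X|Y}(x|y)$ plus $\sum_{x,y} P_{X,Y}(x,y)$.

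First I would identify the first sum with $-H(X|Y)$ directly from the definition of conditional entropy (with the usual convention $0\log_2 0 = 0$ so the degenerate terms vanish). Then I would note that the second sum equals $1$ because $P_{X,Y}$ is a probability distribution. Adding the two contributions yields $1 - H(X|Y)$, which is the claimed identity~\eqref{eqn:EntropyAndKL}.

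The computation is entirely routine; the only point worth flagging is \emph{why} the additive constant is exactly $1$: it is because the reference measure $Q(x|y)=\tfrac12$ is the uniform law on the \emph{binary} alphabet $\{0,1\}$ and entropy/divergence are measured in bits, so $\log_2\bigl(1/Q(x|y)\bigr)=1$ for every $(x,y)$. For a $q$-ary alphabet the statement would read $D = \log_2 q - H(X|Y)$, which is the form needed if one later extends the construction to $q$-ary polarization. There is no genuine obstacle here --- this lemma is a bookkeeping identity whose real purpose is to let the later appendices translate the Bhattacharyya/entropy polarization bounds (Propositions on the message sets) into the total-variation bounds of Lemmas~\ref{lemma:TVBound}, \ref{lemma:TVBoundSuperposition}, and~\ref{lemma:TVBoundMarton} via the chain rule for KL divergence.
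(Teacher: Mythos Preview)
Your proof is correct and essentially the same as the paper's: both unfold the definitions and split the logarithm $\log_2\frac{P_{X|Y}(x|y)}{1/2}$ into the $-H(X|Y)$ piece and the constant $1$. The only cosmetic difference is that the paper starts by expanding $H(X|Y)$ and arrives at $1-D(\cdot\|\cdot)$, whereas you start from $D(\cdot\|\cdot)$ and arrive at $1-H(X|Y)$; the algebra is identical.
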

\begin{proof} The identity follows from standard definitions of entropy and Kullback-Leibler distance.
\begin{align}
& H(X|Y) \notag \\
& = \sum_{y \in \mathcal{Y}} P_{Y}(y)\sum_{x \in \{0,1\}} P_{X|Y}(x|y) \log_2 \frac{1}{P_{X|Y}(x|y)} \notag \\
& = \sum_{y \in \mathcal{Y}} P_{Y}(y) \sum_{x \in \{0,1\}} P_{X|Y}(x|y) \log_2 \frac{1}{Q(x|y)} \notag \\
& ~~~ - \sum_{y \in \mathcal{Y}} P_{Y}(y) \sum_{x \in \{0,1\}} P_{X|Y}(x|y) \log_2 \frac{P_{X|Y}(x|y)}{Q(x|y)} \notag \\
& = \sum_{y \in \mathcal{Y}} P_{Y}(y) \left[ 1 - \sum_{x \in \{0,1\}} P_{X|Y}(x|y) \log_2 \frac{P_{X|Y}(x|y)}{Q(x|y)} \right] \notag \\
& = 1 - D\left(P_{X|Y}(x|y) \Bigl\| Q(x|y)\right). \notag
\end{align}
\end{proof}
\begin{lemma}[Estimating The Bhattacharyya Parameter]\label{lemma:BhattacharyyaMonteCarlo} Let $(T, V) \sim P_{T, V}(t,v)$ where $T \in \{0,1\}$ and $V \in \mathcal{V}$ where $\mathcal{V}$ is an arbitrary discrete alphabet. Define a likelihood function $L(v)$ and inverse likelihood function $L^{-1}(v)$ as follows.
\begin{align}
L(v) \triangleq \frac{P_{T|V}(0|v)}{P_{T|V}(1|v)}, ~~ L^{-1}(v) \triangleq \frac{P_{T|V}(1|v)}{P_{T|V}(0|v)} \notag
\end{align}
To account for degenerate cases in which $P_{T|V}(t|v) = 0$, define the following function,
\begin{subnumcases}{\varphi(t,v) \triangleq }
0 & $\mbox{\emph{if}}~\indicator{P_{T|V}(t|v) = 0}$ \notag \\
L(v) & $\mbox{\emph{if}}~\indicator{P_{T|V}(t|v) > 0}$~\mbox{and}~$\indicator{t=1}$ \notag \\
L^{-1}(v) & $\mbox{\emph{if}}~\indicator{P_{T|V}(t|v) > 0}$~\mbox{and}~$\indicator{t=0}$ \notag
\end{subnumcases}
In order to estimate $Z(T|V) \in [0,1]$, it is convenient to sample from $P_{TV}(t,v)$ and express $Z(T|V)$ as an expectation over random variables $T$ and $V$,
\begin{align}
Z(T|V) & = \mathbb{E}_{T,V} \sqrt{\varphi(T,V)}. \label{eqn:BhattEstimate1}
\end{align}
\end{lemma}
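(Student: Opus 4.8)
The plan is to expand $\mathbb{E}_{T,V}\sqrt{\varphi(T,V)}$ directly by the tower property, conditioning first on $V$ and then summing over the binary alphabet of $T$. Writing
\begin{align}
\mathbb{E}_{T,V}\sqrt{\varphi(T,V)} = \sum_{v \in \mathcal{V}} P_V(v) \sum_{t \in \{0,1\}} P_{T|V}(t|v)\sqrt{\varphi(t,v)}, \notag
\end{align}
the task reduces to proving that the inner sum equals $2\sqrt{P_{T|V}(0|v)P_{T|V}(1|v)}$ for every $v$, since by Definition~\ref{def:Bhattacharyya} we have $Z(T|V) = 2\sum_{v} P_V(v)\sqrt{P_{T|V}(0|v)P_{T|V}(1|v)}$, so summing the per-$v$ identity against $P_V(v)$ finishes the argument.

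First I would dispatch the generic slices: fix $v$ with $P_{T|V}(0|v) > 0$ and $P_{T|V}(1|v) > 0$. For $t = 1$ the definition gives $\varphi(1,v) = L(v) = P_{T|V}(0|v)/P_{T|V}(1|v)$, and hence $P_{T|V}(1|v)\sqrt{\varphi(1,v)} = \sqrt{P_{T|V}(1|v)^2\, P_{T|V}(0|v)/P_{T|V}(1|v)} = \sqrt{P_{T|V}(0|v)P_{T|V}(1|v)}$; symmetrically, for $t = 0$ we get $\varphi(0,v) = L^{-1}(v)$ and $P_{T|V}(0|v)\sqrt{\varphi(0,v)} = \sqrt{P_{T|V}(0|v)P_{T|V}(1|v)}$. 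Adding the two summands yields $2\sqrt{P_{T|V}(0|v)P_{T|V}(1|v)}$, exactly as needed.

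Next I would handle the degenerate slices, which is the only place the piecewise definition of $\varphi$ earns its keep. Suppose $P_{T|V}(t|v) = 0$ for one value $t$ (so $P_{T|V}(1-t|v) = 1$). The summand for that $t$ vanishes, both because its leading factor $P_{T|V}(t|v)$ is zero and because $\varphi(t,v)$ is defined to equal $0$ precisely on this event, so no ill-defined ratio ever appears. The remaining summand is also zero: its value is $P_{T|V}(1-t|v)\sqrt{\varphi(1-t,v)}$, and $\varphi(1-t,v)$ is the likelihood ratio whose numerator is the vanishing probability $P_{T|V}(t|v)$, hence $\varphi(1-t,v) = 0$. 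Thus the inner sum is $0 = 2\sqrt{0\cdot 1} = 2\sqrt{P_{T|V}(0|v)P_{T|V}(1|v)}$, in agreement with the generic formula. Multiplying by $P_V(v)$ and summing over $v$ reproduces $Z(T|V)$, giving~\eqref{eqn:BhattEstimate1}.

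I do not expect a real obstacle: the statement is essentially a bookkeeping identity, and $\varphi$ was defined so that the reweighting $P_{T|V}(t|v)\sqrt{\varphi(t,v)} = \sqrt{P_{T|V}(0|v)P_{T|V}(1|v)}$ holds on every atom. The only point requiring care is to confirm that in each branch of the definition the quantity under the square root is nonnegative and every denominator actually used is strictly positive, so that the algebraic manipulations above are literally well-defined rather than merely formal; this is immediate from the conditions $\mathbbm{1}[P_{T|V}(t|v)>0]$ attached to the nontrivial branches.
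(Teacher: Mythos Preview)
Your argument is correct and is essentially the same computation the paper performs, just run in the opposite direction: the paper starts from the definition of $Z(T|V)$, writes it as $\sum_{v}P_V(v)\sum_t \sqrt{P_{T|V}(t|v)(1-P_{T|V}(t|v))}$, and then factors to obtain $\mathbb{E}_{T,V}\sqrt{\varphi(T,V)}$, whereas you start from the expectation and reduce it to $Z(T|V)$. Your treatment of the degenerate branch is slightly more explicit than the paper's, but the substance is identical.
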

\begin{proof} The following forms of the Bhattacharyya parameter are equivalent.
\begin{align}
Z(T|V) & \triangleq 2 \sum_{v \in \mathcal{V}} P_{V}(v) \sqrt{P_{T|V}(0|v)P_{T|V}(1|v)} \notag \\
& = 2 \sum_{v \in \mathcal{V}} \sqrt{P_{TV}(0,v)P_{TV}(1,v)} \notag \\
& = \sum_{v \in \mathcal{V}} P_{V}(v) \sum_{t \in \{0,1\}} \sqrt{P_{T|V}(t|v)(1 - P_{T|V}(t|v))} \notag \\
& = \sum_{t \in \{0,1\}} \sum_{\begin{subarray}{c}
        v: P_{T|V}(t|v) > 0 \\ v \in \mathcal{V}
      \end{subarray}} P_{TV}(t,v) \sqrt{ \frac{1 - P_{T|V}(t|v)}{P_{T|V}(t|v)} } \notag \\
& = \mathbb{E}_{T,V}\sqrt{\varphi(T,V)}. \notag
\end{align}
\end{proof}
\begin{lemma}[Stochastic Degradation~(cf.~\cite{koradaphd09})]\label{lemma:DegradationBhatt} Consider discrete random variables $V$, $Y_1$, and $Y_2$. Assume that $|\mathcal{V}| = 2$ and that discrete alphabets $\mathcal{Y}_1$ and $\mathcal{Y}_2$ have an \emph{arbitrary} size. Then
\begin{align}
& P_{Y_1|V}(y_1|v) \degBC P_{Y_2|V}(y_2|v) \Rightarrow Z(V|Y_2) \geq Z(V|Y_1). \label{eqn:DegradationBhatt}
\end{align}
\end{lemma}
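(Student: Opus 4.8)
The plan is to unwind the definition of stochastic degradation and show that the Bhattacharyya parameter $Z(V|\cdot)$ can only increase when the observation channel is degraded. Recall that $P_{Y_1|V}(y_1|v) \degBC P_{Y_2|V}(y_2|v)$ means there is a channel $\tilde{P}_{Y_2|Y_1}(y_2|y_1)$ with $P_{Y_2|V}(y_2|v) = \sum_{y_1} P_{Y_1|V}(y_1|v)\tilde{P}_{Y_2|Y_1}(y_2|y_1)$. First I would write $Z(V|Y_2)$ in the symmetric two-mass form from Definition~\ref{def:Bhattacharyya}: since $|\mathcal{V}| = 2$, say $\mathcal{V} = \{0,1\}$ with $P_V(0) = p$, $P_V(1) = 1-p$, we have
\begin{align}
Z(V|Y_2) & = 2\sum_{y_2} \sqrt{P_{VY_2}(0,y_2)\,P_{VY_2}(1,y_2)} \notag \\
& = 2\sum_{y_2} \sqrt{\Bigl(p\sum_{y_1} P_{Y_1|V}(y_1|0)\tilde{P}(y_2|y_1)\Bigr)\Bigl((1-p)\sum_{y_1'} P_{Y_1|V}(y_1'|1)\tilde{P}(y_2|y_1')\Bigr)}. \notag
\end{align}
The key step is then to bound this from below by the corresponding expression for $Z(V|Y_1)$. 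Fixing $y_2$, apply the Cauchy--Schwarz / Minkowski-type inequality
\begin{align}
\sqrt{\Bigl(\sum_{y_1} a_{y_1}\Bigr)\Bigl(\sum_{y_1'} b_{y_1'}\Bigr)} & \geq \sum_{y_1} \sqrt{a_{y_1} b_{y_1}}, \notag
\end{align}
valid for nonnegative reals, with $a_{y_1} = p\,P_{Y_1|V}(y_1|0)\tilde{P}(y_2|y_1)$ and $b_{y_1} = (1-p)\,P_{Y_1|V}(y_1|1)\tilde{P}(y_2|y_1)$. This gives $\sqrt{P_{VY_2}(0,y_2)P_{VY_2}(1,y_2)} \geq \sum_{y_1} \tilde{P}(y_2|y_1)\sqrt{P_{VY_1}(0,y_1)P_{VY_1}(1,y_1)}$.

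Summing over $y_2$ and exchanging the order of summation, $\sum_{y_2}\tilde{P}(y_2|y_1) = 1$ for each $y_1$, so the right-hand side collapses to $\sum_{y_1}\sqrt{P_{VY_1}(0,y_1)P_{VY_1}(1,y_1)} = \tfrac12 Z(V|Y_1)$. Multiplying through by $2$ yields $Z(V|Y_2) \geq Z(V|Y_1)$, which is exactly~\eqref{eqn:DegradationBhatt}. I would also note the degenerate cases where some $P_{VY_1}(v,y_1) = 0$ are handled automatically since all terms are nonnegative and the inequality $\sqrt{(\sum a)(\sum b)} \geq \sum\sqrt{ab}$ still holds.

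The main obstacle is really just identifying and justifying the correct elementary inequality; once the Minkowski-type bound $\sqrt{(\sum a_i)(\sum b_i)} \ge \sum \sqrt{a_i b_i}$ is in hand (it follows by squaring both sides and invoking Cauchy--Schwarz, or by noting $\|\sqrt{a}\|_2\|\sqrt{b}\|_2 \ge \langle \sqrt{a},\sqrt{b}\rangle$ is the wrong direction, so one instead squares: $(\sum a)(\sum b) = \sum_{i,j} a_i b_j \ge \sum_{i,j}\sqrt{a_i b_i a_j b_j} \cdot [\text{...}]$ — care is needed here, the clean route is $\sum_{i,j} a_i b_j \ge \sum_i a_i b_i + \sum_{i\ne j}\sqrt{a_i b_i}\sqrt{a_j b_j}$ via AM--GM on the cross terms $a_i b_j + a_j b_i \ge 2\sqrt{a_i b_i a_j b_j}$, giving $(\sum a)(\sum b) \ge (\sum_i \sqrt{a_i b_i})^2$), the rest is routine bookkeeping. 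An alternative, perhaps cleaner, route that avoids this inequality altogether is to cite the known fact that $Z(\cdot|\cdot)$ is monotone under degradation of the side-information channel, which is standard in the polar coding literature (cf.~\cite{koradaphd09}); but since the statement explicitly restricts to $|\mathcal{V}| = 2$ and asks for a self-contained argument, the direct computation above is the natural choice.
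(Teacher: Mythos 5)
Your proof is correct and follows the same route as the paper's: substitute the degradation kernel $\tilde{P}_{Y_2|Y_1}$ into $Z(V|Y_2)$, apply the Cauchy--Schwarz consequence $\sqrt{(\sum_i a_i)(\sum_i b_i)} \ge \sum_i \sqrt{a_i b_i}$ term-by-term in $y_2$, then exchange sums and use $\sum_{y_2}\tilde{P}_{Y_2|Y_1}(y_2|y_1)=1$ to recover $Z(V|Y_1)$. The only cosmetic difference is that the paper factors $\sqrt{P_V(0)P_V(1)}$ out up front rather than carrying $p$ and $1-p$ inside $a_{y_1}, b_{y_1}$, and your closing digression on deriving the inequality via AM--GM on the cross terms is a valid justification (equivalently, it is literally Cauchy--Schwarz applied to the vectors $(\sqrt{a_{y_1}})_{y_1}$ and $(\sqrt{b_{y_1}})_{y_1}$).
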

\begin{proof} Beginning with the definition of the Bhattacharyya parameter leads to the following derivation:
\begin{align}
& Z(V|Y_2) \notag \\
& \triangleq 2 \sum_{y_2} \sqrt{P_{VY_2}(0,y_2)P_{VY_2}(1,y_2)} \notag \\
& = 2 \sum_{y_2} \sqrt{P_{V}(0)P_{V}(1)}\sqrt{P_{Y_2|V}(y_2|0)P_{Y_2|V}(y_2|1)} \notag \\
& = 2 \sqrt{P_{V}(0)P_{V}(1)} \sum_{y_2} \Biggl[\sqrt{ \sum_{y_1} P_{Y_1|V}(y_1|0) \tilde{P}_{Y_2|Y_1}(y_2|y_1) } \notag \\
&  ~~~~~~~~~~~~~~~~ \cdot \sqrt{ \sum_{y_1} P_{Y_1|V}(y_1|1) \tilde{P}_{Y_2|Y_1}(y_2|y_1) }\Biggl]. \notag
\end{align}
Then applying the Cauchy--Schwarz inequality yields
\begin{align}
& Z(V|Y_2) \notag \\
& \geq 2 \sqrt{P_{V}(0)P_{V}(1)} \sum_{y_2} \Biggl[ \sum_{y_1} \sqrt{ P_{Y_1|V}(y_1|0) \tilde{P}_{Y_2|Y_1}(y_2|y_1) } \notag \\
&  ~~~~~~~~~~~~~~~~ \cdot \sum_{y_1} \sqrt{ P_{Y_1|V}(y_1|1) \tilde{P}_{Y_2|Y_1}(y_2|y_1) }\Biggl] \notag \\
& = 2 \sqrt{P_{V}(0)P_{V}(1)} \sum_{y_2} \Biggl[ \sum_{y_1} \tilde{P}_{Y_2|Y_1}(y_2|y_1) \notag \\
& ~~~~~~~~~~~~~~~~~ \cdot \sqrt{ P_{Y_1|V}(y_1|0) P_{Y_1|V}(y_1|1) } \Biggl]. \notag
\end{align}
Interchanging the order of summations yields
\begin{align}
& Z(V|Y_2) \notag \\
& \geq 2 \sqrt{P_{V}(0)P_{V}(1)} \Biggl[ \sum_{y_1} \sqrt{ P_{Y_1|V}(y_1|0) P_{Y_1|V}(y_1|1) } \notag \\
& ~~~~~~~~~~~~~~~~~ \cdot  \sum_{y_2} \tilde{P}_{Y_2|Y_1}(y_2|y_1) \Biggl] \notag \\
& = Z(V|Y_1). \notag
\end{align}
\end{proof}

\begin{lemma}[Successive Stochastic Degradation~(cf.~\cite{koradaphd09})]\label{lemma:SuccessiveDegradationBhatt} Consider a binary random variable $V$, and discrete random variables $Y_1$ with alphabet $\mathcal{Y}_1$, and $Y_2$ with alphabet $\mathcal{Y}_2$. Assume that the joint distribution $P_{VY_1Y_2}$ obeys the constraint $P_{Y_1|V}(y_1|v) \degBC P_{Y_2|V}(y_2|v)$. Consider two $i.i.d.$ random copies $(V^{1}, Y_1^{1}, Y_2^{1})$ and $(V^{2}, Y_1^{2}, Y_2^{2})$ distributed according to $P_{VY_1Y_2}$. Define two binary random variables $U^{1} \triangleq V^{1} \oplus V^{2}$ and $U^{2} \triangleq V^{2}$. Then the following holds
\begin{align}
Z\left(U^{1}\bigl|Y_2^{1:2}\right) & \geq Z\left(U^{1}\bigl|Y_1^{1:2}\right), \label{eqn:SuccessiveDegradationBhatt1} \\
Z\left(U^{2}\bigl|U^{1}, Y_2^{1:2}\right) & \geq Z\left(U^{2}\bigl|U^{1}, Y_1^{1:2}\right). \label{eqn:SuccessiveDegradationBhatt2}
\end{align}
\end{lemma}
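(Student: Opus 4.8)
The plan is to reduce the two claims to a single application of Lemma~\ref{lemma:DegradationBhatt} by exhibiting, for each case, an appropriate pair of ``super-channels'' from a binary variable to an observation such that one is a stochastic degradation of the other. The key observation is that both $U^{1} = V^{1} \oplus V^{2}$ and $U^{2} = V^{2}$ (conditioned on $U^{1}$) are binary random variables, so Lemma~\ref{lemma:DegradationBhatt} applies verbatim once the degradation relation is established at the level of the length-$2$ polar combining step.

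For the first inequality~\eqref{eqn:SuccessiveDegradationBhatt1}, I would first check that the single-letter constraint $P_{Y_1|V}(y_1|v) \degBC P_{Y_2|V}(y_2|v)$ lifts to the pair: since $(V^{1}, Y_1^{1}, Y_2^{1})$ and $(V^{2}, Y_1^{2}, Y_2^{2})$ are i.i.d.\ copies, the product channel $V^{1:2} \mapsto Y_1^{1:2}$ is degraded to $V^{1:2} \mapsto Y_2^{1:2}$ via the product degrading kernel $\tilde{P}_{Y_2|Y_1}(y_2^{1}|y_1^{1})\tilde{P}_{Y_2|Y_1}(y_2^{2}|y_1^{2})$. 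Next, define the effective channel $W_i^{-}$ from $U^{1}$ to $Y_i^{1:2}$ induced by marginalizing over $U^{2} = V^{2}$ (uniform, or with the prior $P_V$); because degradation is preserved under the \emph{same} linear pre-processing $V^{1:2} \mapsto (U^{1}, U^{2})$ followed by the \emph{same} marginalization over $U^{2}$, the channel $U^{1} \mapsto Y_1^{1:2}$ is a stochastic degradation of $U^{1} \mapsto Y_2^{1:2}$ --- the degrading kernel is unchanged. Applying Lemma~\ref{lemma:DegradationBhatt} with $V \leftarrow U^{1}$, $Y_1 \leftarrow Y_1^{1:2}$, $Y_2 \leftarrow Y_2^{1:2}$ then yields $Z(U^{1}|Y_2^{1:2}) \geq Z(U^{1}|Y_1^{1:2})$.

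For the second inequality~\eqref{eqn:SuccessiveDegradationBhatt2}, the argument is the same but now $U^{1}$ is adjoined to the observation. I would treat $\tilde Y_i \triangleq (U^{1}, Y_i^{1:2})$ as the observation of the binary variable $U^{2} = V^{2}$. The channel $U^{2} \mapsto (U^{1}, Y_1^{1:2})$ is a stochastic degradation of $U^{2} \mapsto (U^{1}, Y_2^{1:2})$: conditioned on any value of $U^{1}$, the residual channel from $U^{2}$ to $Y_i^{1:2}$ is obtained from the i.i.d.\ pair, and the same product kernel $\tilde{P}_{Y_2|Y_1}^{\otimes 2}$ degrades $Y_1^{1:2}$ to $Y_2^{1:2}$ while acting trivially on the $U^{1}$ coordinate. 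So again Lemma~\ref{lemma:DegradationBhatt} applies with $V \leftarrow U^{2}$, and the enlarged alphabets $\mathcal{Y}_1 \times \{0,1\}$, $\mathcal{Y}_2 \times \{0,1\}$ pose no difficulty since the lemma allows arbitrary discrete $\mathcal{Y}_1,\mathcal{Y}_2$. The Bhattacharyya inequality follows directly.

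The main obstacle I anticipate is bookkeeping the precise form of the conditional distributions defining $W_i^{-}$ and $W_i^{+}$ (the $\oplus$ and $\cdot$ transforms at block length $2$) and verifying that the degrading kernel genuinely factors through only the $Y$-coordinates --- i.e.\ that combining two independent uses of a channel and then applying the polar butterfly does not create any dependence that the degrading map must account for. Concretely, one must write $W_2^{-}(y_2^{1:2}|u^{1}) = \sum_{u^{2}} \tfrac{1}{2} P_{Y_2|V}(y_2^{1}|u^{1}\oplus u^{2})P_{Y_2|V}(y_2^{2}|u^{2})$ and similarly for $W_1^{-}$, and then exhibit $W_2^{-} = W_1^{-} \circ (\tilde P_{Y_2|Y_1} \otimes \tilde P_{Y_2|Y_1})$ explicitly; the analogous identity for the $+$ transform carries the extra $U^{1}$ argument. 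Once this factorization is written out the rest is an immediate invocation of Lemma~\ref{lemma:DegradationBhatt}, so the proof is short modulo this verification.
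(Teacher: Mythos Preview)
Your proposal is correct and follows essentially the same route as the paper: establish that the polar-combined channels $U^{1}\!\to Y_i^{1:2}$ and $U^{2}\!\to (U^{1},Y_i^{1:2})$ inherit the degradation ordering via the product kernel $\tilde P_{Y_2|Y_1}^{\otimes 2}$ (acting as the identity on the $U^{1}$ coordinate in the second case), and then invoke Lemma~\ref{lemma:DegradationBhatt}. The ``bookkeeping'' you anticipate---writing out $P_{Y_i^{1:2}|U^{1}}$ and $P_{Y_i^{1:2}U^{1}|U^{2}}$ and exhibiting the factorization through $\tilde P_{Y_2|Y_1}^{\otimes 2}$---is exactly the computation the paper carries out explicitly; note also a small wording slip in your text where you twice say the $Y_1$-channel is a degradation of the $Y_2$-channel when the correct direction (which you do use in the conclusions) is the reverse.
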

\begin{proof} Given the assumptions, the following stochastic degradation conditions hold:
\begin{align}
P_{Y_1^{1}|V^{1}}(y_1^{1}|v^{1}) & \degBC P_{Y_2^{1}|V^{1}}(y_2^{1}|v^{1}), \label{eqn:SuccessiveDAssumption1} \\
P_{Y_1^{2}|V^{2}}(y_1^{2}|v^{2}) & \degBC P_{Y_2^{2}|V^{2}}(y_2^{2}|v^{2}). \label{eqn:SuccessiveDAssumption2}
\end{align}
The goal is to derive new stochastic degradation conditions for the polarized conditional distributions. The binary random variables $U^{1}$ and $U^{2}$ are not necessarily independent Bernoulli($\frac{1}{2}$) variables. Taking this into account,
\begin{align}
& P_{Y_2^{1}Y_2^{2}|U^{1}}\bigl(y_2^{1},y_2^{2}\bigl| u^{1}\bigl) \notag \\
& ~~ = \frac{1}{P_{U^{1}}(u^{1})} \sum_{u^{2} \in \{0,1\}} P_{V^{1}Y_2^{1}}\bigl(u^{1} \oplus u^{2}, y_2^{1}\bigl)P_{V^{2}Y_2^{2}}\bigl(u^{2}, y_2^{2}\bigl) \notag \\
& ~~ = \frac{1}{P_{U^{1}}(u^{1})} \sum_{u^{2} \in \{0,1\}}\Biggl[ P_{Y_2^{1}|V^{1}}\bigl(y_2^{1}\bigl| u^{1} \oplus u^{2}\bigl)P_{V^{1}}\bigl(u^{1} \oplus u^{2}\bigl) \notag \\
& ~~~~~~~~~~~~~~~~~~~~~~~~~ \cdot P_{Y_2^{2}|V^{2}}\bigl(y_2^{2}\bigl|u^{2}\bigl)P_{V^{2}}(u^{2})\Biggl]. \notag
\end{align}
Applying the property due to the assumption in~\eqref{eqn:SuccessiveDAssumption1},
\begin{align}
& P_{Y_2^{1}Y_2^{2}|U^{1}}\bigl(y_2^{1},y_2^{2}\bigl| u^{1}\bigl) \notag \\
& ~~ = \frac{1}{P_{U^{1}}(u^{1})} \sum_{u^{2} \in \{0,1\}}\Biggl[ P_{V^{1}}\bigl(u^{1} \oplus u^{2}\bigl)P_{V^{2}}(u^{2}) \notag \\
& ~~~~ \cdot \sum_{a \in \mathcal{Y}_1} P_{Y_1^{1}|V^{1}}\bigl(a \bigl| u^{1} \oplus u^{2}\bigl) \tilde{P}_{Y_2^{1}|Y_1^{1}}\bigl(y_2^{1}\bigl| a) \notag \\
& ~~~~ \cdot \sum_{b \in \mathcal{Y}_1} P_{Y_1^{2}|V^{2}}\bigl(b \bigl| u^{2}\bigl)\tilde{P}_{Y_2^{2}|Y_1^{2}}\bigl(y_2^{2}\bigl| b\bigl) \Biggl]. \notag
\end{align}
Interchanging the order of summations and grouping the terms representing $P_{Y_1^{1}Y_1^{2}|U^{1}}\bigl(y_1^{1},y_1^{2}\bigl| u^{1}\bigl)$ yields the following
\begin{align}
& P_{Y_2^{1}Y_2^{2}|U^{1}}\bigl(y_2^{1},y_2^{2}\bigl| u^{1}\bigl) \notag \\
& = \sum_{a \in \mathcal{Y}_1, b \in \mathcal{Y}_1} P_{Y_1^{1}Y_1^{2}|U^{1}}\bigl(a, b \bigl| u^{1}\bigl) \tilde{P}_{Y_2^{1}|Y_1^{1}}\bigl(y_2^{1}\bigl| a) \tilde{P}_{Y_2^{2}|Y_1^{2}}\bigl(y_2^{2}\bigl| b\bigl). \notag
\end{align}
The above derivation proves that
\begin{align}
P_{Y_1^{1}Y_1^{2}|U^{1}}\bigl(y_1^{1},y_1^{2}\bigl| u^{1}\bigl) & \degBC P_{Y_2^{1}Y_2^{2}|U^{1}}\bigl(y_2^{1},y_2^{2}\bigl| u^{1}\bigl). \notag
\end{align}
Combined with Lemma~\ref{lemma:DegradationBhatt}, this concludes the proof for the ordering of the Bhattacharyya parameters given in~\eqref{eqn:SuccessiveDegradationBhatt1}.

In a similar way, it is possible to show that
\begin{align}
& P_{Y_2^{1}Y_2^{2}U^{1}|U^{2}}\bigl(y_2^{1},y_2^{2},u^{1}\bigl| u^{2}\bigl) \notag \\
& ~~ = \frac{1}{P_{U^{2}}(u^{2})} P_{V^{1}Y_2^{1}}\bigl(u^{1} \oplus u^{2}, y_2^{1}\bigl)P_{V^{2}Y_2^{2}}\bigl(u^{2}, y_2^{2}\bigl) \notag \\
& ~~ = \frac{1}{P_{U^{2}}(u^{2})} \Biggl[ P_{Y_2^{1}|V^{1}}\bigl(y_2^{1}\bigl| u^{1} \oplus u^{2}\bigl)P_{V^{1}}\bigl(u^{1} \oplus u^{2}\bigl) \notag \\
& ~~~~~~~~~~~~~~~~~~~~~~~~~ \cdot P_{Y_2^{2}|V^{2}}\bigl(y_2^{2}\bigl|u^{2}\bigl)P_{V^{2}}(u^{2})\Biggl]. \notag
\end{align}
Applying the property due to the assumption in~\eqref{eqn:SuccessiveDAssumption2},
\begin{align}
& P_{Y_2^{1}Y_2^{2}U^{1}|U^{2}}\bigl(y_2^{1},y_2^{2},u^{1}\bigl| u^{2}\bigl) \notag \\
& ~~ = \frac{1}{P_{U^{2}}(u^{2})} \Biggl[ P_{V^{1}}\bigl(u^{1} \oplus u^{2}\bigl)P_{V^{2}}(u^{2}) \notag \\
& ~~~~ \cdot \sum_{a \in \mathcal{Y}_1} P_{Y_1^{1}|V^{1}}\bigl(a \bigl| u^{1} \oplus u^{2}\bigl) \tilde{P}_{Y_2^{1}|Y_1^{1}}\bigl(y_2^{1}\bigl| a) \notag \\
& ~~~~ \cdot \sum_{b \in \mathcal{Y}_1} P_{Y_1^{2}|V^{2}}\bigl(b \bigl| u^{2}\bigl)\tilde{P}_{Y_2^{2}|Y_1^{2}}\bigl(y_2^{2}\bigl| b\bigl) \Biggl]. \notag
\end{align}
Interchanging the order of the terms and grouping the terms representing $P_{Y_1^{1}Y_1^{2}U^{1}|U^{2}}\bigl(y_1^{1},y_1^{2}, u^{1}\bigl| u^{2}\bigl)$ yields the following
\begin{align}
& P_{Y_2^{1}Y_2^{2}U^{1}|U^{2}}\bigl(y_2^{1},y_2^{2},u^{1}\bigl| u^{2}\bigl) \notag \\
& = \sum_{a \in \mathcal{Y}_1, b \in \mathcal{Y}_1} \Biggl[ P_{Y_1^{1}Y_1^{2}U^{1}|U^{2}}\bigl(a, b, u^{1} \bigl| u^{2}\bigl) \notag \\
& ~~~~~~~~~~~~~ \tilde{P}_{Y_2^{1}|Y_1^{1}}\bigl(y_2^{1}\bigl| a) \tilde{P}_{Y_2^{2}|Y_1^{2}}\bigl(y_2^{2}\bigl| b\bigl)\Biggl], \notag \\
& = \sum_{a \in \mathcal{Y}_1, b \in \mathcal{Y}_1, c \in \{0,1\}} \Biggl[ P_{Y_1^{1}Y_1^{2}U^{1}|U^{2}}\bigl(a, b, c \bigl| u^{2}\bigl) \notag \\
& ~~~~~~~~~~~~~ \tilde{P}_{Y_2^{1}|Y_1^{1}}\bigl(y_2^{1}\bigl| a) \tilde{P}_{Y_2^{2}|Y_1^{2}}\bigl(y_2^{2}\bigl| b\bigl)\indicator{u^{1} = c} \Biggl]. \notag
\end{align}
The above derivation proves that
\begin{align}
P_{Y_1^{1}Y_1^{2}U^{1}|U^{2}}\bigl(y_1^{1},y_1^{2},u^{1}\bigl| u^{2}\bigl) & \degBC P_{Y_2^{1}Y_2^{2}U^{1}|U^{2}}\bigl(y_2^{1},y_2^{2},u^{1}\bigl| u^{2}\bigl). \notag
\end{align}
Combined with Lemma~\ref{lemma:DegradationBhatt}, this concludes the proof for the ordering of the Bhattacharyya parameters given in~\eqref{eqn:SuccessiveDegradationBhatt2}.
\end{proof}

\begin{lemma}[Pinsker's Inequality]\label{lemma:Pinsker} Consider two discrete probability measures $P(y)$ and $Q(y)$ for $y \in \mathcal{Y}$. The following inequality holds for a constant $\kappa \triangleq 2\ln 2$.
\begin{align}
\sum_{y \in \mathcal{Y}} \Bigl| P(y) - Q(y) \Bigl| \leq \sqrt{\kappa D\left(P(y) \bigl\| Q(y)\right) }. \notag
\end{align}
\end{lemma}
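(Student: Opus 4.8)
The plan is to follow the classical two-step route for Pinsker's inequality: first reduce the general-alphabet statement to a two-point alphabet, and then dispatch the two-point case by elementary calculus.

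First I would set $A \triangleq \{y \in \mathcal{Y} : P(y) \geq Q(y)\}$. Since $\sum_{y \in \mathcal{Y}}\bigl(P(y) - Q(y)\bigr) = 0$, the positive and negative parts of $P - Q$ carry equal mass, so $\sum_{y \in \mathcal{Y}} \bigl| P(y) - Q(y) \bigr| = 2\bigl(P(A) - Q(A)\bigr)$; thus it suffices to bound $P(A) - Q(A)$ in terms of $D(P\|Q)$. Applying the log-sum inequality on $A$ and separately on its complement gives $\sum_{y \in A} P(y)\log_2\tfrac{P(y)}{Q(y)} \geq P(A)\log_2\tfrac{P(A)}{Q(A)}$ together with the analogous bound on $A^{c}$; summing the two shows $D(P\|Q) \geq d_b\bigl(P(A)\,\|\,Q(A)\bigr)$, where $d_b(p\,\|\,q) \triangleq p\log_2\tfrac{p}{q} + (1-p)\log_2\tfrac{1-p}{1-q}$ is the binary relative entropy. (Equivalently, one may invoke the data-processing inequality for the two-cell partition $\{A, A^{c}\}$.)

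Second, I would establish the scalar estimate $d_b(p\,\|\,q) \geq \tfrac{2}{\ln 2}\,(p - q)^{2}$ for all $p, q \in [0,1]$. Fixing $p$ and putting $g(q) \triangleq \ln 2 \cdot d_b(p\,\|\,q) - 2(p - q)^{2}$, one has $g(p) = 0$, and after simplification $g'(q) = \tfrac{q - p}{q(1-q)} + 4(p - q) = (p - q)\bigl(4 - \tfrac{1}{q(1-q)}\bigr)$. Because $q(1-q) \leq \tfrac{1}{4}$ on $[0,1]$, the factor $4 - \tfrac{1}{q(1-q)}$ is nonpositive, so $g'(q)$ carries the opposite sign of $p - q$; hence $g$ is nonincreasing on $[0,p]$ and nondecreasing on $[p,1]$, giving $g(q) \geq g(p) = 0$ throughout. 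Combining this with the first step and $\kappa = 2\ln 2$ yields $\kappa\, D(P\|Q) \geq 2\ln 2\cdot d_b\bigl(P(A)\,\|\,Q(A)\bigr) \geq 4\bigl(P(A) - Q(A)\bigr)^{2} = \Bigl(\sum_{y}|P(y) - Q(y)|\Bigr)^{2}$, and taking square roots gives the claim.

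The only genuinely delicate point — the closest thing to an obstacle here — is the treatment of degenerate probabilities. If $Q(A) = 0$ while $P(A) > 0$ then $D(P\|Q) = +\infty$ and the inequality is vacuous; if $P(A) = 0$ then the left-hand side vanishes; and the same caveats apply at the individual-symbol level when invoking the log-sum inequality. Away from these boundary configurations every logarithm is finite and the derivative computation in the second step is valid as written, so these checks are bookkeeping rather than a true difficulty.
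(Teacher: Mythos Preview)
Your proof is correct and follows the standard two-step route (reduction to the binary case via the data-processing/log-sum inequality, then the scalar estimate by calculus). The paper, however, does not prove this lemma at all: Pinsker's inequality is simply stated in the appendix as a known result and then invoked in the total-variation bounds, so there is no ``paper's own proof'' to compare against. Your argument is the classical one and is entirely adequate here.
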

\begin{lemma}[Ar\i kan~\cite{arikan10}]\label{lemma:ArikanLemma} Consider two discrete random variables $X \in \{0,1\}$ and $Y \in \mathcal{Y}$. The Bhattacharyya parameter and conditional entropy are related as follows.
\begin{align}
Z(X|Y)^{2} & \leq H(X|Y) \notag \\
H(X|Y) & \leq \log_2 (1 + Z(X|Y)) \notag
\end{align}
\end{lemma}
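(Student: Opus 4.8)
The plan is to reduce both bounds to elementary inequalities for the binary entropy function on $[0,1]$ and then pass to the conditional quantities via Jensen's inequality. Set $\pi_y \triangleq P_{X|Y}(0|y)$ and $z(p) \triangleq 2\sqrt{p(1-p)}$; by symmetry of $z$ and $h_b$ about $\tfrac12$,
\begin{align}
Z(X|Y) &= \sum_{y} P_Y(y)\, z(\pi_y) = \mathbb{E}_Y\!\left[ z(\pi_Y) \right], \notag \\
H(X|Y) &= \sum_{y} P_Y(y)\, h_b(\pi_y) = \mathbb{E}_Y\!\left[ h_b(\pi_Y) \right]. \notag
\end{align}
Both target quantities are thus averages over $Y$ of functions of the single scalar $\pi_Y\in[0,1]$, so the entire problem becomes comparing $z$ and $h_b$ pointwise.

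For the lower bound $Z(X|Y)^2 \le H(X|Y)$, apply Jensen's inequality to the convex map $t \mapsto t^2$: $Z(X|Y)^2 = \big(\mathbb{E}_Y z(\pi_Y)\big)^2 \le \mathbb{E}_Y\!\left[ z(\pi_Y)^2 \right]$. It then suffices to prove the pointwise inequality $z(p)^2 = 4p(1-p) \le h_b(p)$ for all $p\in[0,1]$, after which $\mathbb{E}_Y[z(\pi_Y)^2] \le \mathbb{E}_Y[h_b(\pi_Y)] = H(X|Y)$. The scalar inequality is handled by a short analysis of the gap $g(p) \triangleq h_b(p) - 4p(1-p)$: one has $g(0)=g(\tfrac12)=g(1)=0$, and an examination of $g'$ (together with $g''$, which has a single sign change on $(0,\tfrac12)$) shows that $g'$ changes sign exactly once on $(0,\tfrac12)$, passing from $+$ to $-$; combined with the vanishing at $0$ and $\tfrac12$ this forces $g\ge 0$ on $[0,\tfrac12]$, and $g\ge 0$ on $[\tfrac12,1]$ follows by symmetry.

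For the upper bound $H(X|Y) \le \log_2\!\big(1 + Z(X|Y)\big)$, use that $t \mapsto \log_2(1+t)$ is concave, so Jensen gives $\mathbb{E}_Y\!\left[\log_2(1+z(\pi_Y))\right] \le \log_2\!\big(1 + \mathbb{E}_Y z(\pi_Y)\big) = \log_2(1 + Z(X|Y))$; it then suffices to show the pointwise bound $h_b(p) \le \log_2\!\big(1 + 2\sqrt{p(1-p)}\big)$ for $p\in[0,1]$. Exponentiating base $2$, this is equivalent to $p^{-p}(1-p)^{-(1-p)} \le 1 + 2\sqrt{p(1-p)} = \big(\sqrt{p}+\sqrt{1-p}\big)^2$; substituting $a=\sqrt{p}$, $b=\sqrt{1-p}$ with $a^2+b^2=1$ and taking logarithms reduces it to $a^2\ln(1/a) + b^2\ln(1/b) \le \ln(a+b)$, whose gap vanishes at the symmetric points $a\in\{0,\,1/\sqrt2,\,1\}$ and is nonnegative between them by an elementary derivative argument.

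I expect the main obstacle to be this last scalar inequality: because the two sides of $h_b(p)\le\log_2(1+2\sqrt{p(1-p)})$ agree in value and in first derivative at the interior point $p=\tfrac12$ as well as at the endpoints, the bound cannot be extracted from boundary data alone and genuinely requires a convexity or monotonicity argument on the gap function. The first scalar inequality is of the same flavor but slightly easier, and everything else — the two applications of Jensen and the averaging identities for $Z$ and $H$ — is routine bookkeeping that uses nothing about the polarization or memoryless structure.
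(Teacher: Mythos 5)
Your proof is correct. The paper states this lemma without proof, attributing it to Ar\i kan's source-polarization paper, which establishes it by exactly this reduction: the two scalar inequalities $4p(1-p)\le h_b(p)$ and $h_b(p)\le\log_2\bigl(1+2\sqrt{p(1-p)}\bigr)$ for $p\in[0,1]$, combined with the convexity of $t\mapsto t^2$ and concavity of $t\mapsto\log_2(1+t)$ to pass from pointwise to conditional quantities; your Jensen bookkeeping and your sign analyses of the gap functions $h_b(p)-4p(1-p)$ and $\log_2\bigl(1+2\sqrt{p(1-p)}\bigr)-h_b(p)$ (each vanishing at $p\in\{0,\tfrac12,1\}$ with a single interior sign change of the derivative forced by the unimodality coming from $g''$) are the standard way to fill in the details Ar\i kan leaves implicit.
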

\begin{lemma}[Bhattacharyya vs. Entropy Parameters]\label{lemma:ClosenessOfHandZOne} Consider two discrete random variables $X \in \{0,1\}$ and $Y \in \mathcal{Y}$. For any $0 < \delta < \frac{1}{2}$,
\begin{align}
Z(X|Y) \geq 1 - \delta & \Rightarrow H(X|Y) \geq 1 - 2\delta. \notag \\
Z(X|Y) \leq \delta & \Rightarrow H(X|Y) \leq \log_2(1 + \delta). \notag
\end{align}
\end{lemma}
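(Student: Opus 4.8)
The plan is to obtain both implications as immediate corollaries of Ar\i kan's two-sided bound between the Bhattacharyya parameter and the conditional entropy stated in Lemma~\ref{lemma:ArikanLemma}, namely $Z(X|Y)^2 \le H(X|Y) \le \log_2\bigl(1 + Z(X|Y)\bigr)$ for a binary random variable $X$. No new estimate on $Z$ or $H$ needs to be derived: the entire content of the statement is a monotonicity argument applied to these two inequalities, together with the observation that $Z(\cdot\,|\,\cdot)$ always lies in $[0,1]$.

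For the second implication, suppose $Z(X|Y) \le \delta$. The map $t \mapsto \log_2(1+t)$ is increasing on $[0,1]$, so the upper bound in Lemma~\ref{lemma:ArikanLemma} yields $H(X|Y) \le \log_2\bigl(1 + Z(X|Y)\bigr) \le \log_2(1+\delta)$, which is exactly the claimed inequality.

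For the first implication, suppose $Z(X|Y) \ge 1 - \delta$. Since $0 < \delta < \tfrac12$ we have $1 - \delta \in (\tfrac12, 1) \subseteq [0,1]$, where $t \mapsto t^2$ is increasing, so the lower bound in Lemma~\ref{lemma:ArikanLemma} gives $H(X|Y) \ge Z(X|Y)^2 \ge (1-\delta)^2 = 1 - 2\delta + \delta^2$. Discarding the nonnegative term $\delta^2$ produces $H(X|Y) \ge 1 - 2\delta$, as required.

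There is essentially no obstacle here; the only bookkeeping point is to keep the arguments of $t^2$ and of $\log_2(1+t)$ inside the interval $[0,1]$ on which monotonicity is being invoked, which holds because $Z(\cdot\,|\,\cdot)\in[0,1]$ and because $\delta<\tfrac12$ keeps $1-\delta$ positive. If a self-contained argument were desired one could instead reprove Lemma~\ref{lemma:ArikanLemma} directly --- the bound $Z^2 \le H$ from concavity of the binary entropy function via Jensen's inequality, and the bound $H \le \log_2(1+Z)$ from $h_b(p) \le \log_2\bigl(1 + 2\sqrt{p(1-p)}\bigr)$ averaged over $y$ using concavity of $\log_2(1+\cdot)$ --- but invoking the already-stated lemma is the shortest route.
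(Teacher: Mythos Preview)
Your proof is correct and follows essentially the same route as the paper: both implications are read off directly from the two-sided bound $Z(X|Y)^2 \le H(X|Y) \le \log_2(1+Z(X|Y))$ of Lemma~\ref{lemma:ArikanLemma}, using $(1-\delta)^2 \ge 1-2\delta$ for the first and monotonicity of $\log_2(1+\cdot)$ for the second. The paper additionally remarks that $\log_2(1+\delta)\le \delta/\ln 2$ via the Taylor series, but this is not needed for the statement as written.
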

\begin{proof} Due to Lemma~\ref{lemma:ArikanLemma}, $H(X|Y) \geq Z(X|Y)^{2} \geq (1 - \delta)^{2} \geq 1 - 2\delta + \delta^{2} \geq 1 - 2\delta$. It follows that if $Z(X|Y) \geq 1 - \delta$ and $\delta \rightarrow 0$, then $H(X|Y) \rightarrow 1$ as well. Similarly, due to Lemma~\ref{lemma:ArikanLemma}, taking constant $\kappa = \frac{1}{\log_e 2}$ and using the series expansion of $\log_e(1 + \delta)$, if $Z(X|Y) \leq \delta$ then $H(X|Y) \leq \log_2(1 + \delta) = \kappa\left(\sum_{k=1}^{\infty} (-1)^{k+1}\frac{\delta^{k}}{k}\right) \leq \kappa \delta$. It follows that if $Z(X|Y) \leq \delta$ and $\delta \rightarrow 0$, then $H(X|Y) \rightarrow 0$ as well.
\end{proof}


\section{Proof Of Lemma~\ref{lemma:TVBound}}\label{sec:AppendixTVBound}

The total variation bound of Lemma~\ref{lemma:TVBound} is decomposed in a simple way due to the chain rule for Kullback-Leibler distance between discrete probability measures. The joint probability measures $P$ and $Q$ were defined in~\eqref{eqn:UJOINT} and~\eqref{eqn:QUJOINT} respectively. According to definition, if $P\bigl( \{u_{i}^{1:n}\}_{i \in [m]}\bigl) > 0$ then $Q\bigl( \{u_{i}^{1:n}\}_{i \in [m]}\bigl) > 0$. Therefore the Kullback-Leibler distance $D(P \| Q)$ is well-defined and upper bounded as follows.
\begin{align}
& D\Bigl( P\bigl( \{u_{i}^{1:n}\}_{i \in [m]}\bigl) \Bigl\| Q\bigl( \{u_{i}^{1:n}\}_{i \in [m]}\bigl) \Bigl) & \notag \\
& =\sum_{i=1}^{m}\sum_{j=1}^{n} \Biggl[ D\Bigl( P\left(u_i(j) \Bigl| u_i^{1:j-1}, \{u_{k}^{1:n}\}_{k \in [1:i-1]} \right) \Bigl\| \notag \\
& ~~~~~~~~~~~~~~~~Q\left(u_i(j) \Bigl| u_i^{1:j-1}, \{u_{k}^{1:n}\}_{k \in [1:i-1]} \right) \Bigl)\Biggl] \label{eqn:FirstStep} \\
& =\sum_{i=1}^{m}\sum_{j \in \mathcal{M}_i^{(n)}} \Biggl[ D\Bigl( P\left(u_i(j) \Bigl| u_i^{1:j-1}, \{u_{k}^{1:n}\}_{k \in [1:i-1]} \right) \Bigl\|  \notag \\
& ~~~~~~~~~~~~~~~~~~~~Q\left(u_i(j) \Bigl| u_i^{1:j-1}, \{u_{k}^{1:n}\}_{k \in [1:i-1]} \right) \Bigl) \Biggl] \label{eqn:SecondStep} \\
& =\sum_{i=1}^{m}\sum_{j \in \mathcal{M}_i^{(n)}} 1 - H\left(U_i(j) \Bigl| U_i^{1:j-1}, \{U_{k}^{1:n}\}_{k \in [1:i-1]} \right) \label{eqn:ThirdStep} \\
& =\sum_{i=1}^{m}\sum_{j \in \mathcal{M}_i^{(n)}} 1 - H\left(U_i(j) \Bigl| U_i^{1:j-1}, \{Y_{k}^{1:n}\}_{k \in [1:i-1]} \right) \label{eqn:ThirdStepAgain} \\
& \leq \sum_{i=1}^{m} 2 \delta_n \left| \mathcal{M}_i^{(n)} \right|. \label{eqn:FourthStep}
\end{align}
The equality in~\eqref{eqn:FirstStep} is due to the chain rule for Kullback-Leibler distance. The equality in~\eqref{eqn:SecondStep} is valid because for indices $j \notin \mathcal{M}_{i}^{(n)}$, $P\left(u_i(j) \bigl| u_i^{1:j-1}, \{u_{k}^{1:n}\}_{k \in [1:i-1]} \right) = Q\left(u_i(j) \bigl| u_i^{1:j-1}, \{u_{k}^{1:n}\}_{k \in [1:i-1]} \right)$. The equality in~\eqref{eqn:ThirdStep} is valid due to Lemma~\ref{lemma:ConditionalEntropyAndKLDistance} and the fact that $Q\left(u_i(j) \bigl| u_i^{1:j-1}, \{u_{k}^{1:n}\}_{k \in [1:i-1]} \right) = \frac{1}{2}$ for indices $j \in \mathcal{M}_i^{(n)}$. The equality in~\eqref{eqn:ThirdStepAgain} follows due to the one-to-one correspondence between variables $\{U_{k}^{1:n}\}_{k \in [1:i-1]}$ and $\{Y_{k}^{1:n}\}_{k \in [1:i-1]}$. The last inequality~\eqref{eqn:FourthStep} follows from Lemma~\ref{lemma:ClosenessOfHandZOne} due to the fact that $Z\left(U_i(j) \bigl| U_i^{1:j-1}, \{Y_k^{1:n}\}_{k \in [1:i-1]}\right) \geq 1 - \delta_n$ for indices $j \in \mathcal{M}_i^{(n)}$.

To finish the proof of Lemma~\ref{lemma:TVBound},
\begin{align}
& \sum_{\{u_{k}^{1:n}\}_{k \in [m]}} \Bigl| P\bigl( \{u_{k}^{1:n}\}_{k \in [m]}\bigl) - Q\bigl(\{u_{k}^{1:n}\}_{k \in [m]}\bigl) \Bigl| \notag \\
& ~~~~~~~~~~\leq \sqrt{\kappa D\left( P\left( \{u_{k}^{1:n}\}_{k \in [m]}\right) \Bigl\| Q\left( \{u_{k}^{1:n}\}_{k \in [m]} \right) \right)} ~~~~~~~ \label{eqn:PinskerStep} \\
& ~~~~~~~~~~\leq \sqrt{\kappa \sum_{i=1}^{m} 2 \delta_n \left| \mathcal{M}_i^{(n)} \right|} ~~~~~~~ \label{eqn:KLBoundStep} \\
& ~~~~~~~~~~\leq \sqrt{(2\kappa)(m\cdot n)(2^{-n^{\beta^{\prime}}})}.~~~~~~~ \notag
\end{align}
The inequality in~\eqref{eqn:PinskerStep} is due to Pinsker's inequality given in Lemma~\ref{lemma:Pinsker}. The inequality in~\eqref{eqn:KLBoundStep} was proven in~\eqref{eqn:FourthStep}. Finally for $\beta^{\prime} \in (\beta, \frac{1}{2})$, $\sqrt{(2\kappa)(m\cdot n)(2^{-n^{\beta^{\prime}}})} < 2^{-n^{\beta}}$ for sufficiently large $n$. Hence the total variation distance is bounded by $\mathcal{O}(2^{-n^{\beta}})$ for any $0 < \beta < \frac{1}{2}$.

\section{Superposition Coding}\label{sec:AppendixSuperposition}

The total variation bound of Lemma~\ref{lemma:TVBoundSuperposition} is decomposed in a simple way due to the chain rule for Kullback-Leibler distance between discrete probability measures. The joint probability measures $P$ and $Q$ were defined in~\eqref{eqn:U1U2DecomposedDistributionSuperposition} and~\eqref{eqn:QUJOINTSuperposition} respectively. According to definition, if $P_{U_1^{n}U_2^{n}}\bigl(u_1^{n}, u_2^{n}\bigl) > 0$ then $Q\bigl(u_1^{n}, u_2^{n}\bigl) > 0$. Therefore the Kullback-Leibler distance $D(P \| Q)$ is well-defined. Applying the chain rule,
\begin{align}
& D\Bigl( P_{U_1^{n}U_2^{n}}\bigl(u_1^{n}, u_2^{n}\bigl) \Bigl\| Q\bigl(u_1^{n}, u_2^{n}\bigl) \Bigl) & \notag \\
& =\sum_{j=1}^{n} D\Bigl( P\left(u_1(j) \Bigl| u_1^{1:j-1}\right) \Bigl\| Q\left(u_1(j) \Bigl| u_1^{1:j-1}\right) \Bigl) \notag \\
& + \sum_{j=1}^{n} D\Bigl( P\left(u_2(j) \Bigl| u_2^{1:j-1}, u_1^{n}\right) \Bigl\| Q\left(u_2(j) \Bigl| u_2^{1:j-1}, u_1^{n}\right) \Bigl) \notag \\
& =\sum_{j \in \mathcal{M}_1^{(n)}} D\Bigl( P\left(u_1(j) \Bigl| u_1^{1:j-1}\right) \Bigl\| Q\left(u_1(j) \Bigl| u_1^{1:j-1}\right) \Bigl) \notag \\
& + \sum_{j \in \mathcal{M}_2^{(n)}} D\Bigl( P\left(u_2(j) \Bigl| u_2^{1:j-1}, u_1^{n}\right) \Bigl\| Q\left(u_2(j) \Bigl| u_2^{1:j-1}, u_1^{n}\right)\Bigl). \notag
\end{align}
Applying Lemma~\ref{lemma:ConditionalEntropyAndKLDistance}, the one-to-one relation between $U_1^{n}$ and $V^{n}$, and Lemma~\ref{lemma:ClosenessOfHandZOne} leads to the following result.
\begin{align}
& D\Bigl( P_{U_1^{n}U_2^{n}}\bigl(u_1^{n}, u_2^{n}\bigl) \Bigl\| Q\bigl(u_1^{n}, u_2^{n}\bigl) \Bigl) & \notag \\
& = \sum_{j \in \mathcal{M}_1^{(n)}} \Bigl[ 1 - H\left(U_1(j) \Bigl| U_1^{1:j-1} \right) \Biggl] + \notag \\
& ~~~~~~~~ \sum_{j \in \mathcal{M}_2^{(n)}} \Biggl[ 1 - H\left(U_2(j) \Bigl| U_2^{1:j-1} U_1^{n} \right) \Biggl] \notag \\
& = \sum_{j \in \mathcal{M}_1^{(n)}} \Bigl[ 1 - H\left(U_1(j) \Bigl| U_1^{1:j-1} \right) \Biggl] + \notag \\
& ~~~~~~~~ \sum_{j \in \mathcal{M}_2^{(n)}} \Biggl[ 1 - H\left(U_2(j) \Bigl| U_2^{1:j-1} V^{n} \right) \Biggl] \notag \\
& \leq 2 \delta_n \Biggl[ \left| \mathcal{M}_1^{(n)} \right| + \left| \mathcal{M}_2^{(n)} \right| \Biggl]. \notag
\end{align}
Using identical arguments as applied in the proof of Lemma~\ref{lemma:TVBound}, the total variation distance between $P$ and $Q$ is bounded as $\mathcal{O}(2^{-n^{\beta}})$.

To prove Lemma~\ref{theorem:ErrorProbSuperposition}, the expectation of the average probability of error of the polarization-based superposition code is written as
\begin{align}
& \mathbb{E}_{\{\Psi_1^{(j)}, \Psi_2^{(j)}\}} \Bigl[ P_e^{(n)}[\{\Psi_1^{(j)}, \Psi_2^{(j)}\}] \Bigl] = \notag \\
& ~~ \sum_{\{u_1^{n}, u_2^{n}, y_1^{n}, y_2^{n}\} \in \mathcal{T}} \Biggl[ P_{Y_1^{n}Y_2^{n}\bigl|U_1^{n}U_2^{n}}\bigl(y_1^{n}, y_2^{n}\bigl|u_1^{n}, u_2^{n}\bigl) \notag \\
& ~~ \cdot \frac{1}{2^{nR_2}} \prod_{j \in [n]: j \notin \mathcal{M}_2^{(n)}} \mathbb{P}\left\{\Psi_2^{(j)}\left(u_2^{1:j-1}\right) = u_2(j)\right\} \notag \\
& ~~ \cdot \frac{1}{2^{nR_1}} \prod_{j \in [n]: j \notin \mathcal{M}_1^{(n)}} \mathbb{P}\left\{\Psi_1^{(j)}\left(u_1^{1:j-1}, u_2^{n}\matbold{G}_n \right) = u_1(j)\right\}\Biggl]. \notag
\end{align}
From the definitions of the random boolean functions $\Psi_1^{(j)}$ in~\eqref{eqn:RANDMapSpecificSuperposition1} and $\Psi_2^{(j)}$ in~\eqref{eqn:RANDMapSpecificSuperposition2}, it follows that
\begin{align}
& \mathbb{P}\left\{\Psi_1^{(j)}\left(u_1^{1:j-1}, u_2^{n}\matbold{G}_n \right) = u_1(j)\right\} \notag \\
& ~~ = \mathbb{P}\left\{U_1(j) = u_1(j) \bigl| U_1^{1:j-1} = u_1^{1:j-1}, V^{n} = u_2^{n}\matbold{G}_n \right\} \notag \\
& ~~ = \mathbb{P}\left\{U_1(j) = u_1(j) \bigl| U_1^{1:j-1} = u_1^{1:j-1}, U_2^{n} = u_2^{n} \right\}, \notag \\
& \mathbb{P}\left\{\Psi_2^{(j)}\left(u_2^{1:j-1}\right) = u_2(j)\right\} \notag \\
& ~~ = \mathbb{P}\left\{U_2(j) = u_2(j) \bigl| U_2^{1:j-1} = u_2^{1:j-1}\right\}. \notag
\end{align}
The expression for the expected average probability of error is then simplified by substituting the definition for $Q(u_1^{n}, u_2^{n})$ provided in~\eqref{eqn:QUJOINTSuperposition} as follows,
\begin{align}
& \mathbb{E}_{\{\Psi_1^{(j)}, \Psi_2^{(j)}\}} \Bigl[ P_e^{(n)}[\{\Psi_1^{(j)}, \Psi_2^{(j)}\}] \Bigl] = \notag \\
& \sum_{\{u_1^{n}, u_2^{n}, y_1^{n}, y_2^{n}\} \in \mathcal{T}} \Biggl[ P_{Y_1^{n}Y_2^{n}\bigl|U_1^{n}U_2^{n}}\bigl(y_1^{n}, y_2^{n}\bigl|u_1^{n}, u_2^{n}\bigl) Q(u_1^{n}, u_2^{n}) \Biggl]. \notag
\end{align}
The next step in the proof is to split the error term $\mathbb{E}_{\{\Psi_1^{(j)}, \Psi_2^{(j)}\}} \Bigl[ P_e^{(n)}[\{\Psi_1^{(j)}, \Psi_2^{(j)}\}] \Bigl]$ into \emph{two} main parts, one part due to the error caused by polar decoding functions, and the other part due to the total variation distance between probability measures.
\begin{align}
& \mathbb{E}_{\{\Psi_1^{(j)}, \Psi_2^{(j)}\}} \Bigl[ P_e^{(n)}[\{\Psi_1^{(j)}, \Psi_2^{(j)}\}] \Bigl] \notag \\
& = \sum_{\{u_1^{n}, u_2^{n}, y_1^{n}, y_2^{n}\} \in \mathcal{T}} \Biggl[ P_{Y_1^{n}Y_2^{n}\bigl|U_1^{n}U_2^{n}}\bigl(y_1^{n}, y_2^{n}\bigl|u_1^{n}, u_2^{n}\bigl) \notag \\
& ~~ \cdot \biggl(Q\bigl(u_1^{n}, u_2^{n}\bigl) - P_{U_1^{n}U_2^{n}}\bigl(u_1^{n},u_2^{n}\bigl) + P_{U_1^{n}U_2^{n}}\bigl(u_1^{n},u_2^{n}\bigl) \biggl) \Biggl] \notag \\
& \leq \Biggl[\sum_{\{u_1^{n}, u_2^{n}, y_1^{n}, y_2^{n}\} \in \mathcal{T}} P_{U_1^{n}U_2^{n}Y_1^{n}Y_2^{n}}\bigl(u_1^{n}, u_2^{n}, y_1^{n}, y_2^{n}\bigl)\Biggl] \notag \\
& ~~~~~~~~ + \Biggl[ \sum_{\begin{subarray}{c} u_1^{n} \in \{0,1\}^{n} \\ u_2^{n} \in \{0,1\}^{n} \end{subarray}} \Bigl| P_{U_1^{n}U_2^{n}}\bigl( u_1^{n}, u_2^{n} \bigl) - Q\bigl( u_1^{n}, u_2^{n} \bigl) \Bigl| \Biggl]. \label{eqn:TwoErrorPartsSuperposition}
\end{align}
Lemma~\ref{lemma:TVBoundSuperposition} established that the error term due to the total variation distance is upper bounded as $\mathcal{O}(2^{-n^{\beta}})$. Therefore, it remains to upper bound the error term due to the polar decoding functions. Towards this end, note first that $\mathcal{T} = \mathcal{T}_{1v} \cup \mathcal{T}_{1} \cup \mathcal{T}_{2}$, $\mathcal{T}_{1v} = \cup_{j} \mathcal{T}_{1v}^{j}$ for $j \in \mathcal{M}_{2}^{(n)} \subseteq \mathcal{M}_{1v}^{(n)}$, $\mathcal{T}_{1} = \cup_{j} \mathcal{T}_{1}^{j}$ for $j \in \mathcal{M}_{1}^{(n)}$, and $\mathcal{T}_{2} = \cup_{j} \mathcal{T}_{2}^{j}$ for $j \in \mathcal{M}_{2}^{(n)}$. It is convenient to bound each type of error bit by bit successively at both decoder $\mathcal{D}_1$ and $\mathcal{D}_2$ as follows.
\begin{align}
\mathcal{E}_{1v}^{j} & \triangleq \sum_{\{u_1^{n}, u_2^{n}, y_1^{n}, y_2^{n}\} \in \mathcal{T}_{1v}^{j}} P_{U_1^{n}U_2^{n}Y_1^{n}Y_2^{n}}\bigl(u_1^{n}, u_2^{n}, y_1^{n}, y_2^{n}\bigl) \notag \\
& = \sum_{\begin{subarray}{c} (u_2^{1:j}, y_1^{n}) \in \{0,1\}^{j} \times \mathcal{Y}_1^{n} \end{subarray}} P_{U_2^{1:j}Y_1^{n}}\bigl(u_2^{1:j}, y_1^{n}\bigl) \notag \\
& ~~~~~~~~ \cdot \mathbbm{1}\Biggl[P_{U_2^{j}\bigl|U_2^{1:j-1}Y_1^{n}}\bigl(u_2(j)\bigl|u_2^{1:j-1},y_1^{n}\bigl) \leq \notag \\
& ~~~~~~~~~~~~~~ P_{U_2^{j}\bigl|U_2^{1:j-1}Y_1^{n}}\bigl(u_2(j) \oplus 1 \bigl| u_2^{1:j-1}, y_1^{n}\bigl)\Biggl]. \notag
\end{align}
In this form, it is possible to upper bound the error term $\mathcal{E}_{1v}^{j}$ with the corresponding Bhattacharyya parameter as follows,
\begin{align}
\mathcal{E}_{1v}^{j} & = \sum_{\begin{subarray}{c} u_2^{1:j} \in \{0,1\}^{j} \\ y_1^{n} \in \mathcal{Y}_1^{n} \end{subarray}} P\bigl(u_2^{1:j-1}, y_1^{n}\bigl) P\bigl(u_2^{j}\bigl| u_2^{1:j-1}, y_1^{n}\bigl)\notag \\
& ~~~~ \cdot \mathbbm{1}\Biggl[P_{U_2^{j}\bigl|U_2^{1:j-1}Y_1^{n}}\bigl(u_2(j)\bigl|u_2^{1:j-1},y_1^{n}\bigl) \leq \notag \\
& ~~~~~~~~~~ P_{U_2^{j}\bigl|U_2^{1:j-1}Y_1^{n}}\bigl(u_2(j) \oplus 1 \bigl| u_2^{1:j-1}, y_1^{n}\bigl)\Biggl], \notag \\
& \leq \sum_{\begin{subarray}{c} u_2^{1:j} \in \{0,1\}^{j} \\ y_1^{n} \in \mathcal{Y}_1^{n} \end{subarray}} P\bigl(u_2^{1:j-1}, y_1^{n}\bigl) P\bigl(u_2^{j}\bigl| u_2^{1:j-1}, y_1^{n}\bigl) \notag \\
& ~~~~ \cdot \sqrt{ \frac{ P_{U_2^{j}\bigl|U_2^{1:j-1}Y_1^{n}}\bigl(u_2(j) \oplus 1 \bigl| u_2^{1:j-1}, y_1^{n}\bigl) } { P_{U_2^{j}\bigl|U_2^{1:j-1}Y_1^{n}}\bigl(u_2(j) \bigl| u_2^{1:j-1}, y_1^{n}\bigl) } } \notag \\
& = Z\bigl( U_2^{j} \bigl| U_2^{1:j-1}, Y_1^{n}\bigl). \notag
\end{align}
Using identical arguments, the following upper bounds apply for the individual bit-by-bit error terms caused by successive decoding at both $\mathcal{D}_1$ and $\mathcal{D}_2$.
\begin{align}
\mathcal{E}_{1v}^{j} & \leq Z\bigl( U_2^{j} \bigl| U_2^{1:j-1}, Y_1^{n}\bigl), \label{eqn:Z1vSuperposition} \\
\mathcal{E}_{1}^{j} & \leq Z\bigl( U_1^{j} \bigl| U_1^{1:j-1}, V^{n}, Y_1^{n}\bigl), \label{eqn:Z1Superposition} \\
\mathcal{E}_{2}^{j} & \leq Z\bigl( U_2^{j} \bigl| Y_2^{n}\bigl). \label{eqn:Z2Superposition}
\end{align}
Therefore, the total error due to decoding at the receivers is upper bounded as
\begin{align}
\mathcal{E} & \triangleq \sum_{\{u_1^{n}, u_2^{n}, y_1^{n}, y_2^{n}\} \in \mathcal{T}} P_{U_1^{n}U_2^{n}Y_1^{n}Y_2^{n}}\bigl(u_1^{n}, u_2^{n}, y_1^{n}, y_2^{n}\bigl) \notag \\
& \leq \sum_{j \in \mathcal{M}_2^{(n)} \subseteq \mathcal{M}_{1v}^{(n)}} Z\bigl( U_2^{j} \bigl| U_2^{1:j-1}, Y_1^{n}\bigl) \notag \\
& ~~~~~~~~~~ + \sum_{j \in \mathcal{M}_1^{(n)}} Z\bigl( U_1^{j} \bigl| U_1^{1:j-1}, V^{n}, Y_1^{n}\bigl) \notag \\
& ~~~~~~~~~~ + \sum_{j \in \mathcal{M}_2^{(n)}} Z\bigl( U_2^{j} \bigl| Y_2^{n}\bigl) \notag \\
& \leq \delta_n \Biggl[ \left| \mathcal{M}_{1v}^{(n)} \right| + \left| \mathcal{M}_1^{(n)} \right| + \left| \mathcal{M}_2^{(n)} \right| \Biggl] \notag \\
& \leq 3n \delta_n. \notag
\end{align}
This concludes the proof demonstrating that the expected average probability of error is upper bounded as $\mathcal{O}(2^{-n^{\beta}})$.

\section{Marton Coding}\label{sec:AppendixMarton}

To prove Lemma~\ref{theorem:ErrorProbMarton}, the expectation of the average probability of error of the polarization-based Marton code is written as
\begin{align}
& \mathbb{E}_{\{\Psi_1^{(j)}, \Psi_2^{(j)}, \Gamma\}} \Bigl[ P_e^{(n)}[\{\Psi_1^{(j)}, \Psi_2^{(j)}, \Gamma\}] \Bigl] = \notag \\
& ~~ \sum_{\{u_1^{n}, u_2^{n}, y_1^{n}, y_2^{n}\} \in \mathcal{T}} \Biggl[ P^{\phi}_{Y_1^{n}Y_2^{n}\bigl|U_1^{n}U_2^{n}}\bigl(y_1^{n}, y_2^{n}\bigl|u_1^{n}, u_2^{n}\bigl) \notag \\
& ~~ \cdot \frac{1}{2^{nR_1}} \prod_{j \in [n]: j \notin \mathcal{M}_1^{(n)}} \mathbb{P}\Bigl\{\Psi_1^{(j)}\left(u_1^{1:j-1}\right) = u_1(j)\Bigl\} \notag \\
& ~~ \cdot \frac{1}{2^{nR_2}} \prod_{j \in \mathcal{H}^{(n)}_{V_2|V_1} \backslash \mathcal{M}_2^{(n)}} \mathbb{P}\Bigl\{\Gamma(j) = u_2(j)\Bigl\} \notag \\
& ~~ \cdot \prod_{j \in [n]: j \notin \mathcal{H}^{(n)}_{V_2|V_1}} \mathbb{P}\Bigl\{\Psi_2^{(j)}\left(u_2^{1:j-1}, u_1^{n}\matbold{G}_n \right) = u_2(j)\Bigl\}\Biggl]. \notag
\end{align}
The expression is then simplified by substituting the definition of $Q(u_1^{n}, u_2^{n})$ provided in~\eqref{eqn:QUJOINTMarton}, and then splitting the error term into two parts:
\begin{align}
& \mathbb{E}_{\{\Psi_1^{(j)}, \Psi_2^{(j)}, \Gamma\}} \Bigl[ P_e^{(n)}[\{\Psi_1^{(j)}, \Psi_2^{(j)}, \Gamma\}] \Bigl] = \notag \\
& \sum_{\{u_1^{n}, u_2^{n}, y_1^{n}, y_2^{n}\} \in \mathcal{T}} \Biggl[ P^{\phi}_{Y_1^{n}Y_2^{n}\bigl|U_1^{n}U_2^{n}}\bigl(y_1^{n}, y_2^{n}\bigl|u_1^{n}, u_2^{n}\bigl) Q(u_1^{n}, u_2^{n}) \Biggl], \notag \\
& \leq \Biggl[\sum_{\{u_1^{n}, u_2^{n}, y_1^{n}, y_2^{n}\} \in \mathcal{T}} P_{U_1^{n}U_2^{n}Y_1^{n}Y_2^{n}}\bigl(u_1^{n}, u_2^{n}, y_1^{n}, y_2^{n}\bigl)\Biggl] \notag \\
& ~~~~~~~~ + \Biggl[ \sum_{\begin{subarray}{c} u_1^{n} \in \{0,1\}^{n} \\ u_2^{n} \in \{0,1\}^{n} \end{subarray}} \Bigl| P_{U_1^{n}U_2^{n}}\bigl( u_1^{n}, u_2^{n} \bigl) - Q\bigl( u_1^{n}, u_2^{n} \bigl) \Bigl| \Biggl]. \notag
\end{align}
The error term pertaining to the total variation distance was already upper bounded as in Lemma~\ref{lemma:TVBoundMarton}. The error due to successive cancelation decoding at the receivers is upper bounded as follows.
\begin{align}
\mathcal{E} & \triangleq \sum_{\{u_1^{n}, u_2^{n}, y_1^{n}, y_2^{n}\} \in \mathcal{T}} P_{U_1^{n}U_2^{n}Y_1^{n}Y_2^{n}}\bigl(u_1^{n}, u_2^{n}, y_1^{n}, y_2^{n}\bigl) \notag \\
& \leq \sum_{j \in \mathcal{M}_1^{(n)}} Z\bigl( U_1^{j} \bigl| U_1^{1:j-1}, Y_1^{n}\bigl) + \sum_{j \in \mathcal{L}_{V_2|Y_2}^{(n)}} Z\bigl( U_2^{j} \bigl| U_2^{1:j-1}, Y_2^{n}\bigl), \notag \\
& \leq \delta_n \Biggl[ \left| \mathcal{M}_1^{(n)} \right| + \left| \mathcal{L}_{V_2|Y_2}^{(n)} \right| \Biggl] \notag \\
& \leq 2n \delta_n. \notag
\end{align}
This concludes the proof demonstrating that the expectation of the average probability of block error is upper bounded as $\mathcal{O}(2^{-n^{\beta}})$.

\section{Proof Of Lemma~\ref{lemma:SpecialClassesDMBCsProperties}}\label{sec:AppendixLemmaSpecialClassesDMBCsProperties}

The implication in~\eqref{eqn:LemmaFirstI} follows since $X-Y_1-Y_2$ means that $P_{Y_2|X}(y_2|x) = \sum_{y_1}P_{Y_1|X}(y_1|x)P_{Y_2|Y_1}(y_2|y_1)$. The implication in~\eqref{eqn:LemmaSecondI} follows by observing that
\begin{align}
& P_{Y_2|V}(y_2|v) \notag \\
& ~~~~= \sum_{y_1 \in \mathcal{Y}_1} P_{Y_1Y_2|V}(y_1, y_2|v) \notag \\
& ~~~~= \sum_{x \in \mathcal{X}} \sum_{y_1 \in \mathcal{Y}_1} P_{X|V}(x|v) P_{Y_1Y_2|X}(y_1, y_2|x) \notag \\
& ~~~~= \sum_{x \in \mathcal{X}} P_{X|V}(x|v) \sum_{y_1 \in \mathcal{Y}_1} P_{Y_1Y_2|X}(y_1, y_2|x) \notag \\
& ~~~~= \sum_{x \in \mathcal{X}} P_{X|V}(x|v) P_{Y_2|X}(y_2|x) \notag \\
& ~~~~= \sum_{x \in \mathcal{X}} P_{X|V}(x|v) \sum_{y_1 \in \mathcal{Y}_1} P_{Y_1|X}(y_1|x) \tilde{P}_{Y_2|Y_1}(y_2|y_1) \label{eqn:LemmaDegradedStep} \\
& ~~~~= \sum_{y_1 \in \mathcal{Y}_1} \sum_{x \in \mathcal{X}} P_{X|V}(x|v)P_{Y_1|X}(y_1|x) \tilde{P}_{Y_2|Y_1}(y_2|y_1) \notag \\
& ~~~~= \sum_{y_1 \in \mathcal{Y}_1} P_{Y_1|V}(y_1|v) \tilde{P}_{Y_2|Y_1}(y_2|y_1). \notag
\end{align}
In step~\eqref{eqn:LemmaDegradedStep}, the assumed stochastic degraded condition $P_{Y_1|X}(y_1|x) \degBC P_{Y_2|X}(y_2|x)$ ensures the existence of the distribution $\tilde{P}_{Y_2|Y_1}(y_2|y_1)$. The converse to~\eqref{eqn:LemmaSecondI} follows since it is possible to select $P_{X|V}(x|v) = \indicator{x = v}$ where the alphabet $\mathcal{V} = \mathcal{X}$. In this case, for any $v \in \mathcal{X}$,
\begin{align}
P_{Y_2|V}(y_2|v) & = \sum_{x \in \mathcal{X}} P_{X|V}(x|v)P_{Y_2|X}(y_2|x) \notag \\
& = \sum_{x \in \mathcal{X}} \indicator{x = v} P_{Y_2|X}(y_2|x) \notag \\
& = P_{Y_2|X}(y_2|v). \notag
\end{align}
Similarly, $P_{Y_1|V}(y_1|v) = P_{Y_1|X}(y_1|v)$ for any $v \in \mathcal{X}$. Due to the assumed stochastic degradedness condition $P_{Y_2|V}(y_2|v) = \sum_{y_1} P_{Y_1|V}(y_1|v) \tilde{P}_{Y_2|Y_1}(y_2|y_1)$, for any $v \in \mathcal{X}$,
\begin{align}
P_{Y_2|X}(y_2|v) & = P_{Y_2|V}(y_2|v) \notag \\
& = \sum_{y_1} P_{Y_1|V}(y_1|v) \tilde{P}_{Y_2|Y_1}(y_2|y_1) \notag \\
& = \sum_{y_1} P_{Y_1|X}(y_1|v) \tilde{P}_{Y_2|Y_1}(y_2|y_1). \notag
\end{align}
Therefore the stochastic degradedness property $P_{Y_1|X}(y_1|x) \degBC P_{Y_2|X}(y_2|x)$ must hold as well. The statement of~\eqref{eqn:LemmaSecondI} means that Class $I$ and Class $II$ are equivalent as shown in Figure~\ref{fig:BroadcastChannelClasses}. The implication in~\eqref{eqn:LemmaThirdI} follows because assuming the stochastic degradedness property $P_{Y_1|V}(y_1|v) \degBC P_{Y_2|V}(y_2|v)$ holds for all $P_{X|V}(x|v)$, there exists a $\tilde{Y}_1$ such that $V-\tilde{Y}_1-Y_2$ form a Markov chain and $P_{\tilde{Y}_1|V}(\tilde{y}_1|v) = P_{Y_1|V}(\tilde{y}_1|v)$ for all $P_{X|V}(x|v)$. By the data processing inequality, $I(V;\tilde{Y}_1) \geq I(V;Y_2)$. If $P_{\tilde{Y}_1|V}(\tilde{y}_1|v) = P_{Y_1|V}(\tilde{y}_1|v)$, then $P_{V\tilde{Y}_1}(v,\tilde{y}_1) = P_{V Y_1}(v,\tilde{y}_1)$ for all $P_{V}(v)$. It follows that for all $P_{VX}(v,x)$, the mutual information $I(V;\tilde{Y}_1) = I(V;Y_1)$. The implication in~\eqref{eqn:LemmaFourthI} follows by setting $P_{VX}(v,x) = \indicator{v = x}P_{X}(x)$ and letting $\mathcal{V} = \mathcal{X}$. Then for any $v \in \mathcal{X}$,
\begin{align}
P_{VY_1}(v, y_1) & = \sum_{x \in \mathcal{X}} P_{VX}(v,x) P_{Y_1|X}(y_1|x) \notag \\
& = \sum_{x \in \mathcal{X}} \indicator{v = x} P_{X}(x) P_{Y_1|X}(y_1|x) \notag \\
& = P_{X}(v)P_{Y_1|X}(y_1|v) \notag \\
& = P_{XY_1}(v, y_1). \notag
\end{align}
Similarly for any $v \in \mathcal{X}$, $P_{VY_2}(v, y_2) = P_{XY_2}(v,y_2)$. Therefore for the particular choice of $P_{VX}(v,x) = \indicator{v = x}P_{X}(x)$, $I(V;Y_1) = I(X;Y_1)$ and $I(V;Y_2) = I(X;Y_2)$. The converse statements for~\eqref{eqn:LemmaFirstI},~\eqref{eqn:LemmaThirdI}, and~\eqref{eqn:LemmaFourthI} do \emph{not} hold due to a counterexample involving a DM-BC comprised of a binary erasure channel \textsc{BEC}($\epsilon$) and a binary symmetric channel \textsc{BSC}($p$) as described in Example~\ref{ex:DMBCWithBECAndBSC}.




%

\appendices




\ifCLASSOPTIONcaptionsoff
  \newpage
\fi



%



\bibliographystyle{ieeetr}
\bibliography{naveenBIB}

%








\end{document}